\definecolor{ForestGreen}{rgb}{0.1333,0.5451,0.1333}
\definecolor{DarkRed}{rgb}{0.65,0,0}
\definecolor{Red}{rgb}{1,0,0}
\declaretheorem[numberwithin=section]{theorem}
\declaretheorem[numberlike=theorem]{lemma}
\declaretheorem[numberlike=theorem]{fact}
\declaretheorem[numberlike=theorem]{proposition}
\declaretheorem[numberlike=theorem]{corollary}
\declaretheorem[numberlike=theorem,style=definition]{definition}
\declaretheorem[numberlike=theorem,name=Definition,style=definition]{defn}
\newcommand{\e}{\epsilon}
\newcommand{\tO}{\tilde{O}}
\newcommand{\polylog}{{\rm poly}\log}
\newcommand{\poly}{{\rm poly}}
\def\polylog{\text{polylog}}
\def\BAL#1\EAL{\begin{align*}#1\end{align*}}
\def\BALN#1\EALN{\begin{align}#1\end{align}}
\def\BG#1\EG{\begin{gather}#1\end{gather}}
\newcommand{\eat}[1]{}
\newcommand{\dc}{DECA\xspace}
\begin{document}

\title{Edge Connectivity Augmentation in Near-Linear Time}
\author{Ruoxu Cen\thanks{Department of Computer Science, Duke University.}\and Jason Li\thanks{Simons Institute for Theory of Computing, University of California, Berkeley.}\and Debmalya Panigrahi\thanks{Department of Computer Science, Duke University.}}
%\date{\today}

\pagenumbering{gobble}

\maketitle

\begin{abstract}
    We give an $\tO(m)$-time algorithm for the edge connectivity augmentation problem and the closely related edge splitting-off problem. This is optimal up to lower order terms and closes the long line of work on these problems. %In particular, it improves on the current best running time of $\tO(m+n^{3/2})$ obtained using $\polylog(n)$ calls to a max-flow algorithm. 
\end{abstract}

\maketitle

\pagenumbering{arabic}

\section{Introduction}\label{sec:intro}

%\debmalya{Add arxiv link for the SODA paper.}

In the {\em edge connectivity augmentation} problem, we are given an undirected graph $G = (V, E)$ with edge weights $w$, and a target connectivity $\tau > 0$. The edge weights $w$ and connectivity target $\tau$ are assumed to be polynomially bounded integers. The goal is to find a minimum weight set $F$ of edges on $V$ such that adding these edges to $G$ makes the graph $\tau$-connected. (In other words, the value of the minimum cut of the graph $G' = (V, E\cup F)$ should be at least $\tau$.) The edge connectivity augmentation problem is known to be tractable in $\poly(m, n)$ time, where $m$ and $n$ denote the number of edges and vertices respectively in $G$. This was first shown by Watanabe and Nakamura~\cite{WatanabeN87} for unweighted graphs, and the first strongly polynomial algorithm was obtained by Frank~\cite{Frank92}. Since then, several algorithms~\cite{CaiS89,NaorGM97,Gabow16,Gabow94,NagamochiI97} progressively improved the running time and till recently, the best known result was an $\tO(n^2)$-time\footnote{$\tO(\cdot)$ ignores (poly)-logarithmic factors in the running time.} algorithm due to Bencz\'ur and Karger~\cite{BenczurK00}. This was improved to the current best runtime of $\tO(m+n^{3/2})$~\cite{CenLP22}
by reducing the edge connectivity augmentation problem to $\polylog(n)$ max-flow calls. The runtime bound follows from the current best max-flow algorithm on undirected graphs~\cite{BrandLLSSSW21}.\footnote{We note that for sparse graphs, there is a slightly faster max-flow algorithm that runs in $O(m^{3/2-\delta})$ time~\cite{GaoLP21}, where $\delta > 0$ is a small constant. If this max-flow algorithm is used in~\cite{CenLP22}, a running time of $O(m^{3/2-\delta})$ is obtained for the augmentation problem.} This represents a natural bottleneck for the problem since further improvement would need techniques that do not rely on max-flows. 

We overcome this bottleneck in this paper, and obtain a nearly-linear $\tO(m)$-time algorithm for the edge connectivity augmentation problem. This is optimal up to poly-logarithmic terms, and brings to an end the long line of work on this problem (barring further improvements in the logarithmic terms). Moreover, it demonstrates that this problem is {\em easier} than max-flow, since obtaining an $\tO(m)$-time max-flow algorithm remains a major open problem. We state our main result below:

%In this paper, we obtain an $\tO(m)$-time algorithm for the edge connectivity augmentation problem.

\begin{theorem}
    \label{thm:augment}
    There is an $\tO(m)$-time randomized Monte Carlo algorithm for the edge connectivity augmentation problem.
\end{theorem}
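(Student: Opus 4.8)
The plan is to follow the classical framework of Frank, which reduces the augmentation problem to (i) determining the right "deficiency" structure via the minimum cuts of $G$ of value below $\tau$, and (ii) solving an edge splitting-off problem to route the augmenting edges optimally. The key quantity is $\sum_{i} \max(0, \tau - d(C_i))$ over a suitable laminar family of near-minimum cuts, and Frank's theorem says the optimal augmentation weight equals the maximum of a covering/packing LP over this structure. So the first step is to compute a compact representation of all cuts of value $< \tau$; since $\tau$ can be as large as the global mincut plus a constant or much larger, I would split into the regime where $\tau \le \lambda(G) + $ (small) — handled directly — and the regime $\tau > \lambda(G)$, where the relevant cuts are exactly the small cuts and can be captured by a cactus-like or Gomory–Hu-style structure computable in $\tO(m)$ time using recent near-linear mincut and $k$-cut machinery.

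Concretely, the steps I would carry out are: (1) compute the minimum cut $\lambda = \lambda(G)$ in $\tO(m)$ time; if $\tau \le \lambda$ there is nothing to do. (2) If $\tau$ is not too far above $\lambda$, enumerate/represent all cuts of value in $[\lambda, \tau)$ via a hierarchical ("cactus" or cut-equivalent tree) representation of near-mincuts, of total size $\tO(n)$, in near-linear time. (3) Translate Frank's min-max formula into this representation: the optimum is $\max\{ \lceil (1/2)\sum_v r(v) \rceil, \max_{\mathcal{P}} \sum_{C \in \mathcal{P}} (\tau - d(C)) \}$ over sub-partitions $\mathcal{P}$ of $V$ into sides of deficient cuts, which becomes a tree/DP computation on the cut representation and is solvable in $\tO(n)$ time. (4) Recover an actual edge set $F$ of this weight by the splitting-off procedure: add a new vertex $s$ joined to $V$ with the deficiency degrees, then repeatedly split off pairs of edges at $s$ while preserving all relevant connectivities; the non-trivial point is to perform all splittings in near-linear total time rather than one-at-a-time, which I would do by batching splittings guided by the cut structure (à la the $\tO(m+n^{3/2})$ approach of Cen–Li–Panigrahi but replacing each max-flow call with a near-linear mincut/expander-decomposition primitive).

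The main obstacle, and where the bulk of the technical work goes, is step (4): classical splitting-off is inherently sequential and a naive implementation needs $\Omega(n)$ connectivity computations, so matching the $\tO(m)$ bound requires a way to compute a large batch of compatible split-off pairs at once, argue it preserves $\tau$-connectivity of all pairs in $V$, and update the cut data structure in amortized near-linear time. I expect this to hinge on a structural lemma saying that the obstructions to splitting off (the "tight" sets through the new vertex $s$) again form a laminar/cactus family that can be maintained incrementally, so that each round either splits off a constant fraction of the remaining demand at $s$ or exposes a new tight set, giving $\polylog(n)$ rounds. Everything else — Frank's min-max theorem, the reduction from augmentation to splitting-off, and the sparsification needed to assume $m = \tO(n)$ — is either classical or follows from results available in near-linear time, so the contribution is really the near-linear splitting-off engine, after which Theorem 1.1 follows by combining it with the $\tO(n)$-size cut representation from steps (1)–(3).
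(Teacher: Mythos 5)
There is a genuine gap: your proposal defers exactly the steps that constitute the actual difficulty, and the structures you propose to compute are not known to be computable in $\tO(m)$ time (and one of them is the wrong structure). In step (2) you want a cactus/Gomory--Hu-style representation of all cuts of value below $\tau$; but $\tau$ can be far above $\lambda(G)$, so the deficient cuts are not near-minimum cuts, and a cactus only captures (near-)minimum cuts. The correct compact object for augmentation/splitting-off is the \emph{extreme sets tree} (a laminar family over all cut values), and no $\tO(m)$ algorithm for it --- or for Gomory--Hu-type structures --- was previously known without max-flow; the whole point of the paper is that the prior best used polylog max-flow calls (giving $\tO(m+n^{3/2})$), and beating that requires abandoning max-flow-based primitives. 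In step (4) you state the batched splitting-off engine as something you ``expect'' to hinge on a structural lemma about tight sets, with no construction or proof; that is precisely the hard content, not a routine implementation detail.

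The paper's route is different and fills these holes: it invokes a reduction from its predecessor (Theorem~\ref{thm:reduction}, from \cite{CenLP22}) showing that \emph{given} the extreme sets tree, both augmentation and splitting-off are solvable in $\tO(m)$ time --- so no new splitting-off engine is needed --- and then devotes the entire paper to computing the extreme sets tree in $\tO(m)$ time (Theorem~\ref{thm:extreme}). That algorithm proceeds by an iterative framework restricting to near-mincut extreme sets, Karger-style sparsification plus a tree packing so that every target set $2$-respects some tree, reductions via centroid decomposition and randomized heavy-light contraction from trees to spiders, dynamic programs on spiders/paths, and a laminar-family merging procedure. Your proposal contains none of these ingredients and, as written, reduces the theorem to two unproven claims (the $\tO(m)$ cut representation and the near-linear splitting-off), so it does not establish the result.
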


The edge connectivity augmentation problem is closely related to {\em edge splitting off}, a widely used tool in the graph connectivity literature (e.g.,~\cite{Gabow94,NagamochiI97}). A pair of (weighted) edges $(u, s)$ and $(s, v)$ both incident on a common vertex $s$ is said to be split off by weight $w$ if we reduce the weight of both these edges by $w$ and increase the weight of their {\em shortcut} edge $(u, v)$ by $w$. Such a splitting off is valid if it does not change the (Steiner) connectivity\footnote{The Steiner connectivity of a set of vertices is the minimum value of any cut that has at least one of these vertices on each side of the cut.} of the vertices $V\setminus \{s\}$. %In other words, the value of the minimum cut that has at least one vertex from $V\setminus \{s\}$ on both sides should not decrease after the splitting off. 
If all edges incident on $s$ are eliminated by a sequence of splitting off operations, we say that the vertex $s$ is split off. We call the problem of finding a set of shortcut edges to split off a given vertex $s$ the edge splitting off problem. 

Lov\'asz~\cite{Lovasz79} initiated the study of edge splitting off by showing that in an undirected graph, any vertex $s$ with even degree (i.e.\ the total weight of incident edges is even) can be split off while maintaining the (Steiner) connectivity of the remaining vertices. (Later, more powerful splitting off theorems~\cite{Mader78} were obtained that preserve stronger properties and/or apply to directed graphs, but these come at the cost of slower algorithms. We do not consider these extensions in this paper.) The splitting off operation has emerged as an important inductive tool in the graph connectivity literature, leading to many algorithms with progressively faster running times being proposed for the edge splitting off problem~\cite{CaiS89,Frank92,Gabow94,NagamochiI97,BenczurK00}. Currently, the best running time is $\tO(m + n^{3/2})$, which was obtained in the same paper as the current best edge connectivity augmentation result~\cite{CenLP22}. We improve this bound as well:
\begin{theorem}\label{thm:splitting}
    There is a randomized, Monte Carlo algorithm for the edge splitting off problem that runs in $\tO(m)$ time.
\end{theorem}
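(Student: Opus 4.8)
The plan is to reduce edge splitting-off to edge connectivity augmentation and apply Theorem~\ref{thm:augment}. Let $s$ be the vertex to split off; assume $\deg_G(s)$ is even, as otherwise no complete splitting off exists. Write $c_u$ for the total weight of edges between $s$ and a neighbour $u$, let $H:=G-s$ be $G$ with $s$ and its incident edges deleted, and let $\tau$ be the Steiner connectivity of $V\setminus\{s\}$ in $G$, i.e.\ $\tau=\min_{\emptyset\neq T\subsetneq V\setminus\{s\}}d_G(T)$; equivalently $\tau=\min_{u\neq v\in V\setminus\{s\}}\lambda_G(u,v)$, the smallest minimum $(u,v)$-cut value, which can be computed in $\tO(m)$ time, e.g.\ from a Gomory--Hu tree. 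This $\tau$ is precisely the connectivity a valid complete splitting off must preserve. By classical splitting-off/detachment theory (Lov\'asz~\cite{Lovasz79}; Nash-Williams; Frank), a complete splitting off of $s$ that preserves $\tau$-connectivity is the same object as a weighted edge set $F$ on $V\setminus\{s\}$ such that (i) $H+F$ is $\tau$-edge-connected and (ii) the degree of $F$ at every vertex $u$ equals $c_u$; moreover such an $F$ always exists and has total weight exactly $\deg_G(s)/2$, and the split-off pairs of edges at $s$ are read off directly from $F$ (an edge $uv$ of $F$ with weight $w$ is the split-off of $(su,sv)$ by $w$). So it suffices to compute such a \emph{degree-constrained} minimum augmentation of $H$ to target $\tau$ in $\tO(m)$ time.

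First, the degree constraints do not change the optimal value. For every $X$ with $\emptyset\neq X\subsetneq V\setminus\{s\}$ we have $d_G(X)\geq\tau$, so the deficiency of $X$ in $H$ is $\max\bigl(0,\ \tau-d_H(X)\bigr)=\max\bigl(0,\ c(X,s)-(d_G(X)-\tau)\bigr)\leq c(X,s)$, where $c(X,s)=\sum_{u\in X}c_u$. Summing over a subpartition, and over all of $V\setminus\{s\}$, the standard single-set and subpartition (half-sum-of-deficiencies) lower bounds for augmenting $H$ to connectivity $\tau$ are each at most $\deg_G(s)/2$, and by Lov\'asz's theorem this is attained. Now run the augmentation algorithm of Theorem~\ref{thm:augment} on $(H,\tau)$; the remaining work is to make it place its augmenting edges so that the degree landing at each neighbour $u$ is exactly $c_u$. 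The augmentation procedure commits to a laminar ``tight'' structure -- the deficient extreme sets of $H$ together with the number of units of new edge-weight entering each -- and turning that into the final edge set amounts to distributing, within each tight set $X$, the prescribed number of endpoints among the vertices $u\in X$ up to capacity $c_u$. The demand inside $X$ is at most $c(X,s)$, the grand totals match ($\deg_G(s)$ endpoints against $\sum_u c_u=\deg_G(s)$), so a feasible distribution exists (again by the splitting-off theorem); since the laminar family has $O(n)$ sets and all quantities are polynomially bounded integers, it can be found in $\tO(m)$ time. Alternatively, since our augmentation algorithm itself proceeds by splitting-off-type operations, this degree-constrained variant may be extracted directly from its analysis.

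The main obstacle is precisely this coupling: one must show the near-linear-time augmentation machinery can be driven to place \emph{exactly} $\deg_G(s)/2$ units of edge-weight, with every neighbour $u$ receiving degree exactly $c_u$, without losing optimality of the cover or the $\tO(m)$ running time. This is where Lov\'asz's splitting-off theorem (equivalently, Nash-Williams/Frank detachment) enters as an existential black box that must be reconciled with the extreme-set/weaving structure of Theorem~\ref{thm:augment}; the corner cases -- neighbours $u$ with $c_u$ comparable to $\deg_G(s)/2$, sets $X$ with $d_G(X)=\tau$ exactly, and a possibly disconnected $H$ -- need checking so that the distribution problem inside each tight set is always feasible. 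Once this structural claim is in hand, the rest is bookkeeping plus one call to the algorithm of Theorem~\ref{thm:augment}.
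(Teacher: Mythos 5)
There is a genuine gap, and you flag it yourself: the entire argument hinges on being able to drive the algorithm of \Cref{thm:augment} to output an augmenting set whose degree at each neighbour $u$ of $s$ is exactly $c_u$, and you never establish this. \Cref{thm:augment} is a statement about \emph{unconstrained} minimum-weight augmentation, and as a black box it cannot be used here: the unconstrained optimum for $(H,\tau)$ can be strictly smaller than $\deg_G(s)/2$ and its solution can be structurally incompatible with the required degrees. For instance, take $G$ to be a unit-weight triangle with $s$ attached by unit edges to two of its vertices; then $\tau=2$, $H$ is already $2$-connected, so the augmentation algorithm legitimately returns $F=\emptyset$, whereas a valid splitting off must add an edge of weight $1$ between the two neighbours of $s$. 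Your proposed repair --- reading off a laminar ``tight'' structure from the augmentation procedure and redistributing endpoints inside tight sets subject to the capacities $c_u$ --- is not something that follows from the statement of \Cref{thm:augment}; it amounts to solving the degree-constrained augmentation problem, which is exactly the ``main obstacle'' you acknowledge and leave open (the feasibility of the per-set distribution, and preserving optimality and the $\tO(m)$ bound, are the whole difficulty, not bookkeeping). A secondary issue: computing $\tau$, the Steiner connectivity of $V\setminus\{s\}$, via a Gomory--Hu tree in $\tO(m)$ time is also not available from anything in the paper.

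The paper takes a different and much shorter route that sidesteps this coupling entirely: it invokes \Cref{thm:reduction} (Theorem 3.1 of \cite{CenLP22}), which states that \emph{given the extreme sets tree}, both edge connectivity augmentation and edge splitting off can be solved in $\tO(m)$ time --- i.e., the degree-constrained machinery you are missing is already packaged in prior work, reduced to one combinatorial object. The new contribution of the paper, and the actual content behind \Cref{thm:splitting}, is \Cref{thm:extreme}: computing the extreme sets tree in $\tO(m)$ time. So the step your proposal treats as an extractable afterthought of \Cref{thm:augment} is precisely the part the paper obtains from the extreme-sets reduction of \cite{CenLP22}, not from the augmentation theorem itself; without proving that degree-constrained variant (or citing a reduction that provides it), the proposal does not establish the theorem.
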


\eat{

As in previous work (e.g.,~\cite{BenczurK00,CenLP22}), instead of giving separate algorithms for the edge connectivity augmentation and the edge splitting off problems, we give an algorithm for the {\em degree-constrained} edge connectivity augmentation (\dc) problem, which generalizes both these problems. In this problem, given an edge connectivity augmentation instance, we add additional {\em degree constraints} $\beta(v)\ge 0$ requiring the total weight of added edges incident on each vertex to be at most its degree constraint. The goal is to either return an optimal set of edges for the augmentation problem that satisfy the degree constraints, or to say that the instance is infeasible. 

Clearly, \dc generalizes the edge connectivity augmentation problem. To see why \dc also generalizes splitting off, create the following \dc instance from a splitting off instance: Remove the edges incident on $s$ and set $\beta(v)$ to the weighted degree of $v$ in these edges. Then, set $\tau$ to the (Steiner) connectivity of $V$ in the input graph. Once the \dc solution $F$ is obtained, for vertices $v$ whose degree in $F$ is smaller than $\beta(v)$, use an arbitrary weighted matching to increase the degrees to exactly $\beta(v)$.

For the \dc problem, we show that:
\begin{theorem}\label{thm:deg-augment}
    There is a randomized, Monte Carlo algorithm for the degree-constrained edge connectivity augmentation problem that runs in $\tO(m)$ time.    
\end{theorem}
%\Cref{thm:deg-augment} implies \Cref{thm:augment} and \Cref{thm:splitting}. The rest of this paper focuses on proving \Cref{thm:deg-augment}.
}

A key tool in augmentation/splitting off algorithms (e.g., in \cite{WatanabeN87,NaorGM97,Gabow16,Benczur94,BenczurK00,CenLP22}) is that of {\em extreme sets}. A non-empty set of vertices $X\subsetneq V$ is called an extreme set in graph $G = (V, E)$ if for every non-empty proper subset $Y\subsetneq X$, we have $\delta_G(Y) > \delta_G(X)$, where $\delta_G(X)$ (resp., $\delta_G(Y)$) is the total weight of edges with exactly one endpoint in $X$ (resp., $Y$) in $G$. (If the graph is unambiguous, we drop the subscript $G$ and write $\delta(\cdot)$.) In edge connectivity augmentation problem, every vertices set $U$ with $\delta(U)<\tau$ has a demand that the solution must add edges with total weight at least $\tau-\delta_G(U)$ across $U$. It turns out that satisfying the demands of all extreme sets implies satisfying the demands of all vertices sets. 
The extreme sets form a laminar family, thereby allowing an $O(n)$-sized representation in the form of an {\em extreme sets tree}. The main bottleneck of the previous edge augmentation/splitting off algorithms~\cite{BenczurK00,CenLP22} is in the construction of the extreme sets tree. Indeed, \cite{CenLP22} show that once the extreme sets tree is constructed, the augmentation/splitting off problems can be solved in $\tO(m)$ time:
\begin{theorem}[Theorem 3.1 in \cite{CenLP22}]\label{thm:reduction}
    Given an algorithm to compute the extreme sets tree, the edge connectivity augmentation and edge splitting problems can be solved in $\tO(m)$ time.
\end{theorem}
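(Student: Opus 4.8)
Following~\cite{BenczurK00,CenLP22}, I would handle both problems through a single common generalization, the \emph{degree-constrained} edge connectivity augmentation problem: given $(G,w,\tau)$ together with degree bounds $\beta\colon V\to\mathbb{Z}_{\ge 0}$, find a minimum-weight edge set $F$ with $\deg_F(v)\le\beta(v)$ for every $v$ such that $G+F$ is $\tau$-connected, or report infeasibility. Plain augmentation is the case $\beta\equiv\infty$. For splitting off a vertex $s$ (with $\deg_G(s)$ even), take the instance $\bigl(G-s,\ \beta(v):=w(s,v),\ \tau:=\text{Steiner connectivity of }V\setminus\{s\}\text{ in }G\bigr)$; a short computation shows that any $F$ feasible for this instance, once each vertex with $\deg_F(v)<\beta(v)$ is topped up to equality by an arbitrary weighted matching (which only raises cut values), is exactly a complete valid splitting-off of $s$, and feasibility is guaranteed by Lov\'asz's theorem. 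Both reductions run in $\tO(m)$ time, so it suffices to solve the degree-constrained problem given its extreme sets tree.

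\paragraph{From the extreme sets tree to the optimum.}
I would first attach to each node $X$ of the extreme sets tree its cut value $\delta_G(X)$ and hence its \emph{demand} $d(X):=\max\{0,\tau-\delta_G(X)\}$. If the values are not supplied with the tree, they are recovered in $\tO(m)$ time: an edge $(u,v)\in E$ crosses precisely the extreme sets on the tree-path between the leaves $u$ and $v$, excluding their least common ancestor, so all $\delta_G(X)$ follow from $m$ path-increments and one point-query pass on the tree. By Frank's min--max theorem (in its degree-constrained refinement), the optimum weight $\opt$ is a function of the demands and $\beta$-budgets of the extreme sets alone, of the shape
\[
\opt=\max\Bigl\{\max_{X} d(X),\ \Bigl\lceil\tfrac12\max_{\mathcal{S}}\sum_{X\in\mathcal{S}}d(X)\Bigr\rceil,\ \dots\Bigr\},
\]
where $\mathcal{S}$ ranges over subpartitions of $V$ into deficient sets (which may be taken extreme), with the additional terms/corrections coming from the $\beta$-budgets and from parity, and with infeasibility detected when some $d(X)$ exceeds the available budget inside $X$. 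Since the extreme sets are laminar, $\max_{\mathcal{S}}\sum_{X\in\mathcal{S}}d(X)$ is the value at the root of the bottom-up recursion $f(X)=\max\{d(X),\sum_{Y\text{ a child of }X}f(Y)\}$ (capped using the $\beta$-budget in the subtree of $X$), which is $O(n)$ time and simultaneously certifies feasibility.

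\paragraph{Constructing the augmentation.}
To realize $\opt$ I would sweep the extreme sets tree bottom-up, repeatedly \emph{pairing} residual demand. At a node $X$ with children $Y_1,\dots,Y_t$ carrying residual demands $r_1,\dots,r_t$, add weighted edges between distinct children---greedily chaining the $Y_i$ to cancel as much of $\sum_i r_i$ as the $\beta$-budget inside $X$ permits---and forward the remaining demand of $X$ (roughly $\max\{d(X),\text{leftover}\}$) to the parent; at the root, cancel the final leftover among itself, paying one extra unit of weight if a parity correction is needed. Each node creates $O(t)$ edges and does $O(t)$ work, and every pairing is stored as a single weighted edge, so $F$ has $O(n)$ edges and the sweep runs in $O(n)$ time even with polynomially large weights. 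That $F$ obeys the degree bounds, has total weight $\opt$, and makes $G+F$ be $\tau$-connected---equivalently, that the pairing in fact meets the demand of \emph{every} vertex set and not only of every extreme set---follows from the min--max theorem together with the laminar structure, via the standard argument that a minimum-residual cut can be pushed down to an extreme set.

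\paragraph{Main obstacle.}
The ingredients above are classical; the real work, and the content of the theorem, is carrying out the pairing construction and the degree bookkeeping in near-linear rather than $\Omega(n^{1.5})$ time. The points requiring care are: (i) never materializing $\Omega(\tau)$ parallel edges---each unit of demand cancellation must be collapsed into one weighted edge, and the sweep must visit each tree node only $O(1)$ times while touching a total of $O(n)$ residual counters; (ii) implementing the path-increment/point-query evaluation of the $\delta_G(X)$ and the bottom-up DP with simple tree data structures so that the whole preprocessing is $\tO(m)$; and (iii) for splitting off, verifying that after the top-up matching the shortcut multigraph preserves the Steiner connectivity of $V\setminus\{s\}$, which one gets by running the extreme-sets oracle on $G-s$ and invoking Lov\'asz's splitting-off theorem rather than re-proving it.
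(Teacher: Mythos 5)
You should know at the outset that the paper does not prove \Cref{thm:reduction} at all: it is imported verbatim as Theorem 3.1 of \cite{CenLP22}, and the entire content of the present paper is the $\tO(m)$-time construction of the extreme sets tree that this black box consumes. So there is no in-paper proof to compare against; the relevant comparison is with the cited work, and your outline does follow the same general route as \cite{BenczurK00,CenLP22}: generalize both problems to degree-constrained augmentation, attach demands $\max\{0,\tau-\delta_G(X)\}$ to the nodes of the extreme sets tree (your path-increment/LCA evaluation of the $\delta_G(X)$ is the same device the paper itself uses in Section 3 to compute cut values of a laminar family), invoke Frank's min--max characterization, and realize the optimum by an external-augmentation/pairing sweep over the laminar tree followed by splitting off.

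As a proof, however, what you have written has genuine gaps rather than merely compressed steps. First, the min--max formula is stated ``of the shape $\dots$'' with the degree-budget and parity corrections left unspecified; those corrections are precisely what makes the degree-constrained problem (and hence the splitting-off reduction) go through, and without pinning them down you can neither certify that your bottom-up DP value equals $\opt$ nor detect infeasibility correctly. Second, the correctness of the greedy pairing --- that chaining residual demands of children and forwarding ``roughly $\max\{d(X),\text{leftover}\}$'' produces an $F$ of weight exactly $\opt$ that satisfies \emph{every} cut, not just the extreme ones, while respecting all $\beta(v)$ --- is exactly the substance of the theorem; ``the standard argument that a minimum-residual cut can be pushed down to an extreme set'' is the statement one must prove, and the invariant maintained by your forwarding rule is not spelled out, so the induction cannot be checked. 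Third, you yourself concede that the $\tO(m)$ implementation with $O(n)$ weighted edges is ``the real work, and the content of the theorem,'' and the proposal does not carry it out. A smaller loose end: your splitting-off reduction needs $\tau$ equal to the Steiner connectivity of $V\setminus\{s\}$ in $G$, and obtaining that value in $\tO(m)$ time is itself something to be argued, not assumed. In short, your plan is aligned with the proof strategy of the cited work, but it remains a plan; the theorem's actual proof lives in \cite{CenLP22}, and citing it (as this paper does) or reproducing its pairing analysis in full would be needed to close the argument.
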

 \noindent (Bencz\'ur and Karger~\cite{BenczurK00} also hint that computing extreme sets is the main bottleneck of their algorithm, although their algorithm does use $\tO(n^2)$ time in a few other places.)

\Cref{thm:reduction} reduces \Cref{thm:augment} and \Cref{thm:splitting} to obtaining an extreme sets tree in $\tO(m)$ time. 
Bencz\'ur and Karger~\cite{BenczurK00} used the {\em recursive contraction} framework of Karger and Stein~\cite{KargerS96} to construct the extreme sets tree, which takes $\tO(n^2)$ time. This was improved by Cen, Li, and Panigrahi~\cite{CenLP22} who used the isolating cuts framework~\cite{LiP20deterministic}\footnote{A similar framework was shown independently by Abboud, Krauthgamer, and Trabelsi~\cite{AbboudKT21}.} which uses $\polylog(n)$ max-flow calls. But the isoating cuts framework is unusable if we want to improve beyond max-flow runtime, since a special case of an isolating cut is an $s-t$ min-cut. In this paper, we overcome this barrier and give an $\tO(m)$-time algorithm for finding the extreme sets tree of a graph:
\begin{restatable}{theorem}{Extreme}\label{thm:extreme}
    There is a randomized, Monte Carlo algorithm for finding the extreme sets tree of an undirected graph that runs in $\tO(m)$ time.
\end{restatable}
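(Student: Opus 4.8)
The plan is to build $\ext{G}$ by a divide-and-conquer recursion whose workhorse is a subroutine \ct that, given a graph $H$ and a threshold $c$, returns the family $\HH$ of inclusion-maximal extreme sets of $H$ of value at least $c$ in $\tO(m)$ time. First I would record the two structural facts that make the recursion legitimate. \emph{Laminarity:} if extreme sets $X,Y$ cross, posimodularity of the undirected cut function, $\delta(X\setminus Y)+\delta(Y\setminus X)\le\delta(X)+\delta(Y)$, contradicts $\delta(X\setminus Y)>\delta(X)$ and $\delta(Y\setminus X)>\delta(Y)$ (both $X\setminus Y$ and $Y\setminus X$ are proper nonempty subsets of $X$ and $Y$), so the extreme sets of value $\ge c$ form a sub-forest of $\ext{H}$ whose roots are pairwise disjoint. \emph{Contraction compatibility:} for an extreme set $X$, contracting $V(H)\setminus X$ to one vertex preserves $\delta(\cdot)$ on all subsets of $X$, so the extreme sets of that quotient inside $X$ are exactly those of $H$ inside $X$, with $X$ itself the unique maximal one; dually, after contracting every member of $\HH$, by laminarity each extreme set of $H$ of value $<c$ is a union of members of $\HH$ and singletons and hence survives in the quotient with its value unchanged, while the only extreme sets of value $\ge c$ in the quotient are the new super-vertices.

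The recursion then runs as follows. Choose a threshold $c$ and compute $\HH=\{X_1,\dots,X_k\}$ by calling \ct on $(G,c)$. For each $i$ recursively compute the subtree rooted at $X_i$ by running on $G$ with $V\setminus X_i$ contracted; separately contract every $X_i$ in $G$ and recursively compute the extreme sets tree of the resulting quotient $G'$; finally splice each $X_i$-subtree below the corresponding leaf of $\ext{G'}$. Correctness of the splice is immediate from the contraction-compatibility fact: every extreme set of $G$ is either contained in some $X_i$, and then appears in subproblem $i$, or is a union of $X_i$'s and singletons, and then appears in $\ext{G'}$, with parents and cut values agreeing on both sides. For the running time, $c$ is selected by a doubling search (over the $O(\log n)$ dyadic cut-value scales, the weights being polynomially bounded) so that the $O(n)$-node target tree splits into pieces of geometrically smaller size; a potential argument charging the edges of each subproblem to nodes of $\ext{G}$ then bounds the total size of all subproblems by $\tO(m)$ and the recursion depth by $O(\log n)$, so the whole algorithm costs $O(\log n)$ times the cost of \ct, up to near-linear overhead.

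The heart of the matter --- and the point at which we must leave the approach of \cite{CenLP22}, whose extreme-sets routine runs through isolating cuts and therefore through max-flow --- is implementing \ct in $\tO(m)$ time without a single max-flow call. The plan here is to reduce, by an internal scale recursion that contracts the extreme sets already discovered at coarser thresholds, to the regime where the extreme sets being sought have value within a constant factor of the global minimum cut $\con$ of the working graph; such sets are constant-factor approximate minimum cuts, so that after packing $\Theta(\con)$ spanning trees in the style of Karger and Thorup, each of them crosses only $O(1)$ edges of some packed tree and can therefore be found by a dynamic program over the packed trees that enumerates all $\tO(m)$ constant-respecting cuts in near-linear time; one then keeps the extreme ones by a single sweep over the laminar structure they induce (a candidate $Z$ is extreme iff it is inclusion-minimal among enumerated cuts of value at most $\delta(Z)$) and returns the maximal survivors of value $\ge c$.

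I expect the main obstacle to be exactly this last piece: proving that the contractions inherited from coarser thresholds really do pull \emph{every} remaining extreme set of the relevant value range into the constant-factor-near-minimum regime without ever contracting one of them; bounding the number of constant-respecting cuts and enumerating them within the time budget on weighted graphs; and carrying the exact cut values and the laminar family of already-discovered extreme sets consistently from one threshold to the next, so that the splicing step of the outer recursion remains valid. Once these are in place, correctness follows from the structural facts above, and the $\tO(m)$ bound from a per-scale near-linear cost multiplied by $O(\log n)$ scales.
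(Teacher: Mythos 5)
There is a genuine gap, and it sits exactly where the paper spends most of its effort. Your outer divide-and-conquer (threshold on cut value, contract inside/outside, splice, using laminarity and the locality of the extreme-set condition under contraction) is a reasonable variant of the paper's iterative framework, and your sparsification-plus-tree-packing step mirrors the paper's reduction (\Cref{thm:sparsify-small-deg}, \Cref{thm:2-resp}). But the core claim that, once every target extreme set $O(1)$-respects some packed tree, one ``enumerates all $\tO(m)$ constant-respecting cuts in near-linear time'' and then filters, is false as stated: a single spanning tree already has $\Theta(n^2)$ $2$-respecting cuts (one per pair of tree edges), so there is no near-linear enumeration, and your extremeness test (``$Z$ is extreme iff it is inclusion-minimal among enumerated cuts of value at most $\delta(Z)$'') would in any case have to be run against a non-laminar family of up to $\Theta(n^2)$ candidate sets spread over all $O(\log n)$ trees, which is not an $\tO(m)$ procedure. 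Karger's mincut algorithm gets away with $2$-respecting cuts only because it needs the single minimum, not a complete laminar family; extreme sets are asymmetric (complements of subtrees must be handled), and comparable tree-edge pairs give non-contiguous sets whose extreme property cannot be checked locally. Replacing this step is precisely the content of \Cref{sec:tree2spider} and \Cref{sec:respect} (centroid decomposition so that exactly-$2$-respecting sets always appear with incomparable edges, random contraction of heavy-light branches to reduce to spiders, and the partner/bottleneck dynamic programs that produce $O(n)$-size laminar candidate families per pair of branches), together with the $\tO(m)$ laminar-merge and non-extreme-set-removal routines (\Cref{lem:merge} and the lemma following \Cref{thm:2-resp}). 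None of this machinery, nor a workable substitute for it, appears in your proposal.

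A second, smaller gap: your plan to pack $\Theta(\con)$ spanning trees presumes the global minimum cut of the working graph is within a constant factor of the value range of the sought extreme sets, but in the recursive pieces the contracted complement vertex can have degree far below the Steiner connectivity of the uncontracted part, so the packed trees need not span and the $2$-respecting guarantee fails. The paper handles this by a case analysis on the degree of the contracted vertex and by using a directionless (possibly non-spanning) tree packing covering only $S$ (\Cref{thm:sparsify-small-deg}, \Cref{cor:tree-packing}); your ``internal scale recursion'' would need an analogous fix, and you correctly flag this as an obstacle but do not resolve it.
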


%\Cref{thm:augment} and \Cref{thm:splitting} follow directly from \Cref{thm:extreme} by using the $\tO(m)$-time algorithm from \cite{CenLP22} (which itself is a technical modification and a tighter analysis for the corresponding algorithm in Bencz\'ur and Karger~\cite{BenczurK00}) to convert an extreme sets tree to a solution for the edge connectivity augmentation and edge splitting off problems. Therefore, 

Given \Cref{thm:reduction}, the rest of this paper focuses on proving \Cref{thm:extreme}. 

\subsection{Our Techniques}
Our $\tO(m)$-time extreme sets algorithm can be viewed as a series of reductions to finding extreme sets in progressively simpler settings. Recall that the original problem is to find extreme sets in an arbitrary undirected graph. Our first step is an {\em iterative refinement} of this problem, namely instead of finding all extreme sets, we refine the problem to finding extreme sets that are also nearly minimum cuts (we call these {\em near-mincut extreme sets}). More precisely, suppose we have identified all extreme sets whose cut value is at most some threshold $\gamma$. These extreme sets form a laminar family, and induce an equivalence partition on the vertices where any two vertices that are not separated by any of these extreme sets are in the same set of the partition. By laminarity and the extreme sets property, we can claim that all the extreme sets whose cut value exceeds $\gamma$ must be strict subsets of the sets in this equivalance partition. This justifies a natural recursive strategy: for each set in the equivalence partition, we contract all the vertices outside this set into a single vertex and find extreme sets in this contracted graph. 

So far, we have reduced the problem to finding extreme sets that are subsets of the set of (uncontracted) vertices $S$, where in addition, there is a (contracted) vertex $c$ representing all the other vertices $V\setminus S$ of the graph. Clearly, the Steiner connectivity of $S$, denoted $\lambda$, must exceed $\gamma$ (else, any minimum Steiner cut that is also minimal in terms of vertices would also be an extreme set of cut value $\gamma$, which contradicts the inductive assumption that we have already identified all extreme sets of cut value at most $\gamma$). We now define our iterative goal: find all extreme sets that are subsets of $S$ and have cut value in the range $[\lambda, (1+\epsilon)\lambda]$.

To solve this problem, our next pair of tools is sparsification and tree packing. Suppose, for now, that the minimum cut in the graph containing $S$ and the contracted vertex $c$ is of value $\lambda$ (this may not be true in general since the degree cut for vertex $c$ can be smaller than $\lambda$, but we will handle this complication later). Then, we can use a {\em uniform sparsification} technique of Karger~\cite{Karger99} to sample edges and form a graph where the value of all cuts converge to their expected value {\em whp}\footnote{with high probability, i.e., with probability $1-o(1)$} and where the expected value of the minimum cut is $O(\log n)$. On this graph, we can pack $O(\log n)$ disjoint spanning trees rooted at a fixed vertex $r$ in $\tO(m)$ time such that the following property holds whp: for every cut of value at most $(1+\epsilon)\lambda$ in the original graph (this includes all the extreme sets we are interested in finding), there is a spanning tree that contains at most two edges from the cut (we say the cut {\em $2$-respects} the tree). This essentially reduces the problem to finding extreme sets that $2$-respect a given spanning tree. There are two caveats. First, we need an algorithm that can {\em merge} the extreme sets found for the different trees into a single extreme sets tree. Second, the contracted vertex $c$ may have degree smaller than the number of trees, which means that trees wouldn't be spanning and two tree edges do not uniquely define a vertex set. We defer the technical details to address these issues to later sections.

Now, we have reduced the extreme sets problem to finding all extreme sets that $2$-respect a given tree. If we were interested in finding a minimum cut (see~\cite{Karger00}), then we would use a dynamic program at this stage. But, extreme sets are more complex. First, the extreme set condition is more difficult to check than tracking the minimum cut. More importantly, extreme sets are asymmetric, i.e., even if $X$ is an extreme set, $V\setminus X$ may or may not be an extreme set. This seems to defeat the purpose of working on a tree. For instance, when the two tree edges are comparable, i.e., form an ancestor-descendant pair, the extreme set is not contiguous in the tree. It is unclear at all how we can check the extreme sets property for such a non-contiguous set. To overcome these difficulties, we undertake two further simplifications of the problem. The first is a {\em recursive rotation} of the tree based on the idea of a centroid decomposition. We show that this ensures that all $2$-respecting cuts will appear as two incomparable edges in some level of the recursion, thereby eliminating the need for handling comparable tree edges. Our next technique reduces the problem from trees with arbitrary structure to spiders. (A {\em spider} is a tree where only the root can have degree greater than $2$.) The basic idea is to perform a {\em heavy-light decomposition} of the spanning tree, and then sample each path of this decomposition independently for contraction in a manner that the resulting tree is a spider. If this process is repeated $O(\log n)$ times, then for every $2$-respecting cut, there whp is at least one spider that preserves both edges of the cut in the tree. (This idea was previously explored by Li~\cite{Li19}, although in a somewhat different context.) As an aside, we note that both these simplifications are also valid for the minimum cut problem, and can be used to simplify Karger's celebrated near-linear time minimum cut algorithm~\cite{Karger00}.

We have now reduced the problem to finding $2$-respecting extreme sets on spiders, with the additional guarantee that if the cut contains exactly two edges in the tree, then those two edges will be incomparable. At this point, we first find all $1$-respecting extreme sets using a simple dynamic program. Conceptually, this is simple because we can run the algorithm ``in parallel'' on each branch of the spider. However, the $2$-respecting case still needs additional work. At this point, we use our final simplification, where we reduce the $2$-respecting extreme sets problems from spiders to paths (equivalently, spiders with only two branches). The basic idea behind this transformation is that we use the laminar structure of extreme sets to claim that all $2$-respecting extreme sets can be partitioned into equivalence classes, where each set of the partition corresponds to two distinct branches of the spider. This allows us to run the $2$-respecting algorithm ``in parallel'' on these spiders containing only two branches each, i.e., on paths. Finally, for each path, we can solve the $2$-respecting extreme sets problem using a simple dynamic program.

\paragraph{Roadmap.} We introduce some preliminaries in \Cref{sec:prelim}. \Cref{sec:iterative} describes the iterative framework that we use in our extreme sets algorithm, and reduces the problem to finding $2$-respecting extreme sets for a spanning tree of the graph. In \Cref{sec:tree2spider}, we use the recursive rotation based on centroid partitioning and the random sampling over the heavy-light decomposition to reduce the problem to finding $2$-respecting extreme sets in a spider. We solve this latter problem in \Cref{sec:respect}, using the reduction to a path and the employing a dynamic program. Finally, in \Cref{sec:merge}, we give the algorithm to merge the extreme sets revealed by the different steps into a single extreme sets tree. 

\section{Preliminaries}
\label{sec:prelim}
Use $\delta(S)$ to denote the value of a cut $S\subsetneq V$, that is the sum of weights of edges with exactly one endpoint in $S$. For disjoint $S, T\subsetneq V$, denote $\delta(S, T)$ to be the sum of weights of edges with one endpoint in $S$ and the other endpoint in $T$. For vertices $s\ne t$, denote $\lambda(s,t)$ to be the value of minimum $s$-$t$ cut.

Our goal is to find all the extreme sets of an undirected graph $G = (V, E)$. We can define an extreme set as follows.
\begin{defn}[Extreme set]
A nonempty set $X\subsetneq V$ is {\em extreme} if for every non-empty proper subset $U$ of $X$, we have $\delta(U)>\delta(X)$. By convention, all singleton sets are extreme sets.
\end{defn}
One noteworthy aspect of this definition is that although the graph $G$ is undirected, the notion of extreme sets is {\em asymmetric}. In other words, if $X$ is an extreme set, it is not necessarily the case that the complementary set $V\setminus X$ is also an extreme set. As described in the introduction, this asymmetry is one of the main contributors to the difficulty of the problem.

We need the following properties of cut function in undirected graphs.
\begin{proposition}[submodularity]
$\forall X, Y\subseteq V, \delta(X\cap Y)+\delta(X\cup Y)\le \delta(X)+\delta(Y)$.
\end{proposition}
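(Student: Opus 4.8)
The plan is to prove the inequality by an edge-by-edge accounting argument: I will show that every edge of $G$ contributes at least as much weight to the right-hand side $\delta(X)+\delta(Y)$ as it does to the left-hand side $\delta(X\cap Y)+\delta(X\cup Y)$, and then sum over all edges. Since all four cut values are sums of edge weights, it suffices to prove the inequality ``locally'' for a single edge of unit weight.

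The first step is to set up the right partition of the vertex set. I would partition $V$ into the four regions $A=X\cap Y$, $B=X\setminus Y$, $C=Y\setminus X$, and $D=V\setminus(X\cup Y)$, so that $X=A\cup B$, $Y=A\cup C$, $X\cup Y=A\cup B\cup C$, and $X\cap Y=A$. An edge contributes to $\delta(S)$ exactly when its two endpoints lie on opposite sides of $S$. So for a fixed edge $e=(u,v)$, its contributions to all four terms are determined solely by which of the regions $A,B,C,D$ contain $u$ and $v$. Up to swapping $u$ and $v$, there are ten cases for the unordered pair of regions: $(A,A)$, $(A,B)$, $(A,C)$, $(A,D)$, $(B,B)$, $(B,C)$, $(B,D)$, $(C,C)$, $(C,D)$, $(D,D)$.

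The second step is to check each case. In eight of the ten cases the contribution to the left-hand side equals the contribution to the right-hand side; for example, an $(A,D)$ edge contributes $2$ to each side, and an $(A,B)$ edge contributes $1$ to each side. The two cases $(B,B)$, $(C,C)$ and $(D,D)$ and $(A,A)$ contribute $0$ to both sides. The only case where the two sides differ is an edge between $B=X\setminus Y$ and $C=Y\setminus X$: such an edge crosses neither $X\cap Y$ nor $X\cup Y$, so it contributes $0$ to the left-hand side, but it crosses both $X$ and $Y$, so it contributes $2$ to the right-hand side. In every case the left-hand contribution is at most the right-hand contribution, which is exactly what is needed.

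There is no real obstacle here beyond organizing the case analysis cleanly; the ``hard part'', such as it is, is simply making sure the ten cases are enumerated without omission and that the region bookkeeping ($X=A\cup B$, etc.) is applied consistently. Summing the per-edge inequality (weighted by $w_e$) over all $e\in E$ yields $\delta(X\cap Y)+\delta(X\cup Y)\le\delta(X)+\delta(Y)$, and as a byproduct the argument identifies the slack as exactly $2\,\delta(X\setminus Y,\,Y\setminus X)$, i.e.\ $\delta(X\cap Y)+\delta(X\cup Y) = \delta(X)+\delta(Y) - 2\,\delta(X\setminus Y, Y\setminus X)$, which is often useful in later submodularity-based arguments.
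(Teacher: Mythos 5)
Your proof is correct. The paper states this proposition as a standard fact about cut functions and gives no proof of its own, so there is nothing to compare against; your edge-by-edge accounting over the four regions $A=X\cap Y$, $B=X\setminus Y$, $C=Y\setminus X$, $D=V\setminus(X\cup Y)$ is the standard argument, and it even yields the exact identity $\delta(X\cap Y)+\delta(X\cup Y)=\delta(X)+\delta(Y)-2\,\delta(X\setminus Y,\,Y\setminus X)$, which is stronger than the stated inequality. One cosmetic slip: nine of the ten unordered region pairs give equal contributions to both sides (only $(B,C)$ is strict), not eight, and the sentence listing the zero-contribution cases says ``two cases'' while naming four; neither affects the validity of the case analysis, since every case is checked correctly.
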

\begin{proposition}[posi-modularity]
$\forall X, Y\subseteq V, \delta(X\setminus Y)+\delta(Y\setminus X)\le \delta(X)+\delta(Y)$.
\end{proposition}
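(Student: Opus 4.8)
The plan is to derive posi-modularity directly from the submodularity of the cut function, using only the additional observation that in an \emph{undirected} graph $\delta$ is invariant under complementation: an edge has exactly one endpoint in $S$ if and only if it has exactly one endpoint in $V\setminus S$, so $\delta(S)=\delta(V\setminus S)$ for every $S\subseteq V$. This is immediate from the definition of $\delta$ and needs no separate argument.

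Concretely, I would apply the submodularity inequality not to the pair $X,Y$ but to the pair $X,\overline{Y}$, where $\overline{Y}:=V\setminus Y$. Submodularity then reads
\[ \delta(X\cap\overline{Y})+\delta(X\cup\overline{Y})\ \le\ \delta(X)+\delta(\overline{Y}). \]
Next I would rewrite each term using elementary set identities together with complement-invariance: $X\cap\overline{Y}=X\setminus Y$; $X\cup\overline{Y}=V\setminus(Y\setminus X)$, hence $\delta(X\cup\overline{Y})=\delta(Y\setminus X)$; and $\delta(\overline{Y})=\delta(Y)$. Substituting these into the displayed inequality yields exactly $\delta(X\setminus Y)+\delta(Y\setminus X)\le\delta(X)+\delta(Y)$, which is the claim. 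Degenerate cases such as $X\setminus Y=\emptyset$ or $X\cup Y=V$ cause no trouble, since by convention $\delta(\emptyset)=\delta(V)=0$.

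For completeness I would also mention a self-contained argument that avoids invoking submodularity: expand $\delta(S)=\sum_{e} w_e\cdot\mathbbm{1}[e\text{ crosses }S]$ and verify, for every edge $e=(u,v)$, the pointwise inequality
\[ \mathbbm{1}[e\text{ crosses }X\setminus Y]+\mathbbm{1}[e\text{ crosses }Y\setminus X]\ \le\ \mathbbm{1}[e\text{ crosses }X]+\mathbbm{1}[e\text{ crosses }Y], \]
then sum over $e$ weighted by $w_e$. The pointwise statement is a finite case analysis on which of the four regions $X\cap Y$, $X\setminus Y$, $Y\setminus X$, $V\setminus(X\cup Y)$ contains $u$ and which contains $v$; up to symmetry there are ten cases, and in each one both sides are computed explicitly, the inequality holding with equality except in the single case where the two endpoints lie in $X\cap Y$ and $V\setminus(X\cup Y)$ respectively (there the left side is $0$ and the right side is $2$). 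There is essentially no serious obstacle in either route; the only point requiring care is the bookkeeping — correctly identifying $X\cup\overline{Y}$ with $V\setminus(Y\setminus X)$ and applying complement-invariance in the first proof, or exhausting all edge-location cases in the second. I would present the three-line derivation from submodularity as the main proof.
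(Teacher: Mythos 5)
Your proposal is correct. The paper states posi-modularity as a standard property of undirected cut functions without proof, so there is no in-paper argument to compare against; your main derivation --- applying submodularity to the pair $X,\overline{Y}$ and using the complement-invariance $\delta(S)=\delta(V\setminus S)$, with the identities $X\cap\overline{Y}=X\setminus Y$ and $X\cup\overline{Y}=V\setminus(Y\setminus X)$ --- is the standard three-line proof and is sound. Your backup edge-by-edge case analysis is also correct (the only strict case is indeed an edge between $X\cap Y$ and $V\setminus(X\cup Y)$), so either route suffices.
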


A family of sets is said to be {\em laminar} if any two of them are either disjoint or one is contained in the other. It is well known that extreme sets form a laminar family.

\begin{lemma}\label{lem:laminar}
Extreme sets form a laminar family.
\end{lemma}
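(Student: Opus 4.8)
The plan is to argue by contradiction using posi-modularity. Suppose $X$ and $Y$ are both extreme sets that are \emph{crossing}, i.e., all three of $X\cap Y$, $X\setminus Y$, $Y\setminus X$ are nonempty. Since $X$ is extreme and $Y\setminus X$ might not be a subset of $X$, I cannot directly apply the extreme-set inequality to $Y\setminus X$; instead the natural objects to compare are $X\setminus Y$ (a nonempty proper subset of $X$, since $X\cap Y\neq\emptyset$ and crossing forces $X\setminus Y\subsetneq X$) and $Y\setminus X$ (a nonempty proper subset of $Y$). By the extreme-set property of $X$ applied to the proper subset $X\setminus Y$, we get $\delta(X\setminus Y)>\delta(X)$, and symmetrically by the extreme-set property of $Y$ applied to $Y\setminus X$, we get $\delta(Y\setminus X)>\delta(Y)$. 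Adding these strict inequalities yields
\[
\delta(X\setminus Y)+\delta(Y\setminus X) > \delta(X)+\delta(Y).
\]
But posi-modularity states exactly $\delta(X\setminus Y)+\delta(Y\setminus X)\le \delta(X)+\delta(Y)$, a contradiction. Hence no two extreme sets cross, i.e., any two are either disjoint or nested, which is the definition of a laminar family.

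The only point requiring a little care is the degenerate cases hidden in ``crossing.'' If $X\subseteq Y$ or $Y\subseteq X$ we are already in the laminar case, so assume neither containment holds; combined with $X\cap Y\neq\emptyset$ this guarantees all three pieces $X\cap Y$, $X\setminus Y$, $Y\setminus X$ are nonempty, and moreover $X\setminus Y$ is a \emph{proper} nonempty subset of $X$ (it is nonempty because $X\not\subseteq Y$, and proper because $X\cap Y\neq\emptyset$), so the extreme-set inequality legitimately applies to it; symmetrically for $Y\setminus X\subsetneq Y$. There is no real obstacle here — the lemma is a one-line consequence of posi-modularity once the right two subsets are chosen. (One could equally run the same argument with submodularity applied to $X\cap Y$ and $X\cup Y$, but $X\cup Y$ need not be a proper subset of either set, so posi-modularity is the cleaner tool.)
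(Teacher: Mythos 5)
Your proof is correct and follows exactly the same route as the paper: apply the extreme-set inequality to the proper subsets $X\setminus Y$ and $Y\setminus X$ of two crossing extreme sets and derive a contradiction with posi-modularity. The extra care you take with the degenerate cases is fine but not needed beyond what the paper's one-line argument already implicitly handles.
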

\begin{proof}
Assume for contradiction that there are two extreme sets $X$ and $Y$ violate laminarity, i.e., $X\setminus Y, Y\setminus X,$ and $X\cap Y$ are all non-empty sets.
Then, since both $X$ and $Y$ are extreme sets, we have $\delta(X\setminus Y) > \delta(X)$ and $\delta(Y\setminus X) > \delta(Y)$.
Then $\delta(X\setminus Y) + \delta(Y\setminus X) > \delta(X) + \delta(Y)$, which contradicts posi-modularity of the cut function.
%Since the graph is undirected, $\delta(Y\setminus X) = \delta(X\cup \n{Y})$ and $\delta(Y) = \delta(\n{Y})$. Also, by set equivalence, $\delta(X\setminus Y) = \delta(X\cap \n{Y})$.
%Rewriting the above inequality using these equivalences and taking their sum, we get  $\delta(X\cap \n{Y}) + \delta(X\cup \n{Y}) > \delta(X) + \delta(\n{Y})$, which contradicts submodularity of the cut function.
\end{proof}

Laminarity induces a natural tree structure on extreme sets where all the vertices of the graph (as trivial extreme sets) are leaves of the tree and every subtree (or equivalently, the internal tree node where the subtree is rooted) represents an extreme set containing all the vertices that are leaves in the subtree. We call this the {\em extreme sets tree}. Our goal in this paper is to find an extreme sets tree in $\tO(m)$ time, thereby establishing \Cref{thm:extreme}.
%Let the extreme sets tree structure be $T$. $T$'s nodes are vertex sets, each internal node is the disjoint union of its children, and each leaf represent a vertex in $G$. 
%Every extreme set is represented by a node on $T$.

We also use the notion of {\em Steiner connectivity} of a set of vertices, which is the minimum value of a cut that has at least one terminal on each side of the cut. If we remove this additional condition (equivalently, set all vertices as terminals), then we get the {\em edge connectivity} of the graph.
\begin{defn}[Steiner connectivity]
The Steiner connectivity of a set of vertices $X\subseteq V$ (called terminals) is the minimum value of a cut $S$ such that $X\cap S$ and $X\setminus S$ are both nonempty. If $X = V$, then we call this the edge connectivity of the graph.
\end{defn}

\section{Reduction to 2-respecting Extreme Sets}
\label{sec:iterative}

In this section, we reduce the problem of finding all extreme sets to that of finding extreme sets that satisfy an additional property called $2$-respecting that we will define later. This reduction is in two parts. In the first part, we use a framework that iteratively calls an algorithm to find all extreme sets whose cut values are in a given range. In the second part, we reduce from the problem of finding all extreme sets in a given range of cut values to all extreme sets that satisfy the $2$-respecting property.

\subsection{Iterative Framework for Extreme Sets Algorithm}

We use an iterative framework to find all extreme sets of the graph. In fact, consider the following reformulation of this problem. Given a set of vertices $S\subsetneq V$, we need to find all extreme sets that are subsets of $S$ (including $S$ itself if it is an extreme set). We note that this problem is actually equivalent to the problem of finding all extreme sets in the graph. In one direction, an algorithm that finds all extreme sets also identifies those that are subsets of $S$. But, also conversely, we can add a dummy isolated vertex to the graph, and then set $S = V$ to find all extreme sets of the graph. 

%To find all extreme sets that are proper subsets of $S$, we first contract all the vertices in $V\setminus S$ into a single vertex $c$. This reduces our task to finding all extreme sets not containing the special vertex $c$ in the graph. 
We further refine the task of finding extreme sets contained in $S$ into finding extreme sets whose cut value is in the range $[\lambda, (1+\e)\lambda]$ for a fixed constant $\e > 0$. Here, $\lambda$ is the Steiner connectivity of $S$ after we contract $V\setminus S$ into a single vertex $c$. We call these {\em near-mincut extreme sets}.

\begin{defn}[Near-mincut Extreme Set]
Suppose we are given an undirected graph and a set of vertices $S$. Let $\lambda$ denote the Steiner connectivity of $S$ when $V\setminus S$ is contracted to a single vertex $c$. Given a fixed constant $\e > 0$ (whose precise value will be given in \Cref{thm:sparsify-small-deg}), a near-mincut extreme set $S'$ is an extreme set that is a subset of $S$ (i.e., $S'\subsetneq S$) and whose cut value satisfies $\delta(S') \in [\lambda, (1+\e)\lambda)$.
\end{defn}

In the rest of this section, we describe an algorithm to find all extreme sets contained in $S$ by iteratively using an algorithm that finds near-mincut extreme sets. To describe our algorithm, it is convenient to partition cuts based on a threshold $d$ into {\em $d$-strong} and {\em $d$-weak} cuts. 
\begin{defn}[$d$-Strong and $d$-Weak Cuts]
    A nonempty set of vertices $X\subsetneq V$ is said to be $d$-strong if the cut value $\delta(X) \ge d$, else it is said to be $d$-weak.
\end{defn}

%Call a vertex set $d$-weak if its cut value is less than $d$, and $d$-strong if its cut value is at least $d$. 
Note that the problem of finding all near-mincut extreme sets is equivalent to that of finding all $(1+\epsilon)\lambda$-weak extreme sets after contracting $V\setminus S$ into a single vertex $c$. In \Cref{alg:iterative}, we use a subroutine that returns all $(1+\e)\lambda$-weak extreme sets to obtain all extreme sets contained in $S$. Since the near-mincut extreme sets form a laminar family (by \Cref{lem:laminar}), these $(1+\epsilon)\lambda$-weak extreme sets induce a {\em canonical partition} of the vertices of $S$ defined below.

\begin{defn}[Canonical Partition]
Define an equivalence relation on the vertices of $S$ using the following rule: two vertices are related if and only if they are not separated by any of the $(1+\e)\lambda$-weak extreme sets contained in $S$. The equivalence classes corresponding to this equivalence relation form the canonical partition of $S$.
\end{defn}

The following lemma asserts that all $(1+\e)\lambda$-strong extreme sets contained in $S$ must {\em respect} this canonical partition.
\begin{lemma}\label{lem:canonical}
    Any $(1+\e)\lambda$-strong extreme sets contained in $S$ must be contained in some equivalence class of the canonical partition. %\textcolor{blue}{Jason: ``subset of a partition" is ambiguous since a partition is a set of sets. Maybe use equivalence class instead of subset?}
\end{lemma}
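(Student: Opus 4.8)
The plan is to argue by contradiction: suppose $X$ is a $(1+\e)\lambda$-strong extreme set contained in $S$ that is not contained in any single equivalence class of the canonical partition. Then there exist two vertices $u, v \in X$ that lie in different equivalence classes, which by definition of the canonical partition means there is some $(1+\e)\lambda$-weak extreme set $W \subseteq S$ separating $u$ and $v$; say $u \in W$ and $v \notin W$. First I would observe that $X$ and $W$ cross in the laminar sense: $X \cap W$ contains $u$ and is nonempty, $X \setminus W$ contains $v$ and is nonempty, and $W \setminus X$ may or may not be empty — but if $W \setminus X = \emptyset$ then $W \subsetneq X$, and since $W$ is $(1+\e)\lambda$-weak we have $\delta(W) < (1+\e)\lambda \le \delta(X)$, contradicting that $X$ is extreme (which requires $\delta(W) > \delta(X)$ for every nonempty proper subset $W$ of $X$). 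So in fact $X$ and $W$ properly cross: all three of $X \cap W$, $X \setminus W$, $W \setminus X$ are nonempty.

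Next I would apply posi-modularity to $X$ and $W$: $\delta(X \setminus W) + \delta(W \setminus X) \le \delta(X) + \delta(W)$. Since $X$ is extreme and $X \setminus W$ is a nonempty proper subset of $X$, we get $\delta(X \setminus W) > \delta(X)$. Combining, $\delta(W \setminus X) < \delta(W)$. Now $W \setminus X$ is a nonempty proper subset of $W$ (proper because $X \cap W \neq \emptyset$), so this contradicts the fact that $W$ is an extreme set, which demands $\delta(W \setminus X) > \delta(W)$.

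I expect the only subtlety to be the case analysis at the crossing step — specifically ruling out $W \subsetneq X$, where one must invoke the strength of $X$ (its cut value is at least $(1+\e)\lambda$) against the weakness of $W$ (its cut value is strictly below $(1+\e)\lambda$) to derive a contradiction with the extreme-set property of $X$. The symmetric subcase $X \subsetneq W$ cannot occur because $v \in X \setminus W$. Once proper crossing is established, the posi-modularity argument is essentially the same two-line computation used in the proof of Lemma \ref{lem:laminar}, applied asymmetrically: we only use that $X$ is extreme against its proper subset $X \setminus W$, and that $W$ is extreme against its proper subset $W \setminus X$. This shows no such $u, v$ can exist, so $X$ lies entirely within one equivalence class, as claimed.
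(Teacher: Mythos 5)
Your proof is correct and follows essentially the same route as the paper: the key step is identical (when the weak extreme set $W$ is a proper subset of the strong extreme set $X$, the inequality $\delta(W) < (1+\e)\lambda \le \delta(X)$ contradicts extremeness of $X$). The only cosmetic difference is that where the paper rules out the crossing configuration by citing \Cref{lem:laminar}, you re-derive that case directly via posi-modularity, which is just an inlined copy of the laminarity proof.
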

\begin{proof}
Suppose not, and let $u, v\in S$ be two vertices in different equivalence classes of the canonical partition that are both in some $(1+\e)\lambda$-strong extreme set $S'$. By definition of the equivalence relation, there must be some $(1+\e)\lambda$-weak extreme set $X\subsetneq S$ such that $u\in X, v\notin X$ (or vice-versa). By \Cref{lem:laminar}, it must be that $X\subsetneq S'$ since $u\in X\cap S'$. But, this violates the fact that $S'$ is an extreme set since $\delta(X) < (1+\e)\lambda \le \delta(S')$.
\end{proof}

This lemma allows us to recurse on the individual equivalence classes of the canonical partition in \Cref{alg:iterative}.

\begin{algorithm}
\caption{Iterative Framework for Extreme Sets}
\label{alg:iterative}
\SetKwInOut{Input}{Input}
\SetKwInOut{Output}{Output}
\Input{Graph $G=(V,E)$ and a set $S\subsetneq V$.}
\Output{The family of all extreme sets in $G$ that are contained in $S$.}
When $S$ is a singleton, return the singleton as the only extreme set contained in $S$.\\
Let $\lambda$ denote the Steiner connectivity of $S$ after contracting $V\setminus S$ into a single vertex $c$. Let $d=(1+\epsilon)\lambda$.\\
 Call the near-mincut extreme sets subroutine to find all $d$-weak extreme sets contained in $S$. This induces a canonical partiton of $S$ into subsets $S_1, S_2, \ldots, S_k$.\label{weak}\\
For each set $S_i$, recursively find all extreme sets contained in $S_i$.\\
Construct a laminar family of all extreme sets in the current call and all extreme sets found in the recursive calls. Return the laminar family.\label{simple-merge} 
%Remove extraneous solutions by a post-order traversal. \alert{This needs more explanation. Why is this needed?}
\end{algorithm}

\begin{figure}[t]
\centering
\includegraphics[width=.9\textwidth]{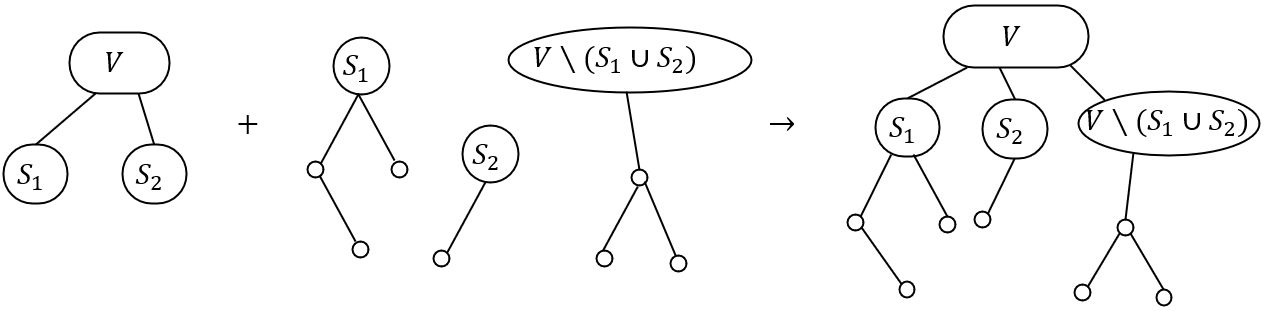}
%\includesvg[width=.47\textwidth]{simple-merge.svg}
\caption{Implementation of line~\ref{simple-merge} of \Cref{alg:iterative}. Left: $d$-weak extreme sets. Middle: extreme sets trees returned by recursive calls. Right: the merged extreme sets tree.}
\label{fig:simple-merge}
\end{figure}

\begin{theorem}
\Cref{alg:iterative} finds all extreme sets that are contained in $S$. %When $c$ is default dummy vertex, the algorthm finds all extreme sets.
\end{theorem}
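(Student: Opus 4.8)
The claim to prove is that Algorithm~\ref{alg:iterative} finds all extreme sets contained in $S$. The plan is to argue by induction on $|S|$, with the two directions (soundness: every set returned is an extreme set contained in $S$; completeness: every extreme set contained in $S$ is returned) handled separately, using \Cref{lem:canonical} as the engine that justifies the recursion.

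\begin{proof}
We argue by induction on $|S|$. The base case $|S|=1$ is line~1: the only extreme set contained in a singleton is the singleton itself, which is returned. So assume $|S|\ge 2$ and that the algorithm is correct on all strictly smaller terminal sets.

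\emph{Soundness.} Every set returned on line~\ref{simple-merge} is either a $d$-weak extreme set produced by the near-mincut subroutine (which by assumption returns only extreme sets contained in $S$), or is an extreme set contained in some $S_i$ produced by a recursive call (which is correct by the inductive hypothesis since $|S_i|\le|S|$; and $S_i$ is a proper subset because there are at least two classes whenever $S$ itself is not a single class, while if $S$ is a single class the recursion is on $S$ itself but with $V\setminus S_i$ strictly larger\,---\,here one should observe that either the near-mincut subroutine reports $S$ itself as $d$-weak and we are in a genuinely smaller subproblem after contracting, or progress is guaranteed; the cleanest framing is that the recursion parameter is the number of uncontracted vertices and it strictly decreases because $\lambda>\gamma$ forces at least one proper $d$-weak extreme set, but I would state this precisely where the recursion is set up). In all cases the returned set is an extreme set of $G$ contained in $S$, so the laminar family assembled on line~\ref{simple-merge} consists only of such sets. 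The fact that they form a laminar family is \Cref{lem:laminar}, so line~\ref{simple-merge} is well-defined.

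\emph{Completeness.} Let $X\subsetneq S$ (or $X=S$) be any extreme set of $G$ contained in $S$. If $\delta(X)<d=(1+\e)\lambda$, then $X$ is a $d$-weak extreme set contained in $S$ and is therefore returned by the near-mincut subroutine on line~\ref{weak}. Otherwise $X$ is $d$-strong, and by \Cref{lem:canonical} it is contained in some equivalence class $S_i$ of the canonical partition. Since $X$ is an extreme set of $G$ contained in $S_i$ and $|S_i|\le|S|$, the inductive hypothesis applied to the recursive call on $S_i$ guarantees that $X$ is returned by that call, hence is included in the laminar family on line~\ref{simple-merge}. Thus every extreme set contained in $S$ is returned, completing the induction.
\end{proof}

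The step I expect to require the most care is not the logical skeleton above\,---\,that is a routine two-direction induction\,---\,but rather making the recursion parameter bona fide, i.e.\ confirming that the subproblems fed to the recursive calls are \emph{strictly} smaller so the induction is well-founded. The subtlety is that the canonical partition could in principle be trivial (a single class equal to all of $S$), in which case the ``recursion on $S_i$'' is a recursion on $S$ itself. One must argue this cannot happen: if $S$ were not reported as a $d$-weak extreme set and the partition were trivial, then no $d$-weak extreme set separates any pair in $S$, so in particular the minimum Steiner cut of $S$ (which has value $\lambda<d$ and, taking a vertex-minimal one, is an extreme set) would be a separating $d$-weak extreme set\,---\,contradiction; and if $S$ \emph{is} a $d$-weak extreme set, it is output directly and there is nothing left to recurse on below it except proper subsets, so the partition must be proper. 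Pinning down this case analysis (and the precise bookkeeping of which vertices are contracted in the recursive calls, which the excerpt leaves slightly implicit) is where I would spend the bulk of the writeup; everything else is bookkeeping plus invocations of \Cref{lem:laminar} and \Cref{lem:canonical}.
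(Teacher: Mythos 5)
Your proof is correct and follows essentially the same route as the paper: induction on $|S|$, \Cref{lem:canonical} for the $d$-strong sets, and exactly the paper's argument that a vertex-minimal subset of $S$ achieving the Steiner min-cut value $\lambda$ is a $d$-weak extreme set separating vertices of $S$, so the canonical partition is nontrivial and the recursion is well-founded. The only point the paper makes explicitly that you leave implicit is soundness against ``spurious'' sets: an extreme set of the graph with $V\setminus S_i$ contracted is also extreme in the original graph, since the extreme-set condition depends only on cut values of subsets of $S_i$, which the contraction does not change.
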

\begin{proof}
The proof is by induction on the size of $S$. When $|S|=1$, the singleton set is the only extreme set.
Next consider $|S|\ge 2$. Any $d$-weak extreme set contained in $S$ will be found by the near-mincut extreme sets subroutine. Consider any $d$-strong extreme set $S'$. By \Cref{lem:canonical}, such an extreme set must be contained in one of the equivalence classes of the canonical partition. To apply the inductive hypothesis asserting that $S'$ will be revealed in a recursive call made by the algorithm, we need to show that the equivalence classes of the canonical partition are proper subsets of $S$, i.e., they are strictly smaller than $S$. This is because there is at least one cut of value $\lambda$ that is contained in $S$, since $\lambda$ is the Steiner connectivity of $S$ after contracting $V\setminus S$ into a single vertex $c$. Now, if we consider any minimal subset of $S$ of cut value $\lambda$, it must be an extreme set by definition. Therefore, the canonical partition is nontrivial, i.e., it contains at least two equivalence classes. Consequently, each set in the equivalence partition is a strict subset of $S$.

We also need to verify that any recursive call on a set $S_1\subsetneq S$ does not return spurious extreme sets, i.e., sets that are extreme in the graph where $V\setminus S_1$ is contracted, but are not extreme in the original graph. But, this can be ruled out based on the definition of extreme sets since the property only depends on the cut values of subsets of $S_1$ which are unaffected by the contraction.
\end{proof}

\eat{
\begin{theorem}
\Cref{alg:iterative} only outputs extreme sets.
\end{theorem}
\begin{proof}
Let $U$ be any set in the output. $U$ must be found by a recursive call on some $S$. That means $U\subseteq S$ is a extreme set in the graph where $V\setminus S$ is contracted. Notice that $S$ is not contracted, and the definition of extreme sets is local, so the conditions are the same after contracting $V\setminus S$, and $U$ is also a extreme set in the original graph.
\end{proof}
}

We now bound the running time for the overall algorithm.

\begin{theorem}
If we can find all near-mincut extreme sets in $\tO(m)$ time, then \Cref{alg:iterative} finds all extreme sets contained in $S$ in $\tO(m)$ time.
\end{theorem}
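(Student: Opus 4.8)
## Proof Proposal

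The plan is to bound the total running time by a standard recursion-tree argument, tracking how the edge count is partitioned across recursive calls and paying the $\tO(m)$ near-mincut extreme sets cost at each node. First I would set up the recursion: the algorithm is called on $(G, S)$, it spends $\tO(\text{edges of the relevant graph})$ computing $\lambda$ (the Steiner connectivity of $S$ after contracting $V \setminus S$ to a single vertex $c$), $\tO(m)$ calling the near-mincut subroutine, then recurses on the canonical partition classes $S_1, \dots, S_k$. For the recursive call on $S_i$, we contract $V \setminus S_i$ into a single vertex; crucially, $V \setminus S_i$ can be partitioned as $(V \setminus S) \cup (S \setminus S_i)$, and since the $S_j$'s for $j \neq i$ together with $V \setminus S$ all get contracted, the graph passed to the recursive call on $S_i$ has vertex set $S_i \cup \{c_i\}$ where $c_i$ is the single contracted supernode. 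The number of edges in this contracted graph is at most the number of edges of $G$ incident to $S_i$ plus (possibly) parallel edges to $c_i$ — but parallel edges can be merged, so it is $O(\min(m, |S_i|^2))$; in particular, summing the edge counts of the children, each original edge of $G[S]$ with both endpoints in the same class $S_i$ is charged to exactly one child, and edges crossing between classes become (at most one) edge to the contracted vertex in each of the two endpoints' children. So $\sum_i m_i \le m + O(n)$ where $m_i$ is the edge count of the $i$-th subproblem, after merging parallel edges.

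Next I would bound the recursion depth (or, better, avoid depending on depth). The cleanest route: observe that each recursive call is made on a set $S_i$ that is a \emph{strict} subset of $S$ (this was proved in the preceding theorem, using that the canonical partition is nontrivial because a minimal minimum Steiner cut is an extreme set of value $\lambda < (1+\e)\lambda = d$, hence a $d$-weak extreme set that splits $S$). Since $|S_i| < |S|$ and each $|S_i| \ge 1$, the recursion tree has at most $n$ leaves and at most $n$ internal nodes, so total depth is at most $n$ — which by itself is too weak. The right argument is to charge work per level: at each level of the recursion, the sets $S$ appearing are disjoint subsets of $V$, and the corresponding contracted graphs have edge counts summing to $O(m + n) = \tO(m)$ by the charging argument of the previous paragraph (the $+O(n)$ per level, over $O(\log n)$... ) — so I need the depth to be $\polylog(n)$, which is \emph{not} guaranteed by $|S_i| < |S|$ alone.

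Therefore the main obstacle, and the step I would focus on, is establishing a $\polylog(n)$ bound on the number of levels that contribute real work, or otherwise controlling the blowup. The standard fix (and I expect this is what the paper does) is that the contracted vertex $c$ contributes at most $O(n)$ spurious edges \emph{per call}, but more importantly that we should only recurse into $S_i$ when $S_i$ is not a singleton, and we should re-sparsify: before each recursive call, reduce the contracted graph to $\tO(|S_i|)$ edges via a cut sparsifier (Benczúr–Karger) so that $m_i = \tO(|S_i|)$. Then $\sum_i m_i = \tO(\sum_i |S_i|) = \tO(n)$ at every level, and since the total size $\sum |S|$ over all nodes at a fixed depth is at most $n$, the total work is $\tO(n)$ per level — but we still need bounded depth. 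The genuinely clinching observation must be that $\lambda$ strictly increases down the recursion (the Steiner connectivity of $S_i$ with $V\setminus S_i$ contracted exceeds $d = (1+\e)\lambda$, since we've removed all $d$-weak extreme cuts), so the connectivity parameter grows by a factor $(1+\e)$ at every level; as connectivities are polynomially bounded integers, the depth is $O(\e^{-1}\log n) = \tO(1)$. Combining: $\tO(1)$ levels, $\tO(m)$ (or $\tO(n)$ after sparsification) work per level — but one must be careful that the very first level genuinely costs $\tO(m)$ while all deeper levels cost $\tO(n)$ — giving the claimed $\tO(m)$ total. I would write the proof by formalizing (i) the edge-charging across children, (ii) the monotone growth of $\lambda$ by a $(1+\e)$ factor hence $\tO(1)$ depth, and (iii) the per-level work bound, then multiplying.
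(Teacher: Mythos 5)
Your proposal matches the paper's proof in its essentials: each recursion level contains $O(m)$ edges across all subproblems (every edge of $G$ survives in at most two of them), and the depth is $O(\e^{-1}\log n)$ because $\lambda(S_i)\ge(1+\e)\lambda(S)$, which the paper justifies exactly as you sketch --- a minimal low-cut subset of $S_i$ would be a $(1+\e)\lambda$-weak extreme set splitting the equivalence class, a contradiction. The Bencz\'ur--Karger re-sparsification you contemplate is unnecessary (with $\tO(1)$ depth, $\tO(m)$ work per level already suffices), and the only step you leave unaccounted for is the line-\ref{simple-merge} assembly of the laminar family, which the paper implements in $O(n)$ time per call by grafting the recursively returned trees onto the $d$-weak extreme sets tree.
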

\begin{proof}
In each recursive level, the uncontracted vertices form a disjoint partition of $S$. Thus, each edge of the graph appears in at most 2 subproblems. So each recursive level has $O(m)$ edges across all subproblems, and therefore, takes $\tO(m)$ time by induction. 

To bound the depth of the recursion, we compare the value of $\lambda$ between a subproblem with set $S$ (call this $\lambda(S)$) and its child subproblem with set $S_i$ (call this $\lambda(S_i)$). We claim: $\lambda(S_i)\ge (1+\e)\lambda(S)$. Suppose not; then, there is a proper subset of $S_i$ that has cut value $< (1+\e)\lambda$. Now, any minimal subset (call it $S'_i$) of $S_i$ with cut value $< (1+\e)\lambda$ must be an extreme set by definition. But, now if we choose two vertices $u, v\in S_i$ where $u\in S'_i, v\notin S'_i$, then $u$ and $v$ cannot be in the same equivalence class of the canonical partition since that would contradict the fact that $S'_i$ is a $(1+\e)\lambda$-weak extreme set contained in $S$. This implies that $\lambda(S_i)\ge (1+\e)\lambda(S)$. This bounds the depth of recursion in \Cref{alg:iterative} to $O(\e^{-1}\log n)$ since the edge weights are polynomially bounded. 

Finally, we need to give an implementation of line~\ref{simple-merge} of \Cref{alg:iterative} (see \Cref{fig:simple-merge}). We map each set of the canonical partition to a unique node in the $d$-weak extreme sets tree (call it $T$) returned by line~\ref{weak}. This can be done naturally by mapping every vertex in $S$ to the smallest extreme set that it belongs to among the $d$-weak ones. (All vertices in $S$ that are not in any $d$-weak extreme set are mapped to the root representing $V$.) Note that by definition of the canonical partition, the recursive calls are on sets of graph vertices that are mapped to the same node in $T$. Consider a recursive call for a set $X$. $X$ is mapped to a node $x$ representing $X'\supseteq X$ in $T$. The recursive call returns an extreme sets tree $T'$ whose root represents $X$. If $X\subsetneq X'$, we attach $T'$ as a child of $x$ in $T$; If $X=X'$, we attach the children of the root of $T'$ as children of $x$ in $T$. %The trees in the forest returned by the recursive call on the set corresponding to the dummy node $D$ are added as distinct trees in forest $T$.
Note that this can done in $O(n)$ time across all the recursive calls because the corresponding extreme set trees are disjoint.

The total time complexity of \Cref{alg:iterative} is then given by $\tO(m)$.
\end{proof}

\subsection{Sparsification and Tree Packing}
We further reduce near-mincut extreme sets to 2-respecting extreme sets via tree packing. We start with the following uniform sampling theorem.

\begin{theorem}[\cite{Karger99}]
\label{lem:karger-sample}
Given a weighted undirected graph $G$ with min-cut value $\lambda$ and any constant $\epsilon \in (0, 1)$, we can construct in $O(m)$ time a subgraph $H$ such that the following holds whp: for every cut $S$ in $H$, its value in $H$ (denoted $\delta_H(S)$) and its value in $G$ (denoted $\delta(S)$) are related by $\delta_H(S)\in[(1-\epsilon) p\cdot \delta(S), (1+\epsilon) p\cdot \delta(S)]$, where $p=O\left(\frac{\log n}{\lambda}\right)$. Note that this implies that the min-cut value in $H$ is $O(\log n)$.
\end{theorem}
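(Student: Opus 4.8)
The plan is to reduce to the unweighted case and then apply a Chernoff bound over all cuts, controlling the union bound via Karger's cut-counting theorem. First I would handle weights: since edge weights are polynomially bounded integers, replace each edge of weight $w_e$ by $w_e$ parallel unit-weight edges; this only blows up the edge count by a $\poly(n)$ factor (which is harmless for the probabilistic argument, though to keep the $O(m)$ running time one must actually implement the sampling implicitly, e.g.\ by sampling each edge's multiplicity from a binomial distribution rather than materializing the parallel copies). So assume $G$ is unweighted with min cut $\lambda$. The construction of $H$ is: include each (unit) edge independently with probability $p = c\,\epsilon^{-2}\lambda^{-1}\log n$ for a suitable constant $c$, and give each included edge weight $1/p$ so that $\mathbb{E}[\delta_H(S)] = \delta(S)$ for every cut $S$. (If $p \ge 1$, i.e.\ $\lambda = O(\epsilon^{-2}\log n)$ already, just take $H = G$.)

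The core estimate is: for a \emph{single} cut $S$ with $\delta(S) = k \ge \lambda$, the scaled value $\delta_H(S)$ is $1/p$ times a sum of $k$ independent Bernoulli$(p)$ variables, so by the multiplicative Chernoff bound $\Pr[\,|\delta_H(S) - k| > \epsilon k\,] \le 2\exp(-\Omega(\epsilon^2 p k)) = 2\exp(-\Omega(\epsilon^2 p \lambda \cdot (k/\lambda))) \le n^{-\Omega(c\,k/\lambda)}$. Now I would union-bound over all cuts, grouped by value. By Karger's cut-counting theorem, the number of cuts of value at most $\alpha\lambda$ is at most $n^{2\alpha}$ (more precisely $O(n^{\lfloor 2\alpha\rfloor})$). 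Summing over the dyadic ranges $k \in [\alpha\lambda, 2\alpha\lambda)$ for $\alpha = 1, 2, 4, \dots$, the number of cuts in each range is $n^{O(\alpha)}$ while each fails with probability $n^{-\Omega(c\alpha)}$; choosing $c$ a large enough constant makes the product $n^{-\Omega(\alpha)}$, and the geometric sum over $\alpha$ is $o(1)$. Hence whp \emph{every} cut $S$ satisfies $\delta_H(S) \in [(1-\epsilon)\delta(S), (1+\epsilon)\delta(S)]$, which after reintroducing the sampling probability $p$ is exactly the claimed statement (the $p$ appears because the theorem as stated in the excerpt does not rescale edges; with rescaling $H$ has the stated factor $p$ built in — either normalization is fine and I would just be consistent). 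Since $\mathbb{E}[\delta_H(V\setminus\{v\})] = \delta(v) \le n\lambda$-ish bounds fail, the min cut of $H$ concentrates around $p\lambda = O(\epsilon^{-2}\log n)$, giving the final remark.

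The running time claim is the one place requiring a little care rather than being routine: naively expanding weights into parallel edges is not $O(m)$-time, so I would sample, for each weighted edge $e$, a $\mathrm{Binomial}(w_e, p)$ number of surviving copies (or, when $p$ is tiny, note that $w_e p$ is small and the number of survivors is concentrated near $w_e p$), which can be done in $O(1)$ time per edge with standard samplers, giving total time $O(m)$.

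The main obstacle is the union bound: one cannot afford to union-bound naively over the (up to) exponentially many cuts with a fixed failure probability, so the argument \emph{must} exploit that the Chernoff tail for a cut of value $k$ decays like $n^{-\Omega(k/\lambda)}$ \emph{together with} the cut-counting bound $n^{O(k/\lambda)}$ on how many such cuts exist, and these must be balanced by taking the sampling constant $c$ large. Everything else — the Chernoff computation, the reduction from weighted to unweighted, the min-cut corollary — is standard once this balancing is set up correctly.
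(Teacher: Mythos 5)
Your proposal is essentially the standard argument behind this result: the paper itself gives no proof, importing the statement directly from \cite{Karger99}, and Karger's original proof is exactly your combination of per-cut Chernoff bounds with the cut-counting theorem (at most $n^{2\alpha}$ cuts of value at most $\alpha\lambda$), balanced by taking the sampling constant large, plus binomial sampling per weighted edge to keep the construction within $O(m)$ time. So your reconstruction is correct and matches the cited source's approach; the only blemish is the garbled sentence about the min-cut of $H$, which follows immediately from applying the upper bound $\delta_H(S)\le(1+\epsilon)p\,\delta(S)$ to a minimum cut of $G$.
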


\eat{
\begin{lemma}[\cite{NagamochiI92}]
\label{lem:NI-forest}
Given an unweighted multigraph $G$, in $O(k|E|)$ time we can construct a subgraph $H$ with $O(nk)$ edges, such that for every vertices pair $(u,v)$, $\lambda_H(u,v)\ge\min\{\lambda_G(u,v), k\}$.
\end{lemma}
}

First, we use this theorem to prove the following lemma on sampling graphs to preserve near-mincut extreme sets.
\begin{lemma}
\label{thm:sparsify-small-deg}
Given a weighted undirected graph $G = (V = S\cup \{c\}, E)$, we can construct in $O(m)$ time a subgraph $H$ where the following hold whp: (a) the Steiner min-cut value of $S$ in graph $H$ is $\lambda_H = O(\log n)$, and (b) every near-mincut extreme set in $G$ has cut value at most $1.1\lambda_H$ in $H$.
\end{lemma}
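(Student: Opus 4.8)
The plan is to reduce to Karger's uniform sampling theorem (\Cref{lem:karger-sample}), splitting on the size of $\delta_G(c)$ relative to $\lambda$. Fix a small constant $\alpha := 1/20$ for internal use, and fix the constant $\eps$ of the statement small enough that $\tfrac{(1+\eps)^2}{1-\eps}\le 1.1$ and $\tfrac{(1+\eps)^2}{(1-\alpha)(1-\eps)}\le 1.1$ (e.g.\ $\eps = 1/100$). Write $D := \delta_G(c)$, computable in $O(m)$ time. Two basic facts: since $\lambda$ is the Steiner connectivity of $S$, every nonempty $X\subsetneq S$ has $\delta_G(X)\ge\lambda$; and the only cut of $G$ not separating two vertices of $S$ is $(\{c\},S)$, of value $D$, so $\mathrm{mincut}(G)=\min(\lambda,D)$.

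\emph{Case 1: $D\ge\alpha\lambda$.} Then $\mathrm{mincut}(G)\ge\alpha\lambda=\Theta(\lambda)$, so we apply \Cref{lem:karger-sample} to $G$ with min-cut parameter $\alpha\lambda$, obtaining rate $p=\Theta(\log n/\lambda)$ and a skeleton $H$ on $S\cup\{c\}$ with $\delta_H(X)\in[(1-\eps)p\,\delta_G(X),(1+\eps)p\,\delta_G(X)]$ for every cut $X$, whp. The minimum Steiner cut of $S$ in $G$ has value $\lambda$, so $\lambda_H\in[(1-\eps)p\lambda,(1+\eps)p\lambda]=\Theta(\log n)$, giving (a). For a near-mincut extreme set $S'\subsetneq S$ we have $\delta_G(S')<(1+\eps)\lambda$, hence $\delta_H(S')<(1+\eps)^2p\lambda\le\tfrac{(1+\eps)^2}{1-\eps}\lambda_H\le 1.1\lambda_H$, giving (b).

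\emph{Case 2: $D<\alpha\lambda$.} Now $\mathrm{mincut}(G)=D$ can be far below $\lambda$, so sampling $G$ at rate $\Theta(\log n/D)$ would leave the Steiner min-cut of $S$ at value $\Theta(\log n\cdot\lambda/D)=\omega(\log n)$ and break (a); instead we delete $c$. Let $G_S=G-c$. Every nonempty $X\subsetneq S$ satisfies $\delta_{G_S}(X)=\delta_G(X)-\delta_G(X,\{c\})\ge\lambda-D>(1-\alpha)\lambda$, so $\mathrm{mincut}(G_S)>(1-\alpha)\lambda$; and a minimum Steiner cut $X^*\subsetneq S$ of $G$ gives $\mathrm{mincut}(G_S)\le\delta_{G_S}(X^*)\le\delta_G(X^*)=\lambda$. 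Apply \Cref{lem:karger-sample} to $G_S$ with parameter $(1-\alpha)\lambda$, getting rate $p'=\Theta(\log n/\lambda)$ and a skeleton $H$ on vertex set $S$ with $\delta_H(X)\in[(1-\eps)p'\delta_{G_S}(X),(1+\eps)p'\delta_{G_S}(X)]$ whp. Since $V(H)=S$, $\lambda_H=\mathrm{mincut}(H)\in[(1-\eps)(1-\alpha)p'\lambda,(1+\eps)p'\lambda]=\Theta(\log n)$, giving (a). For a near-mincut extreme set $S'\subsetneq S$, $\delta_{G_S}(S')\le\delta_G(S')<(1+\eps)\lambda$, so $\delta_H(S')<(1+\eps)^2p'\lambda\le\tfrac{(1+\eps)^2}{(1-\alpha)(1-\eps)}\lambda_H\le 1.1\lambda_H$, giving (b). In both cases the only randomness is from \Cref{lem:karger-sample} (failure probability $o(1)$) and the runtime is $O(m)$.

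The hard part is exactly the degree-of-$c$ complication flagged in the overview: $\mathrm{mincut}(G)$ can equal $\delta_G(c)\ll\lambda$, and neither naive sampling rate works --- rate $\Theta(\log n/\delta_G(c))$ over-sparsifies the Steiner cuts (killing (a)), while rate $\Theta(\log n/\lambda)$ loses concentration on the light cut around $c$ and, worse, on the potentially many cuts whose value is a large multiple of $\delta_G(c)$. The structural observation that unlocks it is the dichotomy above: if $\delta_G(c)$ is a constant fraction of $\lambda$ then $\mathrm{mincut}(G)=\Theta(\lambda)$ and plain Karger sampling suffices, whereas if $\delta_G(c)$ is a tiny fraction of $\lambda$ then deleting $c$ perturbs every Steiner cut of $S$ by at most $\delta_G(c)<\alpha\lambda$, so $G-c$ has min-cut $(1\pm\alpha)\lambda$ and near-mincut extreme sets essentially retain their value. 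The rest is arithmetic --- choosing $\alpha$ and $\eps$ so the $(1\pm\eps)$ sampling distortion composes with the $(1\pm\alpha)$ deletion distortion inside the $1.1$ slack --- plus the minor point that $\lambda$ is supplied by the iterative framework (or may be replaced by the $\tO(m)$-computable quantities $\mathrm{mincut}(G)$ and $\mathrm{mincut}(G-c)$, which pin $\lambda$ down to a constant factor).
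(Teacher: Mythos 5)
Your proposal is correct and follows essentially the same route as the paper: split on whether $\delta_G(c)$ is at least a constant fraction of $\lambda$, apply Karger's uniform sampling (\Cref{lem:karger-sample}) to $G$ in the heavy case and to $G-c$ in the light case, and verify that the composed $(1\pm\eps)$ sampling and deletion distortions stay within the $1.1$ slack. The only differences are cosmetic (you use a separate threshold constant $\alpha$ where the paper reuses $\eps=0.01$ for both roles).
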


\begin{proof}
Let $\lambda$ be the Steiner min-cut value of vertices $S$ in graph $G$. Choose $\epsilon=0.01$. Let $\delta(c)$ denote the value of the singleton cut $\{c\}$ in graph $G$.

When $\delta(c)\ge \epsilon \lambda$, we use \Cref{lem:karger-sample} to get a graph $H_1$ with min cut value $\lambda_1=O(\log n)$. We have 
$$\lambda_1\ge (1-\epsilon) p\cdot \min\{\lambda,\delta(c)\}\ge \epsilon(1-\epsilon)p\lambda,$$ 
which implies that $p\lambda=O(\log n)$. %Then apply \Cref{lem:NI-forest} with $k=p\lambda=O(\log n/\epsilon^2)$ on the sparsifier to get $H_2$ with edge size $O(n\log n/\epsilon^2)$.
The Steiner min-cut value of $S$ in $H$ is 
$$\lambda_H\in [(1-\epsilon)p\lambda,(1+\epsilon)p\lambda]=O(\log n).$$ %The Steiner min cut value in $H_2$ is $\lambda_H\ge \min\{\lambda_2, k\}\ge (1-\epsilon)p\lambda$.
For any near-mincut extreme set $S'$, we have $\delta(S')\in[\lambda,(1+\epsilon)\lambda)$, which implies 
$$\delta_{H}(S')\le (1+\epsilon)p\cdot \delta(S')\le (1+\epsilon)^2p\lambda \le \frac{(1+\epsilon)^2}{(1-\epsilon)}\lambda_H\le 1.1\lambda_H.$$
(The first inequality is by \Cref{lem:karger-sample} and the second inequality by property of near-mincut extreme sets.)

When $\delta(c)<\epsilon\lambda$, let $G'$ be the graph formed by removing $c$ from $G$. For any Steiner cut $S'$ separating $S$, we have $\delta_{G'}(S')\ge \delta(S')-\delta(c)\ge (1-\epsilon)\lambda$. %For any near-mincut extreme set $S$, notice that $\lambda\le \delta(S+c)=\delta(S)+\delta(c)-2\delta(S,c)\le (1.1 +\epsilon)\lambda-2\delta(S,c)$, so $\delta(S,c)\le (0.05+\frac{\epsilon}{2})\lambda$. Then $\delta_{G'}(S)\le \delta(S)$, $\delta_{G'}(S)=\delta(S)-\delta(S,c)\ge \delta(S)-(0.05+\frac{\epsilon}{2})\delta(S)/(1+\epsilon)\ge 0.94\delta(S)$.
Use \Cref{lem:karger-sample} on $G'$ to get a subgraph $H$ %then apply \Cref{lem:NI-forest} with $k=p\lambda$ on $H_1$ to get $H_2$.Similar to the first case, edge size and min cut value are satisfied, and
with min-cut value $\lambda_H=O(\log n)$. Note that
$$\lambda_H\ge (1-\epsilon)p\lambda_{G'}\ge (1-\epsilon)^2 p\lambda.$$ Now, for any near-mincut extreme set 
$S''$, we have
\begin{align*}
    \delta_{H_1}(S'') &\le (1+\epsilon)p\cdot \delta_{G'}(S'') \le (1+\epsilon)p\cdot \delta(S'')\\
    &\le (1+\epsilon)^2p\lambda \le \frac{(1+\epsilon)^2}{(1-\epsilon)^2}\lambda_H\le 1.1\lambda_H.
\end{align*}
(The first inequality is by \Cref{lem:karger-sample}, the second inequality by the fact that $G'$ is a subgraph of $G$, and the third inequality by property of near-mincut extreme sets.)
\end{proof}

So far, we have constructed a subgraph $H$ of $G$ where every near-mincut extreme set has value at most $1.1\lambda_H$, where $\lambda_H$ is the Steiner connectivity of $S$ in $H$. We now pack a set of disjoint spanning trees in $H$. The next theorem follows from the work of Bang-Jensen {\em et al.}~\cite{Bang-JensenFJ95} and can also be derived from earlier work by Edmonds~\cite{Edmonds73}. We state a version of the theorem from \cite{ColeH03, BhalgatHKP07}. First, we need the following definition:
\begin{definition}
    Given a directed graph $G$ and a vertex $r$, a directionless tree rooted at $r$ is a (possibly non-spanning) tree of directed edges that is a subgraph of $G$, and where all edges incident $r$ are directed away from $r$. All other edges can have arbitrary direction.
\end{definition}

\begin{theorem}[\cite{ColeH03, BhalgatHKP07}]
\label{thm:tree-packing}
Given an Eulerian directed graph $G$, a root vertex $r$ and a value $C$, there exists $C$ edge-disjoint directionless trees rooted at $r$, such that the in--degree of every vertex $v\not= r$ in the union of all the trees is $\min\{\lambda(r,v), C\}$, where $\lambda(r,v)$ is the value of minimum $r$-$v$ cut. Such a tree packing can be obtained in $\tO(mC)$ time.
\end{theorem}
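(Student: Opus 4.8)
The plan is to separate the statement into its combinatorial content — that a packing with the stated in-degree profile exists — and its algorithmic content — that it can be found in $\tO(mC)$ time — and argue each in turn. Both facts are due to the cited works; what follows is how I would reconstruct them.

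For existence I would start from Edmonds' disjoint branchings theorem, which produces $C$ edge-disjoint spanning $r$-arborescences whenever $d^-_G(S)\ge C$ for every nonempty $S\subseteq V\setminus\{r\}$. In our setting some vertices can have $\lambda(r,v)<C$, so spanning arborescences need not exist, and that is exactly why the trees in the statement are allowed to be \emph{directionless} — only the edges at $r$ are forced to point away from $r$. The extra leverage is the Eulerian hypothesis: every set $S$ with $r\notin S$ satisfies $d^-_G(S)=d^+_G(S)$, so $\lambda(r,v)=\lambda(v,r)$ and, crucially, one may perform Mader-type admissible splitting-off at non-root vertices while preserving all local arc-connectivities. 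Concretely I would induct on the number of edges not incident to $r$: pick a vertex $w\ne r$ incident to such an edge, take a complete admissible splitting-off at $w$ (which keeps the graph on $V\setminus\{w\}$ Eulerian and preserves $\lambda(r,v)$ for all $v\ne w$), recurse to obtain the $C$ trees of the smaller graph using only out-edges at $r$, and then reinsert $w$ by subdividing, for each used short-cut arc $u\to v$ coming from the pair $(u,w),(w,v)$, the path $u\to w\to v$. The base case is the ``star'' where every edge touches $r$, for which one simply assigns the $i$-th parallel copy of $r\to v$ to tree $T_i$. Uncrossing tight sets gives the in-degree upper bound $\min\{\lambda(r,v),C\}$, and the matching lower bound holds because that many edge-disjoint $r$--$v$ paths survive in the union.

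For the running time I would follow the Cole--Harvey / Bhalgat--Hariharan--Kavitha--Panigrahi strategy: build the packing a small number of trees at a time, restore an Eulerian instance after each step with cheap bookkeeping, and use a \emph{fast} edge-splitting subroutine so that producing each tree costs only $\tO(m)$ amortized, for $\tO(mC)$ total. The hard part — where essentially all the difficulty lies — is implementing admissible splitting-off without a max-flow call per edge: the fix is to split off many edges simultaneously, guided by Euler tours of the current graph together with a potential argument showing each edge changes ``role'' only $\polylog(n)$ times. For the purposes of this paper I would simply invoke~\cite{ColeH03,BhalgatHKP07} for this subroutine and for the overall time bound; given it, the remaining pieces (extracting the Edmonds-style partial packing and the reinsertion bookkeeping) are routine.
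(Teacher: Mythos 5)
The paper itself gives no proof of this theorem: it is imported as a black box, with the remark that it follows from Bang-Jensen, Frank and Jackson~\cite{Bang-JensenFJ95} and can be derived from Edmonds~\cite{Edmonds73}, stated in the form used by~\cite{ColeH03,BhalgatHKP07}. So the part of your write-up that ultimately matters for this paper --- invoking~\cite{ColeH03,BhalgatHKP07} for the fast splitting-off subroutine and the $\tO(mC)$ bound --- is exactly what the authors do, and your base case (the star at $r$, assigning the $i$-th parallel copy of $r\to v$ to tree $T_i$, which indeed gives in-degree $\min\{\lambda(r,v),C\}$ there) is fine.

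However, if your existence argument is read as a proof rather than a pointer to the literature, the inductive step has a genuine gap at the reinsertion of $w$. After a complete admissible splitting-off at $w$ and recursing, you propose to restore $w$ by subdividing every \emph{used} shortcut arc $u\to v$ into $u\to w\to v$. Two things go wrong. First, the recursion only controls the in-degrees of vertices other than $w$; the $C$ trees of the smaller graph are free to avoid the shortcut arcs created at $w$ entirely, so nothing forces the in-degree of $w$ in the union to be $\min\{\lambda(r,w),C\}$ --- the degree requirement at the split-off vertex is precisely the part that needs an argument, and it is absent. Second, if two shortcut arcs arising from the pairing at $w$ are used by the \emph{same} tree, subdividing both through $w$ makes that tree contain $w$ twice along two different paths, i.e., it creates a cycle, so the object is no longer a tree; you would need to argue that the packing of the smaller graph can be chosen (or repaired) so that each tree uses at most one shortcut arc per split-off vertex, or to route the trees through $w$ explicitly. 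The proofs in~\cite{Bang-JensenFJ95} (and the derivation from Edmonds' branching theorem used in~\cite{ColeH03,BhalgatHKP07}) are organized around exactly this issue, controlling how many trees must enter the split-off vertex; your sketch skips it. Since the paper treats the theorem as cited prior work, this does not affect the paper, but the sketch as written does not yet constitute a proof of the existence claim.
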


For an undirected graph, we can replace each undirected edge with two directed edges oriented in opposite direction and apply the above theorem to obtain the following corollary.

\begin{corollary}
\label{cor:tree-packing}
Given an undirected graph $G$, a root vertex $r$ and a value $C$, there exists $C$ (possibly non-spanning) trees rooted at $r$, such that (a) each vertex $v\ne r$ appears in at least $\min\{\lambda(r,v), C\}$ trees, and (b) every edge appears in at most two trees. Such a tree packing can be obtained in $\tO(mC)$ time.
\end{corollary}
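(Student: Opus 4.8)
The plan is to reduce the undirected graph $G$ to an Eulerian directed graph $G'$, apply \Cref{thm:tree-packing} to $G'$, and then translate the resulting directionless-tree packing back to the undirected setting. First I would replace every undirected edge $\{u,v\}$ of weight $w$ with two antiparallel directed edges $(u,v)$ and $(v,u)$, each of weight $w$ (handling integer weights by multiplicity or by keeping weights symbolic). The resulting digraph $G'$ is Eulerian: at each vertex the in-degree equals the out-degree, since each undirected edge contributes exactly one unit of in-degree and one unit of out-degree. Moreover, for any ordered pair $(r,v)$, the value of a minimum $r$-$v$ cut in $G'$ equals $\delta_G$ of the corresponding vertex set, which equals the undirected min $r$-$v$ cut value $\lambda(r,v)$ — because an $r$-$v$ cut $(X, V\setminus X)$ with $r\in X$, $v\notin X$ has directed capacity $\sum_{\{a,b\}: a\in X, b\notin X} w(\{a,b\}) = \delta_G(X)$, the forward directed copies being exactly the undirected crossing edges.

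Next I would invoke \Cref{thm:tree-packing} on $G'$ with root $r$ and value $C$, obtaining $C$ edge-disjoint directionless trees $T_1,\dots,T_C$ in $G'$ such that the in-degree of each $v\ne r$ in $\bigcup_i T_i$ is $\min\{\lambda(r,v), C\}$. Now I forget the edge directions: each $T_i$ becomes an undirected (possibly non-spanning) tree in $G$, rooted at $r$. For property (a), note that in a tree, every non-root vertex $v$ that belongs to $T_i$ has in-degree exactly $1$ in $T_i$ (there is a unique edge on the path from $r$ towards $v$ entering $v$, and the root-incident edges point away from $r$, which forces all edges of the tree to be oriented away from $r$ along tree paths — this is where the ``directionless tree'' definition's constraint is used). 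Hence the total in-degree of $v$ across all the $T_i$ equals the number of trees containing $v$, which is exactly $\min\{\lambda(r,v),C\}$. For property (b), edge-disjointness of $T_1,\dots,T_C$ in $G'$ means each directed edge of $G'$ is used at most once; since each undirected edge of $G$ spawned only two directed copies, it can appear in at most two of the $T_i$. The running time is $\tO(mC)$ by \Cref{thm:tree-packing}, since $|E(G')| = 2m$.

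The only point requiring a little care — and the step I'd flag as the main obstacle — is the claim that in a directionless tree rooted at $r$, every non-root vertex has in-degree exactly $1$. One must verify that the orientation constraint (edges incident to $r$ point away from $r$) propagates: suppose some vertex $v\ne r$ had in-degree $0$ or $\ge 2$ in $T_i$; following the unique tree path from $r$ to $v$ and using that the first edge leaves $r$, an inductive argument along the path shows each edge must be oriented away from $r$ (otherwise some internal vertex would have in-degree $\ge 2$ or the far endpoint in-degree $0$, and a counting argument — total in-degree of a tree on $k$ vertices is $k-1$, matching the number of non-root vertices only if each has in-degree exactly $1$ — completes it). Once this structural fact is in hand, everything else is bookkeeping.
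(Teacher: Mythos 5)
Your overall route is the same as the paper's one-line derivation: bidirect every undirected edge to obtain an Eulerian digraph, observe that directed $r$-$v$ min-cut values equal the undirected $\lambda(r,v)$, invoke \Cref{thm:tree-packing}, and get property (b) and the $\tO(mC)$ running time from edge-disjointness plus the fact that each undirected edge has only two directed copies. All of that matches the paper and is fine. The problem is exactly the step you flagged as the main obstacle: the claim that in a directionless tree rooted at $r$ every non-root vertex of the tree has in-degree exactly $1$ is false under the definition as stated. The definition constrains only the edges incident to $r$; all other edges ``can have arbitrary direction,'' so nothing forces orientations away from $r$ along tree paths. Concretely, the three-vertex path with edges oriented $r\to a$ and $b\to a$ is a legitimate directionless tree in which $a$ has in-degree $2$ and $b$ has in-degree $0$. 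Your counting argument does not repair this: the total in-degree of a directionless tree on $k$ vertices is $k-1$, distributed over the $k-1$ non-root vertices, which pins down the average in-degree, not each individual value.

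This is a genuine gap because property (a) needs ``number of trees containing $v$'' to be at least the union in-degree $\min\{\lambda(r,v),C\}$, and that inference requires that no single tree contribute more than $1$ to $v$'s in-degree; with the stated definition, a packing in which $v$ has in-degree $2$ in one tree and is absent from another would satisfy the in-degree guarantee of \Cref{thm:tree-packing} while violating (a). The missing ingredient cannot be derived from the definition of a directionless tree; it is a property of the packings guaranteed by the cited constructions of Cole--Hariharan and Bhalgat et al., where each vertex has in-degree at most $1$ in each tree (equivalently, the theorem should be read as saying $v$ belongs to $\min\{\lambda(r,v),C\}$ of the trees). The paper's terse derivation implicitly uses that reading; your proof instead tries to prove it from the definition, and that argument does not go through.
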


Using this undirected tree packing, we can now reduce the problem of finding near-mincut extreme sets to finding extreme sets that $2$-respect a tree. We first define the $2$-respecting property.

\begin{definition}
    Given an undirected graph $G = (V, E)$ and a tree $T$ that is a subgraph of $G$ ($T$ may not be spanning), an extreme set $S\subsetneq V$ in $G$ is said to {\em $2$-respect} $T$ if there are at most two edges from the cut $(S, V\setminus S)$ that appear in $T$.
\end{definition}

Now, we are ready to further reduce near-mincut extreme sets to the following three problems:
\begin{itemize}
    \item Finding 2-respecting extreme sets: given a weighted undirected graph $G$ on vertices $S\cup \{c\}$, and a subgraph $T$ that is a tree spanning $S$ (it may or may not contain $c$), find a laminar family of vertex sets that contains all extreme sets in $G$ that $2$-respect $T$ and are subsets of $S$. 
    \item Merging two laminar trees: given a weighted undirected graph $G$, and two laminar families of vertex sets, merge these laminar families by selecting a single laminar collection of vertex sets from the two families that includes all extreme sets in $G$ that are in these families.
    \item Removing non-extreme sets from a laminar family: given a weighted undirected graph $G$ on vertices $S\cup \{c\}$, and a laminar family of vertex sets containing all near-mincut extreme sets of $G$ (but possibly other sets), find the near-mincut extreme sets of $G$ and discard the other sets that are not near-mincut extreme sets.
\end{itemize}

\begin{theorem}\label{thm:2-resp}
Suppose that given a weighted undirected graph $G$ on vertices $S\cup \{c\}$ containing $m$ edges, there are algorithms that can find 2-respecting extreme sets, merge two laminar trees, and remove non-extreme sets from a laminar family in $\tO(m)$ time. Then we can find whp all near-mincut extreme sets in $\tO(m)$ time. %in a graph $G$ on vertices $S\cup\{c\}$ containing $m$ edges in $\tO(m)$ time.
\end{theorem}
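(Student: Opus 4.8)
The plan is to reduce to the three assumed subroutines by packing $O(\log n)$ trees in a sparsifier of $G$ and routing each near-mincut extreme set to one of those trees. First I would apply \Cref{thm:sparsify-small-deg} to $G$ to obtain, in $O(m)$ time, a subgraph $H\subseteq G$ with $|E(H)|\le m$ such that whp the Steiner connectivity of $S$ in $H$ is $\lambda_H=O(\log n)$ and every near-mincut extreme set of $G$ has value at most $1.1\lambda_H$ in $H$. Then I would fix an arbitrary root $r\in S$ --- crucially a vertex of $S$, \emph{not} the contracted vertex $c$ --- set $C:=\lambda_H$, and apply \Cref{cor:tree-packing} to $H$ with root $r$ and parameter $C$, obtaining in $\tO(mC)=\tO(m)$ time trees $T_1,\dots,T_C$ with each edge in at most two of them and each $v\neq r$ in at least $\min\{\lambda_H(r,v),C\}$ of them. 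The point of rooting inside $S$ is that for any $v\in S$ every $r$--$v$ cut separates two terminals of $S$, hence is an $S$-separating cut, so $\lambda_H(r,v)\ge\lambda_H=C$; therefore every $v\in S$ lies in all $C$ trees, and so each $T_i$ spans $S$ (possibly without touching $c$), which is exactly the input format the \emph{Finding $2$-respecting extreme sets} subroutine expects.

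The combinatorial heart is the claim that every near-mincut extreme set $2$-respects at least one $T_i$. I would prove this by a double-counting/pigeonhole estimate. Let $S'$ be a near-mincut extreme set, so $\emptyset\neq S'\subsetneq S$ and, by the sparsifier guarantee together with the integrality of $H$'s edge weights, the cut $(S',V\setminus S')$ contains at most $1.1\lambda_H$ edges in $H$. Since each such edge lies in at most two trees, the total number of (tree, cut-edge) incidences is at most $2.2\lambda_H$. On the other hand every $T_i$ spans $S$, hence is a connected subgraph meeting both the nonempty set $S'$ and the nonempty set $S\setminus S'$, so it contributes at least one cut edge. Thus at most $\lfloor 2.2\lambda_H/3\rfloor<\lambda_H=C$ of the trees can carry three or more cut edges, so at least one $T_i$ carries at most two, i.e.\ $S'$ $2$-respects $T_i$.

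Given this, the assembly is routine. For each $i$ I would invoke \emph{Finding $2$-respecting extreme sets} on $(G,T_i)$ to get a laminar family $\mathcal L_i$ containing every extreme set of $G$ that is a subset of $S$ and $2$-respects $T_i$; by the previous paragraph $\bigcup_i\mathcal L_i$ contains all near-mincut extreme sets. I would then fold $\mathcal L_1,\dots,\mathcal L_C$ into a single laminar family by $C-1$ sequential calls to \emph{Merging two laminar trees} --- each merge retains every extreme set of $G$ present in its inputs, and every near-mincut extreme set is an extreme set of $G$, so the result still contains them all --- and finally call \emph{Removing non-extreme sets from a laminar family} once to strip the family down to exactly the near-mincut extreme sets, which is the desired output. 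The total cost is $O(\log n)$ invocations of each $\tO(m)$-time routine plus the $O(m)$-time sparsification and $\tO(m)$-time packing, i.e.\ $\tO(m)$; since the only randomized ingredient is \Cref{thm:sparsify-small-deg}, the guarantee holds whp.

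The step I expect to be the genuine obstacle is not in this reduction but in the $2$-respecting subroutine it invokes: a tree spanning $S$ need not place $c$, so two tree edges do not in general determine a vertex set, and the extreme-set condition is asymmetric and awkward to check on non-contiguous subsets of a tree --- this is precisely what \Cref{sec:tree2spider} and \Cref{sec:respect} address. Within the present argument, the one decision that matters is rooting the packing inside $S$: rooting at $c$ would only guarantee $\min\{\delta_H(c),C\}$ trees through each vertex, and since $\delta_H(c)$ can be a tiny fraction of $\lambda_H$ this would both destroy the ``every tree spans $S$'' property and leave far too few trees for the pigeonhole above. The remaining details --- the constants $1.1$ and $2.2/3<1$, the $O(n)$-per-level bookkeeping behind the $\tO(m)$ bound, and propagating ``whp'' through the construction --- are standard.
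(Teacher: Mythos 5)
Your proposal is correct and follows essentially the same route as the paper's proof: sparsify via \Cref{thm:sparsify-small-deg}, pack $C=\lambda_H=O(\log n)$ trees rooted inside $S$ via \Cref{cor:tree-packing}, argue by the same averaging/pigeonhole count (at most $2.2\lambda_H$ tree--cut-edge incidences) that every near-mincut extreme set $2$-respects some tree, then merge the $O(\log n)$ laminar families and prune non-extreme sets. Your explicit remarks about rooting in $S$ rather than $c$ and about edge integrality in $H$ are points the paper leaves implicit, but they do not change the argument.
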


%\debmalya{I added the assumption of having a near-linear time merging algorithm in the statement of the theorem. Also, please check the parts I've marked in red in the proof. I think they should be removed, but please check.}

\begin{proof}
Given a near-mincut extreme sets problem instance in a graph $G$ on vertices $S\cup \{c\}$ where $\lambda$ denotes the Steiner connectivity of $S$, we first use \Cref{thm:sparsify-small-deg} to obtain a subgraph $H$. Let $\lambda'$ be the Steiner connectivity of $S$ in $H$. If we set $C = \lambda'$ in \Cref{cor:tree-packing} and apply it to $H$, then we get $\lambda' = O(\log n)$ trees spanning $S$. $S$ is spanned because for each $v\in S\setminus\{r\}$, $\lambda_H(r, v)\ge \lambda'$ by definition of Steiner connectivity, and $v$ appears in all $\lambda'=\min\{\lambda_H(r, v), C\}$ trees. We remark that these trees may or may not contain $c$.
%\alert{Uniformly sample $O(\log n)$ trees from the packing and calculate 2-respecting extreme sets of sampled trees. Finally merge the $O(\log n)$ laminar families. \textcolor{red}{using Theorem XX} -- DP: why do we need this sampling of $O(\log n)$ tress? Aren't there only $O(\log n)$ trees since $\lambda' = O(\log n)$?} 

Next, for each of these trees, we find a laminar family containing all $2$-respecting extreme sets using the first algorithm. We need to show that every near-mincut extreme set in $G$ will $2$-respect at least one of the trees, and therefore, will be in one of these laminar families. Set $\epsilon = 0.01$. For any $(1+\epsilon)\lambda$-weak extreme set $S'\subseteq S$ in $G$, we have that $S'$ is a $1.1$-approximate Steiner min-cut in $H$. Thus, the $\lambda'$ trees share at most $2.2\lambda'$ cut edges, since each edge appears at most twice in the trees by \Cref{cor:tree-packing}. So, on average, each tree has at most $2.2$ cut edges. %Moreover, every tree has at least one cut edge since the tree spans $S$.
Thus, there is at least one tree that has at most $2$ cut edges.
%\alert{Let $X$ be the number of cut edges on a random tree in the packing. Note that $X\ge 1$ because the trees span the vertices $S$, so by Markov's inequality $\Pr[X-1\ge 2]\le 0.6$, and $X\le 2$ with constant probability. In the $O(\log n)$ sampled trees, with high probability at least one tree has at most 2 cut edges with high probability, and $S$ is found by the subroutine. -- DP: remove?} 

Now, we iteratively use the second algorithm to merge the laminar families returned for each tree into a single laminar family, and then remove the non-extreme sets from this family using the third algorithm to obtain the near-mincut extreme sets.

Next, we bound the running time. The application of \Cref{thm:sparsify-small-deg} takes $O(m)$ time, and that of \Cref{cor:tree-packing} makes $\tO(m\lambda')$ time. Since $\lambda' = O(\log n)$ by \Cref{thm:sparsify-small-deg}, we can conclude that the tree packing takes $\tO(m)$ time. Then, we run the extreme sets algorithm on each of the $O(\log n)$ trees, which takes $\tO(m)$ time. Since there are $O(\log n)$ trees, it follows that we need to call the merger algorithm $O(\log n)$ times, which takes $\tO(m)$ time. Finally, the algorithm to remove non-extreme sets a takes $O(m)$ time. Thus, the total runtime is $\tO(m)$.
\end{proof}

We will give the algorithm to find $2$-respecting extreme sets in the statement of \Cref{thm:2-resp} in \Cref{sec:tree2spider} and \Cref{sec:respect}, and the algorithm for merging two extreme sets trees into an extreme sets tree in \Cref{sec:merge}. Here, we give details of the last step, that of removing non-extreme sets from a laminar family.

\begin{lemma}
Given a laminar family containing all near-mincut extreme sets, we can remove all sets that are not near-mincut extreme sets in $O(m)$ time. %\alert{Ruoxu: If we state it here, the claim should involve near-mincut extreme set, so we cannot directly cite previous paper.}
\end{lemma}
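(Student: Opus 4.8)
The plan is to reduce the task to two inexpensive computations on the laminar forest $T_{\mathcal{L}}$ of the given family $\mathcal{L}$: evaluating the cut value $\delta(X)$ of every $X\in\mathcal{L}$, and then a single bottom-up pass. The only conceptual ingredient is the observation that, \emph{because $\mathcal{L}$ is promised to contain all near-mincut extreme sets}, the extreme-set condition can be tested ``internally'', i.e.\ only against the other members of $\mathcal{L}$.

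Concretely, I would first throw away every $X\in\mathcal{L}$ with $X\not\subsetneq S$ or with $\delta(X)\ge(1+\epsilon)\lambda$; such a set is not a near-mincut extreme set by definition. (Note $\delta(X)\ge\lambda$ is automatic for any nonempty $X\subsetneq S$, since such a cut separates the terminal set $S$.) For the surviving sets I claim: $X$ is extreme iff every $Y\in\mathcal{L}$ with $Y\subsetneq X$ satisfies $\delta(Y)>\delta(X)$. The forward direction is immediate. For the converse, suppose $X$ is not extreme and pick an inclusion-minimal nonempty $U\subsetneq X$ with $\delta(U)\le\delta(X)$. Minimality forces $\delta(W)>\delta(X)\ge\delta(U)$ for every nonempty $W\subsetneq U$, so $U$ is itself extreme; moreover $\lambda\le\delta(U)\le\delta(X)<(1+\epsilon)\lambda$ (the lower bound again because $U$ separates $S$), so $U$ is a near-mincut extreme set, hence $U\in\mathcal{L}$ by hypothesis, and $U\subsetneq X$ with $\delta(U)\le\delta(X)$ witnesses the claim. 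This argument closely mirrors the proofs of \Cref{lem:laminar} and \Cref{lem:canonical}. Since, by laminarity of $\mathcal{L}$, the members of $\mathcal{L}$ strictly contained in $X$ are exactly the proper descendants of $X$ in $T_{\mathcal{L}}$, the extremeness test becomes simply $\min\{\delta(Y): Y\text{ a proper descendant of }X\}>\delta(X)$.

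The algorithm is then: (i) build $T_{\mathcal{L}}$ (add a virtual root $V$ above the maximal sets of $\mathcal{L}$ and a singleton leaf for every uncovered vertex) — this is essentially free, as the merge step already hands us $\mathcal{L}$ as a tree; (ii) compute $\delta(X)$ for all $X\in\mathcal{L}$ via $\delta(X)=\mathrm{vol}(X)-2\,w(E[X])$, where $\mathrm{vol}(X)=\sum_{v\in X}\deg_w(v)$ is a bottom-up sum of leaf weighted degrees, and $w(E[X])$, the weight of edges with both endpoints in $X$, is obtained by charging each edge $(u,v)$ to the least common ancestor in $T_{\mathcal{L}}$ of the leaves holding $u$ and $v$ (all $m$ LCA queries in $O(1)$ each after $O(n)$ preprocessing) followed by another bottom-up sum; (iii) recover $\lambda=\min\{\delta(X):X\in\mathcal{L},\,X\subsetneq S\}$, which equals the Steiner connectivity of $S$ because the inclusion-minimal side of a minimum Steiner cut is a near-mincut extreme set and therefore lies in $\mathcal{L}$ (alternatively $\lambda$ is already known to the caller); (iv) one more bottom-up pass to compute $\mathrm{minDesc}(X)$; (v) output exactly those $X\in\mathcal{L}$ with $X\subsetneq S$, $\delta(X)\in[\lambda,(1+\epsilon)\lambda)$, and $\mathrm{minDesc}(X)>\delta(X)$. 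Correctness follows from the claim together with the definition of near-mincut extreme set, and every step runs in $O(m+n)=O(m)$ time.

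The main obstacle is the internal-checkability claim of the second paragraph, which crucially exploits completeness of $\mathcal{L}$ and the ``a minimal counterexample is itself extreme'' trick; once that is in place, the rest is a standard linear-time evaluation of all cut values of a laminar family plus routine tree DP, and that evaluation is where the $O(m)$ budget is actually spent.
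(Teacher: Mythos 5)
Your proposal is correct and follows essentially the same route as the paper: filter out sets not strictly inside $S$ or with cut value at least $(1+\epsilon)\lambda$, compute all cut values in $O(m)$ by charging each edge to an LCA in the laminar tree, and justify an internal test via the observation that a minimal violating subset is itself a near-mincut extreme set and hence already in the family. The only (immaterial) difference is that you compare each node against the minimum cut value over all its proper descendants, whereas the paper compares only against children while iteratively deleting failed nodes and reattaching their children, relying on monotonicity along surviving chains; both tests are $O(n)$ once cut values are known.
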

\begin{proof}
First remove all sets $X$ with $\delta(X)\ge (1+\epsilon)\lambda$ or $X\setminus S\ne \emptyset$ because they cannot be near-mincut extreme sets. Then do a post-order traversal on the tree formed by the laminar family. When visiting some node $X$, compare the cut value of $X$ and all its children. If some child has cut value less or equal to $\delta(X)$, we remove $X$ from the family and assign its children to its parent in the tree. Given the cut values, the traversal takes $O(n)$ time.

Next we show that all cut values of sets in the laminar family can be computed in $O(m)$ time. For each vertex $u\in V$, let $p(u)$ be the collection of sets containing $u$ in the laminar family. Add set $V$ into the family, so that $p(u)$ is always nonempty. Because the family is laminar, $p(u)$ is a nested chain of sets. Let $l(u)$ be the minimal set in $p(u)$, then $p(u)$ is a path from the root to $l(u)$ in the laminar tree.
Every edge $(u, v)$ contributes to the cut values of sets separating $u$ and $v$, which are sets in exactly one of $p(u)$ or $p(v)$. On the laminar tree, they are on the path from $l(u)$ to $l(v)$ excluding the lowest common ancestor (LCA) of $l(u)$ and $l(v)$. Use Tarjan's offline LCA algorithm \cite{GabowT83} to calculate LCA$(l(u), l(v))$ of all edges $(u,v)$ in $O(m+n)$ time. Assign a label to each tree node. The labels are 0 initially. For every edge $(u, v)$ with weight $w$, add $w$ to the label of $l(u)$ and $l(v)$, and add $-2w$ to the label of LCA$(l(u), l(v))$. Then for every set in the family, its cut value is the sum of labels in the corresponding subtree. These sums of labels can be calculated in $O(n)$ time using dynamic programming.

Clearly, near-mincut extreme sets will not be removed by this algorithm. Next, we show that all sets that are not near-mincut extreme sets will indeed be removed. By the first step, we can only focus on non-extreme sets $X\subsetneq S$ that have cut value $\delta(X)<(1+\epsilon)\lambda$. For such a set $X$, there must be some extreme subset $Y\subsetneq X$ with $\delta(Y)\le \delta(X)<(1+\epsilon)\lambda$ (e.g., a vertex minimal subset of $X$ that violates the extreme condition for $X$ is a valid $Y$). Then $Y$ is a near-mincut extreme set, so $Y$ is in the family and has not been removed when visiting $X$ in the post-order traversal. Let $Y'$ be the ancestor of $Y$ that is also a child of $X$. Because the path from $Y$ to $Y'$ has survived the post-order traversal, the cut values will be monotone decreasing along this path. Thus, $\delta(Y')\le \delta(Y)\le \delta(X)$, which implies that $X$ will be removed.
\end{proof}

\section{Reduction to Spiders}
\label{sec:tree2spider}

In this section, we reduce the problem of finding $2$-respecting extreme sets in \Cref{thm:2-resp} to the special case when the tree $T$ is a \emph{spider}. This significantly simplifies the case analysis of the extreme sets algorithm. 

\begin{definition}[Spider]
A \emph{spider} is a rooted tree that is the edge-disjoint union of root-to-leaf paths.
\end{definition}

The full reduction has two steps. We first impose one additional restriction: we only need to find extreme sets for which the root $r$ of $T$ is on the path in $T$ between the two crossed edges. Of course, this requires the extreme set to cross \emph{exactly} two edges in $T$, and we call such a set \emph{exactly} $2$-respecting. The reduction is captured by the lemma below, which we prove in \Cref{sec:centroid} using the technique of centroid decomposition on a tree. 

\begin{lemma}\label{lem:red1}
Assume that given a weighted undirected graph, and a tree $T$ spanning all but at most one vertex, we can find in $\tilde{O}(m)$ time all extreme sets in $V(T)$ that either (a) $1$-respect $T$, or (b) exactly $2$-respect $T$ such that $r$ is on the path in $T$ between the two crossed edges. Then, we can find in $\tilde{O}(m)$ time all $2$-respecting extreme sets (with no additional condition).
\end{lemma}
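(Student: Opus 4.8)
The plan is to use a \emph{centroid decomposition} of the tree $T$ to reduce the general $2$-respecting case to the ``exactly $2$-respecting with $r$ on the path between the crossed edges'' case. First I would handle the easy case: an extreme set that $2$-respects $T$ but crosses at most one tree edge is precisely a $1$-respecting extreme set, and by hypothesis (a) these are all found directly. So from now on assume the extreme set $S'$ crosses exactly two tree edges $e_1, e_2$; let $P$ be the (unique) path in $T$ between $e_1$ and $e_2$.

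The core idea is a recursive procedure $\textsc{Find}(T)$: compute a centroid vertex $z$ of $T$ (a vertex whose removal leaves components each of size at most $|V(T)|/2$), re-root $T$ at $z$, and invoke the hypothesized algorithm with root $r := z$. This recovers every extreme set whose two crossed edges $e_1,e_2$ have $z$ on the path between them in $T$ — equivalently, $e_1$ and $e_2$ lie in two different components of $T - z$ (or one of them is incident to $z$). For the remaining extreme sets, both crossed edges lie strictly inside a single component $T_i$ of $T - z$; I recurse by calling $\textsc{Find}$ on each $T_i$, after contracting everything outside $V(T_i)$ (including $z$ and the rest of $V(T)$, together with any vertex of $G$ not in $V(T)$) into a single contracted vertex $c_i$. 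Since extremeness of a set $X \subseteq V(T_i)$ depends only on cut values $\delta_G(Y)$ for $Y \subseteq X$, and those are unaffected by contracting the complement, no spurious sets are introduced and no genuine ones are lost; crucially, in the recursive call $T_i$ still spans all but at most one vertex of the contracted graph (the one vertex being $c_i$), so the hypothesis applies verbatim. One subtlety: when $e_1$ or $e_2$ is the edge of $T$ incident to $c_i$, the set only $1$-respects $T_i$, which is fine — case (a) of the hypothesis covers it. At the end I run the \textsc{Removing non-extreme sets} routine (available by the hypotheses chain, or directly — the union of laminar families can be merged and pruned) to output a single laminar family containing all $2$-respecting extreme sets.

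The two things to check carefully are correctness (every $2$-respecting extreme set is found at some node of the centroid recursion) and running time. For correctness: fix an extreme set $S'$ crossing exactly $e_1, e_2$ with path $P$. Walk down the centroid recursion from the root; at each node the current subtree $T'$ contains $P$ entirely (this is the invariant — if $T'$ contains both $e_1$ and $e_2$ it contains the path between them since $T'$ is connected). Either the current centroid $z$ lies on $P$, in which case the call with root $z$ finds $S'$ and we stop; or $z \notin P$, so $P$ lies entirely within one component $T_i$ of $T' - z$, the recursion continues on $T_i$, and the invariant is maintained. Since subtree sizes halve at each level, this terminates, so $S'$ is found. For the running time: the centroid decomposition has $O(\log n)$ levels, and at each level the subtrees $\{T_i\}$ are vertex-disjoint, so $\sum_i |V(T_i)| \le n$; but the edges of $G$ charged to each recursive call are the edges \emph{inside} $V(T_i)$ plus the boundary edges folded into $c_i$, and a single edge of $G$ can be a boundary edge for at most one $T_i$ per level (and an internal edge for at most one), so the total edge count per level is $O(m)$. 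Hence each level costs $\tilde O(m)$ by the hypothesis, for a total of $\tilde O(m)$.

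The main obstacle I expect is the bookkeeping around the contracted vertex and non-spanning trees: I need the invariant ``$T_i$ spans all of the recursive graph except the single vertex $c_i$'' to hold even when the original $T$ already omitted one vertex of $G$, and I need to be sure that an extreme set touching the $c_i$-edge is correctly captured by case (a) rather than being missed. Verifying that the charging argument for edges is tight (each edge in $O(1)$ subproblems per level) is routine but must be stated precisely, since naively an edge incident to a centroid could be shared; re-rooting at the centroid and making that vertex part of the ``finds'' call, not the recursive calls, is what makes the charging work.
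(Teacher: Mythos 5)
There is a genuine gap in your recursion step, and it comes from where you put the contracted blob. You contract everything outside a component $T_i$ of $T-z$ (including the centroid $z$ and the original off-tree vertex $c$) into a single vertex $c_i$ that is \emph{not} spanned by the recursive tree, and then you only argue about extreme sets $X\subseteq V(T_i)$. But a $2$-respecting extreme set whose two crossed tree edges both lie strictly inside $T_i$ need not be contained in $V(T_i)$: it can be the piece \emph{containing} $z$, i.e.\ a set of the form $\overline{u^\downarrow\cup v^\downarrow}$ with $u,v\in T_i$ (the ``complement of two subtrees'' universe that \Cref{sec:respect} treats separately). Such a set excludes $c$, so it is a legitimate target of \Cref{lem:red1}; it is not caught by the top-level call at $z$ (the path between its two crossed edges stays inside $T_i$, so $r=z$ is not on it, and it is not $1$-respecting); and in your recursive instance its image contains $c_i$, so it is not an ``extreme set in $V(T_i)$'' and the hypothesized black box never has to output it --- indeed, since $z$ and $c$ are merged into $c_i$, the set cannot even be represented in the contracted graph. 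Because extreme sets are asymmetric, you also cannot recover it by complementing sets the black box does return. Your correctness argument silently assumes every surviving extreme set lies inside the component it recurses on, which is exactly the assumption that fails.

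The paper's proof avoids this by contracting differently: it groups the root subtrees into two parts of size at most $2/3$ each and, in each recursive instance, contracts the \emph{edges} of one group, so the contracted blob (which absorbs $r$ and the other group) remains a vertex \emph{of the tree} in the recursive instance. Then a complement-type set such as $\overline{u^\downarrow\cup v^\downarrow}$ maps to a subset of the recursive tree's vertex set that still $2$-respects it, and is found deeper in the recursion; the extra $O(\log n)$ contracted vertices per instance are what drive the $O(m+n\log n)$ per-level size bound. Your per-component recursion and charging scheme are otherwise reasonable, but to make the proof correct you would have to attach the contracted complement of $T_i$ as a tree vertex (in place of $z$) rather than as the off-tree vertex $c_i$, and redo the size accounting accordingly.
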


Finally, we reduce this special case to one that assumes the tree $T$ is a spider. The lemma below is proved in \Cref{sec:spider-reduction} using the random branch contraction technique inspired by~\cite{Li19}.

\begin{lemma}\label{lem:red2}
Assume that given a weighted undirected graph, a special vertex $c$ and a spider $T$ spanning all but at most one vertex, we can find in $\tilde{O}(m)$ time a laminar family of $V(T)$ that includes all extreme sets that either (a) $1$-respect $T$, or (b) exactly $2$-respect $T$ such that $r$ is on the path in $T$ between the two crossed edges. Then, the same is true with ``spider'' replaced by a general ``tree''.
\end{lemma}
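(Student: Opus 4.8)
The plan is to reduce a general tree $T$ to a collection of spiders via random branch contraction, following the strategy pioneered by Li~\cite{Li19}. Recall that we must find extreme sets that either $1$-respect $T$ or exactly $2$-respect $T$ with $r$ on the tree path between the two crossed edges. The key structural observation is that once we fix the two crossed edges $e_1, e_2$ and their least common ancestor being $r$, the two edges lie on two \emph{distinct} root-to-leaf ``directions'' of $T$. So if we could contract $T$ down to a spider that keeps $e_1$ and $e_2$ as tree edges, the associated vertex set is preserved, and the extreme-set condition (which is purely a statement about cut values of subsets in the \emph{original} graph $G$) is unaffected by the contraction in $T$. Thus it suffices to produce, with high probability, a family of $O(\log n)$ spiders such that every candidate pair $(e_1,e_2)$ survives (as two tree edges) in at least one spider; then we run the assumed spider algorithm on each, take the union of the returned laminar families, and finally merge them into a single laminar family.

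The core step is the random branch contraction. First perform a heavy-light decomposition of $T$, breaking $T$ into a set of vertex-disjoint paths (the heavy paths), with $O(\log n)$ light edges on any root-to-leaf walk. For a fixed target pair $(e_1, e_2)$: each of $e_1, e_2$ lies on some heavy path, and the tree path between them passes through $O(\log n)$ heavy paths. If we independently, for each heavy path, either keep it intact or contract it to a single vertex --- keeping it with probability $\Theta(1)$ --- then with probability $2^{-\Theta(\log n)} = 1/\poly(n)$ we keep exactly the $O(\log n)$ heavy paths touched by the $e_1$--$e_2$ path and contract everything else. Contracting all other heavy paths collapses every branch hanging off the kept path-skeleton, and what remains is a tree in which every vertex other than (the image of) $r$ has degree at most $2$ --- i.e., a spider rooted at $r$ --- while $e_1$ and $e_2$ remain present. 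Since there are only $O(n^2)$ relevant pairs, repeating this $O(\log^2 n)$ times (boosting each success probability) ensures that \emph{every} pair is simultaneously good in some spider whp by a union bound. (The $1$-respecting case is even easier: a single crossed edge survives with constant probability per trial, so the same family of spiders captures all $1$-respecting extreme sets too, or we can handle them trivially on $T$ directly.) One technical point: ``spanning all but at most one vertex'' must be preserved --- the special vertex $c$, if present in $T$, is either kept on its heavy path or absorbed into a contracted path, and in the latter case it becomes identified with other vertices; we handle this by always keeping the heavy path containing $c$ (this only changes probabilities by a constant factor), so the resulting spider still spans $V(T)\setminus\{$at most one vertex$\}$ in the required sense, with $c$ a genuine spider vertex.

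For the running time: the heavy-light decomposition is $O(n)$; each of the $O(\log^2 n)$ rounds produces one spider on which the assumed algorithm runs in $\tilde O(m')$ time where $m'\le m$ (contraction only shrinks the graph), so all spider calls together take $\tilde O(m)$ time and produce $O(\log^2 n)$ laminar families. Merging $O(\log^2 n)$ laminar families pairwise into one, using the ``merge two laminar trees'' primitive assumed available in \Cref{thm:2-resp}, costs $\tilde O(m)$ per merge and $\tilde O(m)$ in total. Hence the overall running time is $\tilde O(m)$, matching the claim.

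The main obstacle I anticipate is not the probability analysis but the bookkeeping around what a ``crossed edge'' means after contraction when the target set is only required to \emph{exactly} $2$-respect $T$: we need that the vertex set determined by $(e_1, e_2)$ in the spider coincides with the one determined in $T$, which requires the contraction to not merge a ``would-be inside'' vertex with a ``would-be outside'' vertex relative to the cut $(S, V\setminus S)$. This holds precisely because contracting a heavy path that the $e_1$--$e_2$ tree path does not pass through keeps that whole path entirely on one side of the cut (by the exactly-$2$-respecting assumption and $r$ lying between $e_1$ and $e_2$), so no spurious merging across the cut occurs --- but making this argument airtight, including the edge case where a contracted path is itself the one containing $r$, is where the care is needed.
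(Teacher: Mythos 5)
There is a genuine gap, and it lies exactly in the step you flag as "the core step." Your contraction rule --- keep each heavy path intact with constant probability, contract the rest, and hope to keep precisely the $O(\log n)$ heavy paths touched by the $e_1$--$e_2$ path --- fails on both counts. First, the event you describe does not have probability $1/\mathrm{poly}(n)$: contracting "everything else" means the $\Theta(n)$ untouched heavy paths must all be contracted, which under a constant keep-probability happens with probability exponentially small in the number of heavy paths; and even if the per-trial success probability were $1/\mathrm{poly}(n)$, $O(\log^2 n)$ repetitions would not suffice (you would need $\mathrm{poly}(n)$ trials, destroying the $\tilde O(m)$ bound). Second, even conditioned on your good event the result is \emph{not} a spider: if the $e_1$-to-$r$ path climbs heavy path $H_1$, exits at its top via a light edge, and enters heavy path $H_2$ at an internal vertex $v$, then keeping both $H_1$ and $H_2$ in their entirety leaves $v$ with degree $3$ away from the root. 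So the hypothesis of the lemma (an algorithm for spiders) cannot be invoked on the contracted instance.

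The paper's construction differs in the two places your argument breaks. Each branch is sampled with probability $1/\log n$ (not $\Theta(1)$), and a sampled branch is \emph{kept only if no other sampled branch shares an edge with its path to the root}; all non-kept branches are contracted. The filtering condition is what forces every kept branch to hang directly off the root --- the entire root path above a kept branch is contracted into $r$ --- so the contracted tree is genuinely a spider. Note that this means the branches \emph{above} $e_1$ on its root path are deliberately contracted rather than kept; this is harmless because an exactly $2$-respecting cut crosses only $e_1,e_2$ in $T$, so every contracted component lies on one side of the cut and the extreme set survives contraction (your last paragraph has the right instinct here). With these choices, a single edge $e$ survives with probability $\frac{1}{\log n}(1-\frac{1}{\log n})^{O(\log n)}=\Omega(1/\log n)$, and since $r$ lies between $e_1$ and $e_2$ the two root paths are edge-disjoint, making the two survival events independent, hence $\Omega(1/\log^2 n)$ per pair; $O(\log^3 n)$ repetitions then cover all pairs whp, keeping the total running time $\tilde O(m)$ after merging with \Cref{lem:merge}. (A small additional confusion: the special vertex $c$ is the vertex \emph{not} spanned by $T$, so it lies on no heavy path and needs no special treatment in the sampling.)
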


\subsection{Centroid Decomposition}\label{sec:centroid}

In this section, we prove \Cref{lem:red1}.

For a given tree $T$, the \emph{centroid} is a vertex $r$ such that if we root $T$ at $r$, then each subtree rooted at a vertex different from $r$ has at most half the total number of vertices. The centroid is guaranteed to exist for any tree, and one can be computed in linear time easily.

Root $T$ at the centroid $r$, and first call the $2$-respecting extreme sets algorithm under the special restriction described in the lemma statement. In particular, this algorithm returns a laminar family of subsets that includes all exactly $2$-extreme sets on $T$ for which $r$ is on the path in $T$ between the two crossed edges. 

Next, let $T_1,\ldots,T_\ell$ be the subtrees rooted at the children of $r$, with the additional edge between $r$ and the root of the subtree, so that $T_1,\ldots,T_\ell$ is an edge partition of $T$. We can split the set of subtrees into two groups such that each group has at most $2/3$ the total number of vertices. Without loss of generality, let $T_1,\ldots,T_k$ and $T_{k+1},\ldots,T_\ell$ be the two groups. The algorithm recursively solves two instances, one with all edges in $T_1,\ldots,T_k$ contracted to a single vertex, and one with all edges in $T_{k+1},\ldots,T_\ell$ contracted to a single vertex. Take the two laminar families returned by the recursive calls and ``uncontract'' the contracted vertex in any set that contains it, i.e., replace it with the vertices in $T_1\cup\cdots\cup T_k$ or $T_{k+1}\cup\cdots\cup T_\ell$ depending on which instance. This does not destroy laminarity of the two families. We then use \Cref{lem:merge} to merge the three laminar families found overall (including the one from the non-recursive case above). 

We claim that the resulting laminar family includes all extreme sets $2$-respecting $T$. There are a few cases:
 \begin{enumerate}
 \item If an extreme set $1$-respects $T$, then it is picked up by the non-recursive case.
 \item If an extreme set crosses two edges, one in $T_1\cup\cdots\cup T_k$ and one in $T_{k+1}\cup\cdots\cup T_\ell$, then it satisfies the specific condition that the root $r$ is on the path in $T$ between the two crossed edges, so the non-recursive case outputs this set.
 \item If an extreme set crosses two edges, either both in $T_1\cup\cdots\cup T_k$ or both in $T_{k+1}\cup\cdots\cup T_\ell$, then it survives when the other ($T_1\cup\cdots\cup T_k$ or $T_{k+1}\cup\cdots\cup T_\ell$) is contracted, so it is output by the corresponding recursive algorithm.
 \end{enumerate}
It follows that all $2$-respecting extreme sets are output by the algorithm. By \Cref{lem:merge}, they all survive the merging step, and are therefore included in the final output.

As for running time, the recursion depth is $O(\log n)$ since the number of vertices in $T$ drops by a constant factor on each recursive call. Also, on each recursion level, the sum of the sizes of the instances is $O(m+n\log n)$ by the following argument. Each vertex in the original tree $T$ appears in at most one instance (as a non-contracted vertex), and each instance has an additional $O(\log n)$ contracted vertices (one from each recursive call before it) and possibly one vertex $c$ in $V$ not spanned by $T$. There are at most $m$ edges across the instances whose endpoints are not $c$ nor contracted vertices, since each original edge appears in at most one instance (the one containing both of its endpoints as non-contracted vertices, if any). Each instance with $k$ vertices also gets an extra $O(k\log n)$ edges adjacent to either one of the $O(\log n)$ contracted vertices or $c$. It is not hard to see that the total number of vertices among the instances is $O(n)$, so this is an additional $O(n\log n)$ edges. It follows that the sum of the sizes of the instances at each level is $O(m+n\log n)$, and over all the $O(\log n)$ levels, this is still $\tilde{O}(m)$.
 
\subsection{Reduction from Trees to Spiders}\label{sec:spider-reduction}

In this section, we further simplify to the case when $T$ is a spider, proving \Cref{lem:red2}. The idea is simple: we compute a heavy-light decomposition of the tree, viewed as a set of edge-disjoint branches, and randomly contract a subset of them so that the remaining graph is a spider. We ensure that for any two fixed edges for which the root is on the path between them, with probability $\Omega(1/\log^2n)$ both edges survive the contraction.

\begin{figure}[t]
\centering
\includegraphics[width=.4\textwidth]{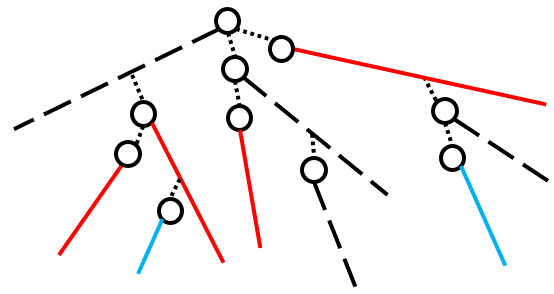}
%\includesvg[width=.42\textwidth]{spider}
\caption{Build spider from heavy-light decomposition. Dashed branches: not sampled. Red branches: sampled and used in spider. Blue branches: sampled but discarded.}\label{fig:spider}
\end{figure}

More precisely, we define a heavy-light decomposition as a partition $\mathcal P$ of the edges of $T$ into monotone paths (i.e., consecutive vertices along the path have increasing/decreasing distance from the root) called \emph{branches}, such that for any vertex $v$ in $T$, the path from $v$ to the root $r$ shares edges with $O(\log n)$ many branches. The algorithm samples each branch in $\mathcal P$ independently with probability $1/\log n$, and we keep all sampled branches whose path from (any vertex on) the branch to the root does not intersect any edge of another sampled branch; see Figure~\ref{fig:spider}. The algorithm contracts all other branches. It repeats this process $O(\log^3n)$ times, and for each instance, it calls the extreme sets algorithm on a spider described in the statement of \Cref{lem:red2}. The algorithm then ``uncontracts'' all edges to obtain collections of sets of vertices in $G$, and merges them using \Cref{lem:merge}. We claim that this algorithm correctly outputs all extreme sets promised by \Cref{lem:red2}.

We first claim that the resulting graph is indeed a spider. Indeed, for every branch $B$ that is kept, consider the path from the branch to the root; any other branch sharing edges with this path was not sampled, otherwise branch $B$ would not be kept. It follows that the branch hangs off the root in the contracted graph. Since all branches are monotone and hang off the root, the contracted graph must be a spider.

Finally, we claim that for any two edges of $T$ for which the root is on the path between them, with probability $\Omega(1/\log^2n)$ both edges survive the contraction, i.e., the branches containing them are kept. For a single edge $e$, in order for its respective branch $B$ to be kept, that branch must be sampled and none of the $O(\log n)$ other branches sharing edges with the path $P_e$ from $e$ to the root can be sampled. This occurs with probability $1/\log n \cdot (1-1/\log n)^{O(\log n)}=\Omega(1/\log n)$. By assumption, for the two edges $e_1,e_2$, the paths $P_{e_1}$ and $P_{e_2}$ are edge-disjoint and connected at the root, so the set of branches sharing edges with $P_{e_1}$ is disjoint from the set of branches sharing edges with $P_{e_2}$. It follows that the event that $e_1$ survives the contraction is independent from the event for $e_2$, and the overall probability of success is $\Omega(1/\log^2n)$.

Therefore, if we repeat this procedure $O(\log^3n)$ times, then with high probability, for any two such edges $e_1,e_2$, they both survive in one of the resulting spiders. In particular, if there is an exactly $2$-respecting extreme set crossing $e_1$ and $e_2$, then that extreme set survives the contraction as well. Likewise, a $1$-respecting extreme set crossing $e_1$ or $e_2$ survives as well. It follows that the extreme sets algorithm on a spider outputs the contracted version of this extreme set. The set is then uncontracted to the original extreme set, and then included in the final output after merging. It follows that with high probability, all targeted extreme sets are output by the algorithm.

\section{2-respecting Extreme Sets on a Spider}
\label{sec:respect}

In this section, we propose an efficient algorithm that, given a weighted undirected graph on vertices $S\cup \{c\}$ and a tree $T$ spanning $S$, finds all extreme sets in $S$ that 2-respect $T$. Using the reduction in \Cref{sec:tree2spider}, we can assume that $T$ is a spider. Such extreme sets can be divided into four `universes': one subtree, complement of one subtree, two subtrees and complement of two subtrees. We design algorithms to find extreme sets in each universe, and merge all the families by \Cref{lem:merge}. %we first find extreme sets $X$ (satisfying conditions~(1) and~(2) in \Cref{lem:red2}) containing the root, and then separately find such extreme sets not containing the root. Note that extreme sets of the first type are (the vertices of) a union of two subtrees of the spider, and those of the second type are the complement of the union of two subtrees.

We now introduce some notations exclusive to this section. For a tree $T$, define $u^\downarrow\subseteq V(T)$ as the vertices in the subtree of $T$ rooted at $u$, and $u^\uparrow\subseteq V(T)$ as the vertices on the path from $u$ to the root.
The complement $\overline{X}=V(T)\setminus X$ is defined with respect to vertices on the tree. We say that two vertices $u,v\in V(T)$ are \emph{incomparable} if $u^\downarrow\cap v^\downarrow=\emptyset$, i.e., neither is an ancestor or descendant of the other, and we sometimes use the notation $u\bot v$ to indicate that $u$ and $v$ are incomparable. Likewise, we say that $u,v$ are \emph{comparable} if $u^\downarrow\cap v^\downarrow\ne\emptyset$, and we sometimes use the notation $u\parallel v$. Note that on a spider, two non-root vertices are incomparable iff they lie on different branches, and they are comparable iff they lie on the same branch.

\subsection{Universe 1: One Subtree}
The one subtree case is simple. Let $\mathcal F$ be the laminar family of all (vertex sets of) subtrees of $T$: $\mathcal F=\{v^\downarrow:v\in V(T)\}$, then $\mathcal{F}$ trivially contains all extreme sets in the form of one subtree. % If there is a vertex $c$ not spanned by $V(T)$, then let $\mathcal F'=\{S\cup\{c\}:S\in\mathcal F\}$. Clearly, both families are laminar. We merge them using \Cref{lem:merge}, and the resulting laminar family contains all extreme sets composed of a single subtree, plus possibly $c$ if it exists.

\subsection{Universe 2: Complement of One Subtree}
Note that all sets in this universe contain the root, so any laminar family of sets in this universe must be a nested chain, which means the cut edges of the sets on the tree must lie on the same branch. We can actually find this main branch.
\begin{lemma}
\label{lem:1-respect-proof}
Let $S_1$ be the set with minimum cut value among all subtrees and complement of subtrees. When $S_1$ is a subtree, let $S_1=u_1^\downarrow$, otherwise let $S_1=\overline{u_1^\downarrow}$. (They are not equivalent when $c\notin V(T)$.) If $S=\overline{u^\downarrow}$ is extreme, then $u\parallel u_1$.
\end{lemma}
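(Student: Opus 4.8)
The plan is to argue by contradiction via a submodularity/posi-modularity exchange argument, pitting the hypothetical extreme set $S = \overline{u^\downarrow}$ against the minimizer $S_1$. Suppose $u \bot u_1$ (the only alternative to $u \parallel u_1$ for non-root vertices on a spider). First I would split into the two cases for the form of $S_1$.

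\emph{Case $S_1 = u_1^\downarrow$.} Here $S_1$ and $S = \overline{u^\downarrow}$ are two sets; since $u \bot u_1$, the subtrees $u^\downarrow$ and $u_1^\downarrow$ are disjoint, so $u_1^\downarrow \subseteq \overline{u^\downarrow} = S$, i.e.\ $S_1 \subsetneq S$. But $S$ is extreme and $S_1$ is a nonempty proper subset, so the extreme-set definition forces $\delta(S_1) > \delta(S)$, contradicting the minimality of $S_1$ among all subtrees and complements of subtrees (since $S$ is itself the complement of a subtree). This case is essentially immediate.

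\emph{Case $S_1 = \overline{u_1^\downarrow}$.} Now both $S_1 = \overline{u_1^\downarrow}$ and $S = \overline{u^\downarrow}$ are complements of disjoint subtrees, so $S_1 \cup S = V(T)$ and $S_1 \cap S = \overline{u^\downarrow \cup u_1^\downarrow}$, which is nonempty (it contains the root). I would apply submodularity: $\delta(S_1 \cap S) + \delta(S_1 \cup S) \le \delta(S_1) + \delta(S)$. If $c \in V(T)$ then $\delta(S_1 \cup S) = \delta(V(T)) = 0$ with respect to the tree vertices — but here one must be careful, because cut values are taken in $G$, not in $T$, and $V(T)$ may be a proper subset of $V$ when $c \notin V(T)$; so $\delta(V(T))$ need not vanish. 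The cleaner route is: $S_1 \cap S$ is a proper nonempty subset of $S = \overline{u^\downarrow}$ (proper because it excludes the nonempty set $u_1^\downarrow \subseteq \overline{u^\downarrow}$), so since $S$ is extreme, $\delta(S_1 \cap S) > \delta(S)$. Combined with submodularity this gives $\delta(S_1 \cup S) < \delta(S_1)$. Finally $S_1 \cup S = \overline{u^\downarrow \cap u_1^\downarrow} = \overline{\emptyset} = V(T)$ when $u\bot u_1$; but wait — that makes $S_1\cup S$ the whole tree, which is not a valid ``subtree or complement of subtree'' in the relevant sense, so I'd instead use posi-modularity on $S_1$ and $u^\downarrow$: $\delta(S_1 \setminus u^\downarrow) + \delta(u^\downarrow \setminus S_1) \le \delta(S_1) + \delta(u^\downarrow)$, noting $S_1 \setminus u^\downarrow = \overline{u_1^\downarrow \cup u^\downarrow}$ and $u^\downarrow \setminus S_1 = u^\downarrow$ (since $u^\downarrow \cap S_1 = u^\downarrow \setminus u_1^\downarrow = u^\downarrow$ when $u\bot u_1$... ) — this needs the same care. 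The robust formulation I would commit to: apply submodularity to $S_1$ and $\overline{u^\downarrow}$, use that $S_1 \cap \overline{u^\downarrow} = \overline{u_1^\downarrow \cup u^\downarrow}$ is a proper nonempty subset of the extreme set $S=\overline{u^\downarrow}$ to get a strict inequality, and conclude that $S_1 \cup \overline{u^\downarrow} = \overline{u_1^\downarrow \cap u^\downarrow}$ has cut value strictly less than $\delta(S_1) = \delta(\overline{u_1^\downarrow})$; when $u \bot u_1$ this union is $\overline{\emptyset}$, which is degenerate, so in fact the right pairing is to intersect $S_1$ with $u^\downarrow$ rather than its complement. I'll sort out which of the four Boolean combinations is non-degenerate when writing the details.

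The main obstacle, then, is purely bookkeeping: choosing the correct pair of sets to feed into submodularity vs.\ posi-modularity so that (i) one of the resulting sets is a \emph{proper nonempty} subset of the extreme set $S$ (to extract a strict inequality from extremality), and (ii) the other resulting set is again of the form ``subtree'' or ``complement of a subtree'' so that minimality of $S_1$ applies and yields the contradiction — all while respecting that $\delta(\cdot)$ is the cut function of $G$, not of $T$, so that $\overline{X} = V(T)\setminus X$ can still have edges leaving it to $c$. I expect the final argument to reduce, in each of the two cases for $S_1$, to a single application of posi-modularity to $S_1$ and $u^\downarrow$ (or $u_1^\downarrow$), with the incomparability $u \bot u_1$ ensuring the set differences collapse to recognizable subtrees/complements.
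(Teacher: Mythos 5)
Your Case $S_1=u_1^\downarrow$ is correct and is exactly the paper's argument: $u\perp u_1$ forces $u_1^\downarrow\subsetneq S$, extremeness gives $\delta(u_1^\downarrow)>\delta(S)$, and since $S$ is itself a complement of a subtree this contradicts minimality of $S_1$.

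The gap is in Case $S_1=\overline{u_1^\downarrow}$, where you never commit to a working inequality. Your first attempt (submodularity on $S_1$ and $S$) correctly yields $\delta(S_1\cup S)<\delta(S_1)$, but as you yourself observe $S_1\cup S=V(T)$ (or $V$), which is not a set against which either extremeness or minimality of $S_1$ can be played, so no contradiction follows. Your fallback pairings also fail: posi-modularity on $S_1$ and $u^\downarrow$ degenerates because $u^\downarrow\setminus S_1=u^\downarrow\cap u_1^\downarrow=\emptyset$ (your computation of this difference is in error), and $S_1$ and $u_1^\downarrow$ are disjoint, so that pairing is vacuous. Deferring the choice with ``I'll sort out which of the four Boolean combinations is non-degenerate'' leaves the essential step unproved, and none of the combinations you actually name works. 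The missing idea is to apply posi-modularity to the pair $S=\overline{u^\downarrow}$ and $S_1=\overline{u_1^\downarrow}$ themselves: with $u\perp u_1$ the differences collapse to $S\setminus S_1=u_1^\downarrow$ and $S_1\setminus S=u^\downarrow$, giving $\delta(S)+\delta(S_1)\ge\delta(u_1^\downarrow)+\delta(u^\downarrow)$. Now use \emph{both} hypotheses, not just extremeness: extremeness of $S$ gives $\delta(u_1^\downarrow)>\delta(S)$, and minimality of $S_1$ over all subtrees (not only complements) gives $\delta(u^\downarrow)\ge\delta(S_1)$; adding these contradicts the posi-modular inequality. Your attempts in this case never invoke the minimality of $S_1$ against the subtree $u^\downarrow$, which is exactly the second bound needed to close the argument.
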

\begin{proof}
Assume for contradiction that $u\perp u_1$. Then $u_1^\downarrow\subsetneq S$ and $\delta(u_1^\downarrow)>\delta(S)$ because $S$ is extreme. When $S_1=u_1^\downarrow$ this contradicts $S_1$'s minimality. When $S_1=\overline{u_1^\downarrow}$, by posi-modularity \[\delta(S_1)+\delta(S)\ge \delta(S\setminus S_1)+\delta(S_1\setminus S)=\delta(u_1^\downarrow)+\delta(u^\downarrow)>\delta(S)+\delta(S_1),\] contradiction.
Therefore $u\parallel u_1$.
\end{proof}

It immediately follows that $\{\overline{u^\downarrow}:u\parallel u_1\}$ is a laminar family containing all extreme sets in the form of complement of one subtree.

\subsection{Universe 3: Two Subtrees}
\label{sec:two-subtrees}
In this section, we compute a laminar family of subsets such that each extreme set composed of the union of two subtrees is included in the family, i.e., they can be written as $u^\downarrow\cup v^\downarrow$ for some $u,v$ on different branches of the spider. We introduce two concepts central to the algorithm: \emph{partners} and \emph{bottlenecks}.

\paragraph{Partners.}

Informally, we consider a vertex $v$ to be a vertex $u$'s \emph{partner} if $u^\downarrow\cup v^\downarrow$ is a potential extreme set. A necessary condition for this to happen is 
\BG
\delta(v^\downarrow)>\delta(u^\downarrow\cup v^\downarrow) \iff \delta(u^\downarrow, v^\downarrow)>\frac 12 \delta(u^\downarrow)\tag{P}\label{eq:P}
\EG
Note that for a fixed $u$ there cannot be two incomparable vertices $v$ satisfying condition~(\ref{eq:P}). Therefore, the partners of $u$ are pairwise comparable (if they exist), so on a spider, they must lie on a single branch of the spider, and we can define the {\it lowest partner} $p(u)$ to be the partner of $u$ of highest depth in the tree (i.e., farthest away from the root). We also require $p(u)\perp u$ because we assume $v$ in on a different branch with $u$, and we say $p(u)$ does not exist if there is no vertex $v$ satisfying (\ref{eq:P}), or equivalently, the lowest partner is comparable to $u$.

\begin{fact}\label{fact:extreme}
If $u^\downarrow\cup v^\downarrow$ is extreme, then (\ref{eq:P}) holds, and $v\in p(u)^\uparrow$ and $u\in p(v)^\uparrow$.
\end{fact}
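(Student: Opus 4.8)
The plan is to verify each of the three conclusions in turn, using only the definition of an extreme set, the condition~(\ref{eq:P}), submodularity/posi-modularity of the cut function, and the definition of the lowest partner $p(u)$. First I would establish that~(\ref{eq:P}) holds: if $u^\downarrow\cup v^\downarrow$ is extreme, then since $v^\downarrow$ is a nonempty proper subset of $u^\downarrow\cup v^\downarrow$ (using that $u\perp v$, so $u^\downarrow$ is nonempty and disjoint from $v^\downarrow$), the extreme property gives $\delta(v^\downarrow)>\delta(u^\downarrow\cup v^\downarrow)$. The displayed equivalence in~(\ref{eq:P}) then follows by expanding both cut values in terms of $\delta(u^\downarrow,v^\downarrow)$ and the edges leaving $u^\downarrow\cup v^\downarrow$; symmetrically, using $u^\downarrow$ as the proper subset, we also get $\delta(u^\downarrow,v^\downarrow)>\tfrac12\delta(v^\downarrow)$. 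In particular $v$ is a partner of $u$ and $u$ is a partner of $v$.

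Next I would prove $v\in p(u)^\uparrow$ (the claim $u\in p(v)^\uparrow$ is symmetric). By the previous paragraph $v$ is a partner of $u$ with $v\perp u$, so $p(u)$ exists, and since all partners of $u$ are pairwise comparable and lie on one branch of the spider, $v$ and $p(u)$ are comparable. We must rule out the case that $p(u)$ is a strict descendant of $v$, i.e.\ $p(u)\in v^\downarrow\setminus\{v\}$. Suppose this holds. Then $p(u)^\downarrow\subsetneq v^\downarrow$. The idea is to contradict extremeness of $u^\downarrow\cup v^\downarrow$ by exhibiting a nonempty proper subset with no larger cut value; the natural candidate is $u^\downarrow\cup p(u)^\downarrow$. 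Since $p(u)$ is a partner of $u$, condition~(\ref{eq:P}) for the pair $(u,p(u))$ gives $\delta(u^\downarrow,p(u)^\downarrow)>\tfrac12\delta(u^\downarrow)$, hence $\delta(u^\downarrow\cup p(u)^\downarrow)<\delta(u^\downarrow)+\delta(p(u)^\downarrow)-\delta(u^\downarrow)=\delta(p(u)^\downarrow)$. I would compare $\delta(u^\downarrow\cup p(u)^\downarrow)$ with $\delta(u^\downarrow\cup v^\downarrow)$: both contain $u^\downarrow$, and replacing $v^\downarrow$ by the smaller $p(u)^\downarrow$ should not decrease the cut by a submodularity/uncrossing argument on the tree structure together with the fact that $v^\downarrow$ is itself not `too cheap' relative to $p(u)^\downarrow$ — concretely, one uses that $p(u)$ being the \emph{lowest} partner of $u$ forces any vertex strictly between $v$ and $p(u)$ on the branch (in particular $v$ itself, if $v\neq p(u)$) to \emph{fail} condition~(\ref{eq:P}), which bounds $\delta(u^\downarrow,v^\downarrow)$ from above and feeds back into a bound on $\delta(u^\downarrow\cup v^\downarrow)$.

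The main obstacle I anticipate is precisely this last comparison: making rigorous that moving from $v$ down to $p(u)$ along the branch does not increase the cut value of $u^\downarrow\cup(\cdot)^\downarrow$, so that $u^\downarrow\cup p(u)^\downarrow$ witnesses a violation of the extreme property of $u^\downarrow\cup v^\downarrow$. The cleanest route is likely to apply submodularity to the sets $X=u^\downarrow\cup v^\downarrow$ and $Y=u^\downarrow\cup p(u)^\downarrow$ (whose intersection is $u^\downarrow$, since $v\perp u$ and $p(u)\perp u$, and whose union is $u^\downarrow\cup v^\downarrow$ when $p(u)\in v^\downarrow$), giving $\delta(u^\downarrow)+\delta(u^\downarrow\cup v^\downarrow)\le\delta(u^\downarrow\cup v^\downarrow)+\delta(u^\downarrow\cup p(u)^\downarrow)$, i.e.\ $\delta(u^\downarrow)\le\delta(u^\downarrow\cup p(u)^\downarrow)$ — wait, that is automatic and not yet a contradiction, so instead I would uncross $v^\downarrow$ with $u^\downarrow\cup p(u)^\downarrow$: submodularity on $v^\downarrow$ and $u^\downarrow\cup p(u)^\downarrow$ yields $\delta(v^\downarrow\cap(u^\downarrow\cup p(u)^\downarrow))+\delta(v^\downarrow\cup u^\downarrow\cup p(u)^\downarrow)\le\delta(v^\downarrow)+\delta(u^\downarrow\cup p(u)^\downarrow)$, and since $v^\downarrow\cap(u^\downarrow\cup p(u)^\downarrow)=p(u)^\downarrow$ and $v^\downarrow\cup u^\downarrow\cup p(u)^\downarrow=u^\downarrow\cup v^\downarrow$, this gives $\delta(p(u)^\downarrow)+\delta(u^\downarrow\cup v^\downarrow)\le\delta(v^\downarrow)+\delta(u^\downarrow\cup p(u)^\downarrow)$. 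Combining with $\delta(u^\downarrow\cup p(u)^\downarrow)<\delta(p(u)^\downarrow)$ gives $\delta(u^\downarrow\cup v^\downarrow)<\delta(v^\downarrow)$, which is exactly~(\ref{eq:P}) for $(u,v)$ and so consistent, not contradictory; hence the real contradiction must instead come from comparing $u^\downarrow\cup p(u)^\downarrow$ directly against $u^\downarrow\cup v^\downarrow$ as a \emph{subset}, which requires $u^\downarrow\cup p(u)^\downarrow\subsetneq u^\downarrow\cup v^\downarrow$ (true, as $p(u)^\downarrow\subsetneq v^\downarrow$) and $\delta(u^\downarrow\cup p(u)^\downarrow)\le\delta(u^\downarrow\cup v^\downarrow)$. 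I would derive this last inequality from the chain above together with $\delta(v^\downarrow)>\delta(u^\downarrow\cup v^\downarrow)$ (itself (\ref{eq:P})) and $\delta(p(u)^\downarrow)>\delta(u^\downarrow\cup p(u)^\downarrow)$: these two give $\delta(v^\downarrow)>\delta(p(u)^\downarrow)$ would be needed, which holds because $p(u)^\downarrow\subsetneq v^\downarrow$ \emph{only if} $v^\downarrow$ is extreme — which we do not know. So the genuinely delicate point, and the one I would spend the most care on, is handling the case $v^\downarrow$ not extreme; there I would instead take a minimal extreme subset of $v^\downarrow$ through $p(u)$ and run the argument with it, or appeal to laminarity of extreme sets (\Cref{lem:laminar}) applied to $u^\downarrow\cup v^\downarrow$ and whichever extreme set certifies the failure, to force $p(u)\in v^\downarrow$ to contradict the \emph{lowest} partner being $p(u)$ rather than something deeper.
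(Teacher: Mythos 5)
Your first paragraph is fine and is all that is needed for the first claim: since $u\perp v$, both $u^\downarrow$ and $v^\downarrow$ are nonempty proper subsets of $u^\downarrow\cup v^\downarrow$, so extremeness gives $\delta(v^\downarrow)>\delta(u^\downarrow\cup v^\downarrow)$ and $\delta(u^\downarrow)>\delta(u^\downarrow\cup v^\downarrow)$, which after expanding $\delta(u^\downarrow\cup v^\downarrow)=\delta(u^\downarrow)+\delta(v^\downarrow)-2\delta(u^\downarrow,v^\downarrow)$ is exactly (\ref{eq:P}) and its symmetric version; hence $v$ is a partner of $u$ and $u$ is a partner of $v$.

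The rest of the proposal has a genuine gap, and it stems from inverting the tree relation. Recall $p(u)^\uparrow$ is the root-to-$p(u)$ path, so $v\in p(u)^\uparrow$ says that $v$ is an ancestor of $p(u)$ (or $p(u)$ itself). The configuration you set out to ``rule out,'' namely $p(u)\in v^\downarrow\setminus\{v\}$, is therefore not the bad case at all --- it is precisely an instance of the desired conclusion. The only configuration that would falsify the claim is $v$ lying strictly \emph{below} $p(u)$, and that is impossible for a trivial reason: by the first paragraph $v$ satisfies (\ref{eq:P}) and $v\perp u$, so $v$ is a partner of $u$; all partners of $u$ are pairwise comparable (two incomparable vertices cannot both satisfy (\ref{eq:P})), and $p(u)$ is by definition the partner of greatest depth, so every partner --- in particular $v$ --- lies on the path from $p(u)$ to the root, i.e.\ $v\in p(u)^\uparrow$, and symmetrically $u\in p(v)^\uparrow$. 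No submodularity, uncrossing, or laminarity argument is needed, and indeed your attempted one never closes: as you acknowledge, the inequalities you derive are ``consistent, not contradictory,'' and the case ``$v^\downarrow$ not extreme'' is left unresolved. So as written the proposal does not establish the second and third claims; once the definition of lowest partner is read with the correct orientation, they follow immediately, which is why the paper states this as a fact without proof.
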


We now show that we can efficiently compute $p(u)$ for every vertex $u$.
\begin{lemma}\label{lem:compute-partner}
We can compute $p(u)$ for every $u\in V(T)-r$ in $\tilde{O}(m)$ time.
\end{lemma}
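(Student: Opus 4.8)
The plan is to reduce the computation of all $p(u)$ to two subroutines on the spider $T$ (root $r$): (i)~for each non-root $u$, identify the single branch of $T$ that can contain a partner of $u$, and (ii)~on that branch, find the deepest partner. Write $B_v$ for the branch containing a non-root vertex $v$, and for a branch $B$ put $f_B(u):=\delta(u^\downarrow,\,V(B)\setminus\{r\})$. On a spider the unique child of $r$ on $B$ has subtree exactly $V(B)\setminus\{r\}$, so $f_B(u)=\delta(u^\downarrow,v^\downarrow)$ for that child $v$, and this is the maximum of $\delta(u^\downarrow,v'^\downarrow)$ over all $v'\in B\setminus\{r\}$ (since $\delta(u^\downarrow,\cdot)$ is monotone in the set argument). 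Hence $u$ has a partner on $B\ne B_u$ iff $f_B(u)>\tfrac12\delta(u^\downarrow)$; and since the boundary edges of $u^\downarrow$ going into two distinct branches $B\ne B'$ (other than $B_u$) are disjoint, so $f_B(u)+f_{B'}(u)\le\delta(u^\downarrow)$, at most one branch can satisfy this. Call it $B^*(u)$ when it exists. Moreover the non-root ancestors on $B^*(u)$ of any partner are again partners (their subtrees are supersets, preserving~(\ref{eq:P}) and incomparability with $u$), so the partners of $u$ on $B^*(u)$ form a prefix from the top; in particular $p(u)$ exists iff $B^*(u)$ exists, then $p(u)\in B^*(u)$, and $\delta(u^\downarrow,b_1^\downarrow)=f_{B^*(u)}(u)>\tfrac12\delta(u^\downarrow)$ for the child $b_1$ of $r$ on $B^*(u)$.

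First I would compute $\delta(u^\downarrow)$ for every $u$ in $O(m)$ time by the standard counter trick (each edge $(x,y)$ of weight $w$ contributes $+w$ at $x$ and $y$ and $-2w$ at $\mathrm{LCA}(x,y)$, then take subtree sums; LCAs via Tarjan's offline algorithm, exactly as in the ``removing non-extreme sets'' lemma). To find $B^*(u)$ for all $u$: sweep each branch $A$ from its leaf toward $r$, maintaining $f_B(\cdot)$ of the currently-processed subtree in a dictionary keyed by branch, together with the running maximum of its values. When the sweep reaches $x\in A$, for each edge $(x,z)$ with $z\notin V(A)$ add its weight to $f_{B_z}$; this restores the invariant that the dictionary holds $f_B(x)$ for all $B\ne A$ (the processed-so-far subtree is $x^\downarrow$), and since these values only increase along a branch the maximum is maintained in $O(1)$ per update. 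If the maximum exceeds $\tfrac12\delta(x^\downarrow)$, record its argmax as $B^*(x)$ and store the value $f_{B^*(x)}(x)$; otherwise declare $p(x)$ nonexistent. Each cross-edge causes exactly two dictionary updates over all sweeps, each vertex is visited once, and resetting only touched entries between sweeps keeps this step at $\tilde O(m)$.

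The remaining and main work is step~(ii). Let $U_{B'}=\{u:B^*(u)=B'\}$; these sets partition $\{u:p(u)\text{ exists}\}$. Fix a branch $B'=(r=b_0,b_1,\dots,b_k)$ and compute $p(u)$ for $u\in U_{B'}$, i.e.\ the deepest $b_i$ with $\delta(u^\downarrow,b_i^\downarrow)>\tfrac12\delta(u^\downarrow)$ (which exists since this holds at $i=1$). I would sweep $i$ from $1$ to $k$, maintaining for each branch $A$ with $U_{B'}\cap A\ne\emptyset$ a segment tree over the positions $U_{B'}\cap A$ (indexed by depth), where position $a$ holds $\varphi(a)=\delta(a^\downarrow,b_i^\downarrow)-\tfrac12\delta(a^\downarrow)$, each node stores the minimum, and lazy range subtraction is supported; initialization sets $\varphi(a)=f_{B'}(a)-\tfrac12\delta(a^\downarrow)>0$ using the values saved in step~(i). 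Since $\delta(a^\downarrow,b_{i+1}^\downarrow)=\delta(a^\downarrow,b_i^\downarrow)-\delta(a^\downarrow,\{b_i\})$ and the edges realizing $\delta(a^\downarrow,\{b_i\})$ are exactly the edges $(x,b_i)$ with $x\in a^\downarrow$ (i.e.\ $x$ at depth $\ge$ that of $a$ on $A=B_x$), the transition $i\to i+1$ is, for each edge from $b_i$ to a vertex $x\notin V(B')$, a subtraction of its weight on the prefix of $B_x$'s segment tree down to depth $(x)$. After the subtractions of a transition I would scan each affected tree and pop every position whose value has fallen to $\le 0$, assigning it $p(a)=b_i$ and deactivating it; positions still active after the last transition get $p(a)=b_k$.

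The main obstacle is bounding step~(ii) across all target branches. Two facts make it $\tilde O(m)$. First, $B^*(u)$ is unique, so the $U_{B'}$ are disjoint: over all target branches the total number of active segment-tree positions created --- hence the total number of ``pop'' operations, each $O(\log n)$ --- is $O(n)$, and the total work to build all the trees (whose sizes $|U_{B'}\cap A|$ sum to $O(n)$) is $\tilde O(n)$; no resetting is needed since each target branch uses fresh trees. Second, each cross-edge $(x,y)$ generates exactly two prefix-subtractions in total: one during $B_y$'s sweep (on a prefix of $B_x$'s tree, when $b_i=y$) and one during $B_x$'s sweep (on a prefix of $B_y$'s tree), for $O(m)$ prefix-subtractions overall, each $O(\log n)$. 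Combining the three steps yields the claimed $\tilde O(m)$ running time.
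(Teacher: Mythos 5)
Your proposal is correct and follows essentially the same approach as the paper: compute $\delta(u^\downarrow)$ for all $u$, sweep each branch from leaf to root while maintaining the per-branch totals $\delta(u^\downarrow,B')$ to identify the unique candidate branch exceeding $\tfrac12\delta(u^\downarrow)$, and then exploit monotonicity of $\delta(u^\downarrow,v^\downarrow)$ along that branch to locate the deepest partner. The only difference is the implementation of the last step---you batch by target branch and sweep it with lazy range-update/min segment trees, whereas the paper binary-searches per vertex over prefix sums of a sorted cross-edge list---an interchangeable $\tilde{O}(m)$ data-structure choice rather than a different argument.
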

\begin{proof}
We give an algorithm computing $p(u)$ for a branch $B$ in time proportional to (up to polylogarithmic factors) $|B|$ plus the number of edges incident to vertices in $B$. Repeating this algorithm for all branches gives an $\tilde{O}(m)$ time algorithm.

Iterate over all $u\in B$ from the leaf upwards. This means that in each iteration, we add a new node into $u^\downarrow$. We use a heap to maintain value $\delta(u^\downarrow, B')$ for every other branch $B'\ne B$. (Recall that each branch $B'$ is a root-to-leaf path minus the root.) Also, for each branch $B'\ne B$, we maintain a sorted list of added edges $(u,v)$ where $u\in B,v\in B'$, sorted by the position of $v$ in branch $B'$ from leaf to root.

 In each iteration with new node $u$, for every edge $(u,v)$ incident on $u$ with $v\notin B\cup\{r\}$, add its weight to the value at the branch $B$ containing $v$, and also insert edge $(u,v)$ to the sorted list of edges for $B'$. After the update step, we query the branch $B'\ne B$ with maximum value $\delta(u^\downarrow, B')$. If $\delta(u^\downarrow, B')> \frac 12 \delta(u^\downarrow)$, then we find the lowest vertex $v\in B$ satisfying $\delta(u^\downarrow, v^\downarrow)>\frac 12 \delta(u^\downarrow)$, which can be done by binary searching over $v$ and taking a prefix sum of the sorted list to determine each $\delta(u^\downarrow,v^\downarrow)$. We set $p(u)$ to be this vertex $v$.
%This algorithm uses $\delta(B)$ decrease-key and $|B|$ find-min operations in standard heap data structure. (Standard heap maintains minimum while we maintain maximum.) Brodal heap \cite{Brodal96} can handle both type of operations in $O(1)$ time. So the time complexity for one branch is $O(\delta(B)+|B|)$, and the total time complexity is $O(m)$.
%\textcolor{red}{TODO}. The old proof below only computes it for one $u$.
%To calculate $p(u)$ for every node $u$, we sort all endpoints of $\partial(u^\downarrow)$ other than $u$ in some preorder traversal's order, and use the corresponding Euler tour tree to calculate the LCA of each interval with length $\lfloor\delta(u^\downarrow)/2\rfloor+1$. Let $q$ be the lowest LCA calculated. If $q\perp u$, then $p(u)=q$, otherwise $p(u)$ does not exist. Each interval query is  a range minimum query on Euler tour and costs $\tilde{O}(1)$, so the total time is $\tilde{O}(m)$.
\end{proof}

Recall that by definition, a partner $v$ should be incomparable to $u$ and satisfy condition (\ref{eq:P}). If $p(u)$ does not exist, $u$ cannot be one of the two subtrees that form an extreme set. Therefore, after computing $p(u)$ for all $u$, we can contract every $u$ whose lowest partner does not exist to its parent without losing any extreme set composed of two subtrees. After this preprocessing step, we can
assume the lowest partner $p(u)$ exists for all $u\in V(T)$.

\paragraph{Bottlenecks.}
We now define the concepts of \emph{weak bottleneck} and \emph{bottleneck} as a sort of upper bound on the cut size of an extreme set. The \emph{weak bottleneck} for a vertex $u$
is defined as $b_{weak}(u)=\min_{w\in u^\downarrow -u}\delta(u^\downarrow\setminus w^\downarrow)$, and the bottleneck is 
\[ b(u) = \min_{v\in u^\downarrow}b_{weak}(v) = \min_{v\in u^\downarrow, w\in v^\downarrow-v} \delta(v^\downarrow\setminus w^\downarrow) .\]
The fact below explains the motivation of bottleneck as an upper bound.

\begin{fact}
If $u^\downarrow\cup v^\downarrow$ is extreme, then $\delta(u^\downarrow\cup v^\downarrow)<\min\{b(u), b(v)\}$.
\end{fact}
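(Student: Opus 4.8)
The plan is to show that if $X := u^\downarrow\cup v^\downarrow$ is extreme, then its cut value is strictly smaller than both $b(u)$ and $b(v)$; by symmetry it suffices to prove $\delta(X) < b(u)$. Recall $b(u) = \min_{v'\in u^\downarrow,\,w\in (v')^\downarrow - v'} \delta((v')^\downarrow \setminus w^\downarrow)$, so it is enough to fix an arbitrary pair $v'\in u^\downarrow$, $w\in (v')^\downarrow - v'$ and show $\delta(X) < \delta((v')^\downarrow \setminus w^\downarrow)$. The key observation is that $(v')^\downarrow \setminus w^\downarrow$ is a nonempty proper subset of $u^\downarrow$ (it is nonempty since $v'\notin w^\downarrow$, and proper since at least $w$ itself lies in $u^\downarrow$ but not in it, and also because $v'\in u^\downarrow$ so $(v')^\downarrow\subseteq u^\downarrow$), and $u^\downarrow$ in turn is a proper subset of $X$ (it is proper because $v^\downarrow\ne\emptyset$ and $u\bot v$ forces $v^\downarrow \cap u^\downarrow = \emptyset$, so $v^\downarrow\subseteq X\setminus u^\downarrow$).

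First I would record this chain of strict containments: $(v')^\downarrow\setminus w^\downarrow \subsetneq u^\downarrow \subsetneq X$, with all three sets nonempty. Then, since $X$ is an extreme set, the extreme-set property applied to the proper nonempty subset $(v')^\downarrow\setminus w^\downarrow$ of $X$ gives directly $\delta((v')^\downarrow\setminus w^\downarrow) > \delta(X)$. Taking the minimum over all valid pairs $(v', w)$ yields $b(u) > \delta(X)$, i.e.\ $\delta(u^\downarrow\cup v^\downarrow) < b(u)$. Running the identical argument with the roles of $u$ and $v$ swapped gives $\delta(u^\downarrow\cup v^\downarrow) < b(v)$, and combining the two gives $\delta(u^\downarrow\cup v^\downarrow) < \min\{b(u), b(v)\}$, as claimed.

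The argument is essentially a one-line application of the definition of extreme set, so there is no real obstacle; the only point requiring a little care is verifying that every set appearing in the definition of $b(u)$ is genuinely a nonempty \emph{proper} subset of $X$, so that the strict inequality in the extreme-set definition actually applies. In particular one must use that $u$ and $v$ are incomparable (so $u^\downarrow$ does not already exhaust $X$) and that in the definition of $b_{weak}$ the inner vertex $w$ is drawn from $u^\downarrow - u$ (equivalently $(v')^\downarrow - v'$), guaranteeing $(v')^\downarrow\setminus w^\downarrow$ is neither empty nor all of $u^\downarrow$. Once these containments are in place, the conclusion is immediate, and no submodularity or posi-modularity is needed here.
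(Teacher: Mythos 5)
Your proof is correct and follows essentially the same route as the paper: both arguments observe that every set $(v')^\downarrow\setminus w^\downarrow$ appearing in the definition of $b(u)$ is a nonempty proper subset of the extreme set $u^\downarrow\cup v^\downarrow$ (being contained in $u^\downarrow$, which is itself a proper subset since $u\bot v$), so the extreme-set definition gives the strict inequality, and symmetry handles $b(v)$. The only cosmetic difference is that the paper fixes the minimizing pair $(w_1,w_2)$ attaining $b(u)$ while you quantify over all pairs and then take the minimum, which is the same argument.
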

\begin{proof}
By the definition of bottleneck, there exists some $w_1\in u^\downarrow$ and $w_2\in w_1^\downarrow-w_1$ such that  $b(u)=\delta(w_1^\downarrow\setminus w_2^\downarrow)$. Also $\delta(u^\downarrow\cup v^\downarrow) < \delta(w_1^\downarrow\setminus w_2^\downarrow)$ because $\delta(w_1^\downarrow\setminus w_2^\downarrow)\subseteq u^\downarrow$ and $u^\downarrow\cup v^\downarrow$ is extreme. Therefore, $\delta(S)<b(u)$. Swapping $u$ and $v$ in the argument gives $\delta(S)<b(v)$ as well.
\end{proof}

The next fact establishes monotonicity of bottleneck, which is useful for a binary search procedure we execute later on.
\begin{fact}
\label{lem:bottleneck-monotonic}
$b(u)$ is monotonic decreasing in a branch from leaf to root.
\end{fact}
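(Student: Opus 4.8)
The plan is to show that if $u'$ is the parent of $u$ on a branch, then $b(u') \le b(u)$, which immediately gives monotonicity from leaf to root along the branch. Recall $b(u) = \min_{v \in u^\downarrow,\, w \in v^\downarrow - v} \delta(v^\downarrow \setminus w^\downarrow)$. The key observation is that the minimization defining $b(u')$ ranges over a superset of the pairs $(v,w)$ that define $b(u)$: since $u$ is a child of $u'$, we have $u^\downarrow \subsetneq u'^\downarrow$, so every $v \in u^\downarrow$ is also in $u'^\downarrow$, and for each such $v$ the inner range $w \in v^\downarrow - v$ is identical in both cases. Hence every term $\delta(v^\downarrow \setminus w^\downarrow)$ appearing in the expression for $b(u)$ also appears in the expression for $b(u')$, so the minimum over the larger collection can only be smaller: $b(u') \le b(u)$.

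First I would make precise that on a spider (or more generally along any ancestor chain) the parent $u'$ of $u$ satisfies $u^\downarrow \subseteq u'^\downarrow$ — indeed $u \in u'^\downarrow$ and the subtree of a descendant is contained in the subtree of the ancestor. Then I would write out both minimizations explicitly as minima over sets of pairs $\{(v,w) : v \in u^\downarrow,\ w \in v^\downarrow - v\}$ versus $\{(v,w) : v \in u'^\downarrow,\ w \in v^\downarrow - v\}$, observe the first set is a subset of the second, and conclude $b(u') \le b(u)$ since the objective $\delta(v^\downarrow \setminus w^\downarrow)$ is the same function of $(v,w)$ in both. Walking up the branch one edge at a time then yields that $b$ is monotonically decreasing from leaf to root, as claimed. (One should note the minor point that if $u$ is a leaf then $u^\downarrow - u = \emptyset$, so $b_{weak}(u)$ — and possibly $b(u)$ — may be $+\infty$; this is consistent with monotonicity since $+\infty$ is the top value, and any ancestor with a nonempty inner range gets a finite, hence smaller or equal, value.)

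There is essentially no obstacle here: the statement is a direct consequence of the fact that a minimum over a larger index set is no larger, combined with the nesting of subtrees along a branch. The only thing to be careful about is the edge case of leaves and the bookkeeping of which quantifier ($v$ versus $w$) is enlarged — it is the outer one ($v$ ranges over a bigger subtree), while the inner constraint $w \in v^\downarrow - v$ depends only on $v$ and is unchanged. No auxiliary lemmas (submodularity, posi-modularity, laminarity) are needed for this particular fact.
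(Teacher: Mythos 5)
Your proof is correct and is exactly the intended (one-line) argument: since $b(u)=\min_{v\in u^\downarrow}b_{weak}(v)$ and subtrees are nested along a branch, an ancestor's minimum ranges over a superset of pairs, so $b$ can only decrease going from leaf to root; the paper states this fact without proof precisely because of this observation. Your handling of the leaf case ($b_{weak}$ vacuously $+\infty$) is a reasonable and harmless addition.
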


\begin{fact}
We can compute $b(u)$ for every $u\in V(T)-r$ in $\tilde{O}(m)$ time.
\end{fact}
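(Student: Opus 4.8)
The plan is to reduce the computation to the \emph{weak} bottlenecks $b_{weak}(v)$ and to read off every $b(u)$ from those. Since $T$ is a spider, every non-root, non-leaf vertex $u$ has a unique child $u'$ (towards the leaf of its branch), so the definition $b(u)=\min_{v\in u^\downarrow}b_{weak}(v)$ collapses to $b(u)=\min\{b_{weak}(u),\,b(u')\}$, with $b(u)=b_{weak}(u)$ at a leaf. Hence, once all $b_{weak}$ values are known, a single suffix-minimum scan down each branch produces all $b$ values in $O(n)$ time total (and, as a byproduct, the monotonicity claimed in \Cref{lem:bottleneck-monotonic}). So it suffices to compute $b_{weak}(v)$ for every non-root $v$ in $\tilde{O}(m)$ time.

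Fix a branch $B=(v_1,\dots,v_k)$ of the spider, with $v_1$ the root's child and $v_k$ the leaf, so that $v_i^\downarrow=\{v_i,\dots,v_k\}$ and, for $j>i$, the set $v_i^\downarrow\setminus v_j^\downarrow=\{v_i,\dots,v_{j-1}\}$ is a contiguous segment of $B$. Therefore $b_{weak}(v_i)=\min_{i<j\le k}g(i,j)$ where $g(i,j):=\delta(\{v_i,\dots,v_{j-1}\})$, and the next ingredient is the elementary ``$2$-respecting'' identity
\[
g(i,j)=\delta(v_i^\downarrow)+\delta(v_j^\downarrow)-2\,\delta(v_j^\downarrow,\,V\setminus v_i^\downarrow),
\]
which follows by writing $V$ as the disjoint union of $\{v_i,\dots,v_{j-1}\}$, $v_j^\downarrow$, and $V\setminus v_i^\downarrow$ and expanding the three cut values. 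All single-subtree cut values $\delta(v^\downarrow)$ are precomputed once for the whole tree in $O(m)$ time, since $\{v^\downarrow : v\in V(T)\}$ is a laminar family and we already have a procedure computing the cut values of a laminar family in $O(m)$ time (the one used when removing non-extreme sets).

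The core step is a sweep over each branch with a segment tree on indices $1,\dots,k$ supporting range additions and range-minimum queries. I would maintain an array $M$ with the invariant that, while processing index $i$, $M[j]=\delta(v_j^\downarrow)-2\,\delta(v_j^\downarrow,\,V\setminus v_i^\downarrow)$ for all $j>i$; then $b_{weak}(v_i)=\delta(v_i^\downarrow)+\min_{i<j\le k}M[j]$ is a single range-min query (equal to $+\infty$ when $i=k$). To initialize $M$ for $i=1$, set $M[j]=\delta(v_j^\downarrow)$ and, for every edge $e$ with exactly one endpoint $v_p\in B$ (the other endpoint off $B$, possibly $r$ or $c$), range-add $-2w(e)$ to $M[1..p]$; this exactly accounts for $\delta(v_j^\downarrow,\,V\setminus v_1^\downarrow)$, which counts the off-branch edges at vertices $v_p$ with $p\ge j$. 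To advance from $i$ to $i+1$, observe $V\setminus v_{i+1}^\downarrow=(V\setminus v_i^\downarrow)\cup\{v_i\}$, so $M[j]$ must decrease by $2\,\delta(v_j^\downarrow,\{v_i\})$, i.e.\ by twice the total weight of edges from $v_i$ down to $\{v_j,\dots,v_k\}$; hence, right after the query at index $i$, for every branch edge $(v_i,v_q)$ with $q>i$ I range-add $-2w(v_i,v_q)$ to $M[1..q]$. Correctness is then immediate from the invariant and the identity. For the running time: each edge of $G$ triggers $O(1)$ range updates over the whole run --- an on-branch edge $(v_p,v_q)$ with $p<q$ triggers one update at step $p$ of its branch, and every other edge triggers one update per branch it touches (at most two) during that branch's initialization --- and each index triggers one query, so over all branches this is $O(m)$ segment-tree operations, i.e.\ $O(m\log n)$; adding the $O(m)$ precomputation and the $O(n)$ suffix-minimum pass gives $\tilde{O}(m)$.

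The step I expect to need the most care is the cross term $\delta(v_j^\downarrow,\,V\setminus v_i^\downarrow)$ in the identity: it depends on \emph{both} endpoints of the segment, so it cannot be tabulated per vertex, and the entire sweep is designed just to maintain it incrementally. Identifying precisely which positions of $M$ each edge must touch as $i$ increases, and checking that the incremental updates track this term exactly, is the delicate part; the rest is routine bookkeeping.
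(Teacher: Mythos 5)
Your proposal is correct and essentially the same as the paper's proof: both work branch-by-branch with a range-add/range-min array, and the quantity you maintain, $\delta(v_j^\downarrow)-2\,\delta(v_j^\downarrow, V\setminus v_i^\downarrow)$, is identically equal to the paper's $2\,\delta(v_i^\downarrow\setminus v_j^\downarrow,\, v_j^\downarrow)-\delta(v_j^\downarrow)$, so your identity is just a rearrangement of theirs. The only (cosmetic) differences are that you sweep each branch top-down, initializing with the off-branch edges and subtracting on-branch edge contributions as you descend, while the paper sweeps leaf-to-root and adds contributions as each edge's upper endpoint enters the subtree; both give the same $\tilde{O}(m)$ bound.
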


\begin{proof}
Note that $b(u)$ can be computed independently for each branch, so we focus on a single branch $B$. We first calculate $b_{weak}(u)$ for every $u$. Observe that $\delta(u^\downarrow\setminus w^\downarrow)=2\delta(u^\downarrow\setminus w^\downarrow,w^\downarrow)+\delta(u^\downarrow)-\delta(w^\downarrow)$. We can easily calculate $\delta(w^\downarrow)$ for all $w$ by traversing the vertices of the branch from the leaf upwards, and using that for a parent $v$ of vertex $w$, we have $\delta(v^\downarrow)=\delta(w^\downarrow)+\delta(v)-2\delta(v,w^\downarrow)$. This takes time proportional to $|B|$ plus the number of edges incident to vertices in $B$. Next, initialize a dynamic array with value $-\delta(w^\downarrow)$ on each vertex $w$. Traverse the branch from the leaf upwards, and for the current vertex $u$, we take all edges $(u,v)$ for $v\in u^\downarrow$, and for each such edge, we add twice its weight to all vertices on the array from $v$ inclusive to $u$ exclusive. This way, each vertex $w\in u^\downarrow$ has current value $2w(u^\downarrow\setminus w^\downarrow, w^\downarrow)-\delta(w^\downarrow)$, so we can query the minimum value of the prefix of the array up to $u$ to obtain $\min_{w\in u^\downarrow}2w(u^\downarrow\setminus w^\downarrow, w^\downarrow)-\delta(w^\downarrow)$. Finally, adding $\delta(u^\downarrow)$ to the query gives us $\min_{w\in u^\downarrow}\big(2\delta(u^\downarrow\setminus w^\downarrow,w^\downarrow)+\delta(u^\downarrow)-\delta(w^\downarrow)\big) = b_{weak}(u)$. Altogether, the algorithm on branch $B$ takes time proportional to (up to logarithmic factors) $|B|$ plus the number of edges incident to vertices in $B$. Summed over all branches $B$, this is $\tilde O(m)$ time total.
\end{proof}

\subsubsection{Lowest Partner Condition}

The follow lemma captures the key property of our definition of the lowest partner $p(u)$.

\begin{lemma}
\label{lem:subtree-partner-condition}
If $u^\downarrow\cup v^\downarrow$ is extreme, then for every $ w\in u^\downarrow$ whose lowest partner exists, we have $p(w)\parallel p(u)$. Symmetrically, for all $w\in v^\downarrow$, we have $p(w)\parallel p(v)$.
\end{lemma}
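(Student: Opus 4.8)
The plan is to argue by contradiction, exploiting the asymmetry in condition~(\ref{eq:P}) together with posi-modularity. Fix an extreme set $S=u^\downarrow\cup v^\downarrow$ (so $u\perp v$, both non-root) and a vertex $w\in u^\downarrow$ whose lowest partner $p(w)$ exists; suppose toward a contradiction that $p(w)\not\parallel p(u)$. First I would dispose of the degenerate case $w=u$, where the claim is immediate since $p(w)=p(u)\parallel p(u)$. So assume $w$ is a strict descendant of $u$; then $u^\downarrow\setminus w^\downarrow\ne\emptyset$.

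Next I would pin down the spider geometry. By \Cref{fact:extreme}, since $S$ is extreme we have $v\in p(u)^\uparrow$, so $v$ and $p(u)$ lie on a common branch; hence the hypothesis $p(w)\not\parallel p(u)$ forces $p(w)$ onto a different branch, so $p(w)^\downarrow\cap v^\downarrow=\emptyset$. Also $p(w)\perp w$ by definition of partner, and $w$ lies on $u$'s branch, so $p(w)$ is on yet another branch and $p(w)^\downarrow\cap u^\downarrow=\emptyset$ (ruling out $p(w)=r$, since the root is comparable to everything). Consequently $p(w)^\downarrow$ is disjoint from $S=u^\downarrow\cup v^\downarrow$, while $w^\downarrow\subseteq u^\downarrow\subseteq S$.

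The heart of the argument is to apply posi-modularity to $S$ and $X:=w^\downarrow\cup p(w)^\downarrow$. Using the disjointness above, $S\setminus X=(u^\downarrow\setminus w^\downarrow)\cup v^\downarrow$, which is a nonempty proper subset of $S$ (it contains $v$ but not $w$), so extremeness of $S$ gives $\delta(S\setminus X)>\delta(S)$; and $X\setminus S=p(w)^\downarrow$. Posi-modularity, $\delta(S\setminus X)+\delta(X\setminus S)\le\delta(S)+\delta(X)$, then yields $\delta(S)+\delta(p(w)^\downarrow)<\delta(S)+\delta(w^\downarrow\cup p(w)^\downarrow)$, i.e.\ $\delta(p(w)^\downarrow)<\delta(w^\downarrow\cup p(w)^\downarrow)$. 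But $p(w)$ is a partner of $w$, so condition~(\ref{eq:P}) applied to the pair $(w,p(w))$ gives the reverse strict inequality $\delta(p(w)^\downarrow)>\delta(w^\downarrow\cup p(w)^\downarrow)$ — a contradiction. The symmetric statement for $w\in v^\downarrow$ follows verbatim after swapping the roles of $u$ and $v$, which is legitimate since $S$ is symmetric in $u,v$ and \Cref{fact:extreme} also provides $u\in p(v)^\uparrow$.

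The only real care needed — and hence the ``main obstacle'' — is the bookkeeping in the second paragraph: verifying that under the contradiction hypothesis the three subtrees $u^\downarrow$, $v^\downarrow$, $p(w)^\downarrow$ are pairwise disjoint (in particular that $p(w)$ cannot sit on $u$'s branch or be the root), since this disjointness is exactly what makes the posi-modularity step collapse to a clean contradiction. Once that is established, the rest is a two-line computation.
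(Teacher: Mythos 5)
Your proof is correct and follows essentially the same route as the paper: after establishing that $p(w)^\downarrow$ is disjoint from $S=u^\downarrow\cup v^\downarrow$ (via $v\parallel p(u)$ and $p(w)\perp w$), both arguments apply posi-modularity to $S$ and $w^\downarrow\cup p(w)^\downarrow$, and derive a contradiction between the extremeness of $S$ (applied to $S\setminus w^\downarrow$) and condition~(\ref{eq:P}) for the pair $(w,p(w))$. The only differences are cosmetic bookkeeping (the explicit $w=u$ case and the remark that $p(w)$ cannot be the root), which the paper leaves implicit.
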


\begin{figure}[t]
\centering
\includegraphics[width=.3\textwidth]{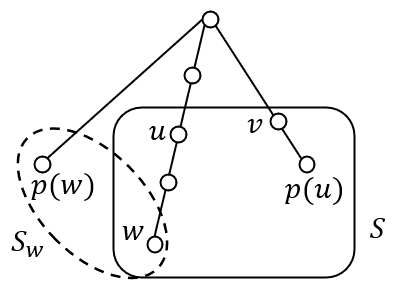}
%\includesvg[width=.32\textwidth]{subtree-partner}
\caption{Bad case in the proof of \Cref{lem:subtree-partner-condition}}
\end{figure}

\begin{proof}
Assume for contradiction that there exists some $w\in u^\downarrow$ with $p(w)\perp p(u)$. Since (\ref{eq:P}) holds for $u$ and $v$, the lowest partner $p(u)$ must be lower than $v$, and in particular, they share the same branch of the spider, so $p(w)\perp v$. By definition of lowest partner, we must have $p(w)\perp w$, and since $u$ and $w$ share a branch, this implies $p(w)\perp u$.
It follows that $p(w)^\downarrow\cap S=\emptyset$.
Let $S_w = w^\downarrow\cup   p(w)^\downarrow$. By condition~(\ref{eq:P}), we have $\delta(S_w) < \delta(p(w)^\downarrow)$. Since $S$ is extreme, $\delta(S) < \delta(S\setminus w^\downarrow)$. Adding these two inequalities contradicts $\delta(S)+\delta(S_w)\ge \delta(S\setminus S_w)+\delta(S_w\setminus S)=\delta(S\setminus w^\downarrow)+\delta(p(w)^\downarrow)$, which holds by posi-modularity.
\end{proof}

This lemma allows us to pair up branches as follows. Compute lowest partners $p(u)$ for all vertices $u$. Then, for each branch $B$, take the lowest vertex $u$ in that branch whose lowest partner $p(u)$ is defined (if it exists), and let $f(B)$ be the branch containing $p(u)$. We pair up branches $B,B'$ satisfying $B'=f(B)$ and $B=f(B')$. Some branches may not be paired; we leave them alone.

\begin{lemma}
For any extreme set $u^\uparrow\cup v^\uparrow$, the two branches containing $u$ and $v$ are paired up.
\end{lemma}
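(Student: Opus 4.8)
The plan is to unwind the definition of the pairing map $f$ and verify that both $f(B_u)=B_v$ and $f(B_v)=B_u$, where $B_u$ and $B_v$ are the branches of the spider containing $u$ and $v$ (these are distinct branches since $u\perp v$). The starting observation comes from \Cref{fact:extreme}: because $u^\downarrow\cup v^\downarrow$ is extreme, condition~(\ref{eq:P}) holds for the pair, and moreover $v\in p(u)^\uparrow$ and $u\in p(v)^\uparrow$. In particular $p(u)$ is defined; and since $v$ lies on the path from $p(u)$ to the root, $p(u)$ is a descendant of $v$, hence lies on branch $B_v$ (on a spider every non-root vertex has at most one child, so all descendants of $v$ stay on $B_v$). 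Symmetrically, $p(v)$ lies on $B_u$.

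Next I would identify which vertex $f$ actually inspects on $B_u$. By definition $f(B_u)$ is the branch containing $p(u')$, where $u'$ is the vertex of \emph{largest depth} on $B_u$ whose lowest partner is defined. Such a $u'$ exists because $u$ itself qualifies; and since $u'$ has depth at least that of $u$ and lies on the same branch, $u'$ is a descendant of $u$ (or $u'=u$), so $u'\in u^\downarrow$. Applying \Cref{lem:subtree-partner-condition} with $w=u'$ then gives $p(u')\parallel p(u)$, so $p(u')$ lies on the same branch as $p(u)$, namely $B_v$. Hence $f(B_u)=B_v$. Running the same argument with the roles of $u$ and $v$ interchanged yields $f(B_v)=B_u$, and therefore $B_u$ and $B_v$ are paired.

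I do not expect a real obstacle here once \Cref{fact:extreme} and \Cref{lem:subtree-partner-condition} are available; the only points requiring care are bookkeeping ones. First, one must use the correct sense of ``lowest'': $f$ selects the vertex \emph{farthest} from the root that has a partner, and the argument relies on this vertex lying inside $u^\downarrow$ — which holds precisely because $u$ itself has a partner, forcing $f$'s choice to be no shallower than $u$. Second, one should note in passing that $f$ is well defined on both $B_u$ and $B_v$ (each contains a vertex, namely $u$ resp.\ $v$, with a defined lowest partner), so that the assertion ``$B_u$ and $B_v$ are paired up'' is meaningful.
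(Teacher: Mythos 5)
Your proof is correct and follows essentially the same route as the paper: invoke \Cref{fact:extreme} to place $p(u)$ on $v$'s branch and $p(v)$ on $u$'s branch, then apply \Cref{lem:subtree-partner-condition} to the lowest vertex with a defined partner on each branch to conclude $f(B_u)=B_v$ and $f(B_v)=B_u$. Your extra bookkeeping (why that lowest vertex lies in $u^\downarrow$, and why $f$ is defined on both branches) is exactly the implicit content of the paper's one-line argument.
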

\begin{proof}
By \Cref{fact:extreme} and \Cref{lem:subtree-partner-condition}, for an extreme set $u^\uparrow\cup v^\uparrow$, both $p(u)$ and $p(v)$ are defined, and for the lowest vertices $u'\parallel u$ and $v'\parallel v$ whose $p(u'),p(v')$ are defined, we have $p(u')\parallel p(u)\parallel v$ and $p(v')\parallel p(v)\parallel u$. In other words, the two branches containing $u$ and $v$ are paired up, as needed.
\end{proof}

Therefore, we can process each pair of branches separately by contracting all other branches to the root. The remaining task is to compute, for each pair of branches $B,B'$, a laminar family that contains all extreme sets of the form $u^\downarrow\cup v^\downarrow$. The laminar family we construct is
\[ \mathcal F_{(B,B')} = \big\{ u^\downarrow\cup v^\downarrow: u\in B,\,v\in B',\,\delta(u^\downarrow\cup v^\downarrow)<\min\{b(u),b(v)\}\big \} .\]

\begin{lemma}\label{lem:F-laminar}
The set $\mathcal F_{(B,B')}$ is laminar. That is, any two sets $u_1^\uparrow\cup v_1^\uparrow,\,u_2^\uparrow\cup v_2^\uparrow\in\mathcal F_{(B,B')}$ satisfy either $u_1\in u_2^\downarrow,\,v_1\in v_2^\downarrow$ or $u_2\in u_1^\downarrow,\,v_2\in v_1^\downarrow$.
\end{lemma}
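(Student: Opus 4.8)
The plan is to take two sets $X_1 = u_1^\downarrow \cup v_1^\downarrow$ and $X_2 = u_2^\downarrow \cup v_2^\downarrow$ in $\mathcal F_{(B,B')}$, with $u_1, u_2 \in B$ and $v_1, v_2 \in B'$, and show that one is contained in the other. Since $B$ is a single branch (a path), the vertices $u_1, u_2$ are comparable; without loss of generality assume $u_1 \in u_2^\downarrow$, i.e., $u_1$ is at least as deep as $u_2$ on $B$, so $u_1^\downarrow \subseteq u_2^\downarrow$. Likewise $v_1, v_2$ are comparable on $B'$. The only thing that can break laminarity is the ``crossing'' configuration where $u_1^\downarrow \subseteq u_2^\downarrow$ but $v_2^\downarrow \subsetneq v_1^\downarrow$ (or symmetrically), and the goal is to rule this out using the bottleneck upper bounds in the definition of $\mathcal F_{(B,B')}$ together with submodularity/posi-modularity.

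The key steps, in order: (1) Reduce to the crossing case $u_1^\downarrow \subseteq u_2^\downarrow$, $v_2^\downarrow \subseteq v_1^\downarrow$ and observe that in this case $X_1 \cap X_2 = u_1^\downarrow \cup v_2^\downarrow$ and $X_1 \cup X_2 = u_2^\downarrow \cup v_1^\downarrow$ (using that $B$ and $B'$ are different branches, so the four subtrees interact only through the root, which is contracted away / not in any $u^\downarrow$ or $v^\downarrow$). (2) Apply submodularity: $\delta(X_1 \cap X_2) + \delta(X_1 \cup X_2) \le \delta(X_1) + \delta(X_2)$. (3) Now I want to derive a contradiction with the bottleneck bounds. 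Note $u_1^\downarrow \cup v_2^\downarrow$ is a set of the form ``two subtrees'', and I will argue $\delta(u_1^\downarrow \cup v_2^\downarrow) \ge \min\{b(u_1), b(v_2)\}$ — intuitively because $b(\cdot)$ is a lower bound on the cut value of any set sitting above those vertices in a sense captured by \Cref{lem:bottleneck-monotonic} and the definition of $b$. Actually the cleaner route: use that any extreme set of this form has cut value strictly below its bottleneck, so the ``non-extremeness'' of $u_1^\downarrow \cup v_2^\downarrow$ or the minimality forces $\delta(u_1^\downarrow \cup v_2^\downarrow)$ to be large, while $\delta(X_1), \delta(X_2) < \min\{b(u_1),b(v_1)\}, \min\{b(u_2),b(v_2)\}$ respectively, and monotonicity of $b$ along a branch ($b(u_1) \le b(u_2)$ since $u_1$ is deeper, wait — monotone decreasing leaf-to-root means $b(u_1) \ge b(u_2)$) lets the arithmetic close.

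I would carry out step (3) concretely as follows: from $u_1 \in u_2^\downarrow$ and \Cref{lem:bottleneck-monotonic}, $b(u_2) \le b(u_1)$; from $v_2 \in v_1^\downarrow$, $b(v_1) \le b(v_2)$. Hence $\min\{b(u_1), b(v_2)\} \ge \min\{b(u_2), b(v_1)\}$ need not hold directly, so instead I bound: $\delta(X_1) < \min\{b(u_1), b(v_1)\} \le b(v_1) \le b(v_2)$ and $\delta(X_2) < \min\{b(u_2), b(v_2)\} \le b(u_2) \le b(u_1)$. On the other side, I claim $\delta(X_1 \cap X_2) = \delta(u_1^\downarrow \cup v_2^\downarrow) \ge \min\{b(u_1), b(v_2)\}$: indeed if $u_1^\downarrow \cup v_2^\downarrow$ were extreme its cut value would be $< \min\{b(u_1),b(v_2)\}$ but would then also lie in $\mathcal F_{(B,B')}$, which is fine; the real point is that $b(u_1) = \min_{w \in u_1^\downarrow, x \in w^\downarrow - w}\delta(w^\downarrow \setminus x^\downarrow)$ only involves subtrees strictly inside $u_1^\downarrow$, and similarly for $b(v_2)$, so none of those witness-sets can be $u_1^\downarrow \cup v_2^\downarrow$ itself — I need a short argument (via posi-modularity on $u_1^\downarrow \cup v_2^\downarrow$ against the witness achieving $b(u_1)$) showing $\delta(u_1^\downarrow \cup v_2^\downarrow) \ge b(u_1)$, and symmetrically $\ge b(v_2)$. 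Combining, submodularity gives $\min\{b(u_1),b(v_2)\} + \delta(X_1 \cup X_2) \le \delta(X_1) + \delta(X_2) < b(v_2) + b(u_1)$, so $\delta(X_1 \cup X_2) < \max\{b(u_1), b(v_2)\}$, which I then contradict by showing $X_1 \cup X_2 = u_2^\downarrow \cup v_1^\downarrow$ also has cut value at least $\min\{b(u_2),b(v_1)\}$ — hmm, this needs care, so more likely the argument is asymmetric and one should instead directly contradict extremeness of $X_1$ or $X_2$ using $X_1 \cap X_2 \subsetneq X_1$. **The main obstacle** I anticipate is getting the four-way inequality bookkeeping with the bottleneck bounds to actually close without slack: the definition of $\mathcal F_{(B,B')}$ uses strict inequality against $\min\{b(u),b(v)\}$, and I will need the precise combinatorial fact that $\delta$ of a ``two-subtree'' set is always $\ge$ the bottleneck of each of its two defining vertices (proven by a posi-modularity swap against the minimizing witness of the bottleneck), which is the crux that turns the submodularity inequality into the desired contradiction.
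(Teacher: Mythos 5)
Your setup is right (reduce to the crossing configuration $u_1\in u_2^\downarrow$, $v_2\in v_1^\downarrow$ and aim for a contradiction with the bottleneck bounds), but the crux of your argument does not work. You route the contradiction through submodularity on $X_1\cap X_2=u_1^\downarrow\cup v_2^\downarrow$ and $X_1\cup X_2=u_2^\downarrow\cup v_1^\downarrow$, and the step you need is $\delta(u_1^\downarrow\cup v_2^\downarrow)\ge \min\{b(u_1),b(v_2)\}$. That inequality is simply false in general: the whole point of this section is that extreme two-subtree sets satisfy the \emph{opposite} strict inequality (that is the Fact preceding \Cref{lem:bottleneck-monotonic}), and $u_1^\downarrow\cup v_2^\downarrow$ may itself be a member of $\mathcal F_{(B,B')}$ nested inside $X_1$ and $X_2$. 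Your proposed repair via posi-modularity against the witness achieving $b(u_1)$ cannot succeed either, because that witness $w^\downarrow\setminus x^\downarrow$ lies entirely inside $u_1^\downarrow\subseteq u_1^\downarrow\cup v_2^\downarrow$, so the two sets do not cross and posi-modularity gives nothing. Your final fallback, ``directly contradict extremeness of $X_1$ or $X_2$,'' is also unavailable: membership in $\mathcal F_{(B,B')}$ only asserts the bottleneck inequality, not extremeness, and the lemma must hold for all members of the family.

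The argument that closes is the posi-modular one on the \emph{differences}, not submodularity on intersection/union. In the crossing case, $X_1\setminus X_2=v_1^\downarrow\setminus v_2^\downarrow$ and $X_2\setminus X_1=u_2^\downarrow\setminus u_1^\downarrow$, and these are exactly sets of the form $v^\downarrow\setminus w^\downarrow$ with $w\in v^\downarrow-v$ over which the bottleneck minimizes; hence, by definition (no auxiliary claim needed), $\delta(v_1^\downarrow\setminus v_2^\downarrow)\ge b(v_1)$ and $\delta(u_2^\downarrow\setminus u_1^\downarrow)\ge b(u_2)$. Posi-modularity then gives $\delta(X_1)+\delta(X_2)\ge b(v_1)+b(u_2)$, contradicting $\delta(X_1)<\min\{b(u_1),b(v_1)\}\le b(v_1)$ and $\delta(X_2)<\min\{b(u_2),b(v_2)\}\le b(u_2)$. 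This is precisely why $b(\cdot)$ is defined as a minimum over ``subtree minus subtree'' cuts: it is tailored to lower-bound the difference sets arising from a crossing, not the intersection.
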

\begin{proof}
Suppose for contradiction that $u_1\in u_2^\downarrow-u_2$ and $v_2\in v_1^\downarrow-v_1$ (without loss of generality). Let $S_1=u_1^\uparrow\cup v_1^\uparrow$ and $S_2=u_2^\uparrow\cup v_2^\uparrow$. Then, the sets $S_1$ and $S_2$ cross, and by posi-modularity,
\begin{align*}
 \delta(S_1)+\delta(S_2) &\ge \delta(S_1\setminus S_2)+\delta(S_2\setminus S_1)\\
 &=\delta(v_1^\downarrow\setminus v_2^\downarrow)+\delta(u_2^\downarrow\setminus u_1^\downarrow)\\
 &\ge b(v_1)+b(u_2).
\end{align*}
But $S_1,S_2\in\mathcal F_{(B,B')}$ implies that $\delta(S_1)<b(v_1)$ and $\delta(S_2)<b(v_2)$, a contradiction.
\end{proof}
\begin{lemma}
Over all branches $B,B'$, we can compute all pairs $(u,v):u\in B,v\in B'$ with $u^\downarrow\cup v^\downarrow\in\mathcal F$ in $\tilde{O}(m)$ time total.
\end{lemma}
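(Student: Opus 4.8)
The plan is as follows. First I would reduce to handling a single branch pair. Only branch pairs $(B,B')$ that were matched up in the pairing step can contain an extreme set of the form $u^\downarrow\cup v^\downarrow$ with $u\in B,v\in B'$ (and $\mathcal F_{(B,B')}$ is defined only for these), and each branch lies in at most one matched pair, so the total size of all the instances — each obtained by contracting all branches other than $B,B'$ to the root — is $O(n+m)$. Hence it suffices to produce $\mathcal F_{(B,B')}$ for one pair in time $\tilde O(|B|+|B'|+m_{B,B'})$, where $m_{B,B'}$ is the number of edges incident to $V(B)\cup V(B')$. Order the vertices of $B$ as $u_1,\dots,u_a$ and of $B'$ as $v_1,\dots,v_b$ from leaf to root; on the spider $u_i^\downarrow=\{u_1,\dots,u_i\}$ and $v_j^\downarrow=\{v_1,\dots,v_j\}$, so $\delta(u_i^\downarrow\cup v_j^\downarrow)=\delta(u_i^\downarrow)+\delta(v_j^\downarrow)-2\delta(u_i^\downarrow,v_j^\downarrow)$, and the quantities $\delta(u_i^\downarrow)$, $\delta(v_j^\downarrow)$, $b(u_i)$, $b(v_j)$ are all available from the earlier steps.

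Next I would set up the incremental data structures. Sweep $i$ from $1$ to $a$, growing $u^\downarrow$ one vertex at a time, and maintain: (i) a Fenwick tree over the positions $1,\dots,b$ whose prefix sum through $j$ equals $\delta(u_i^\downarrow,v_j^\downarrow)$, updated by one point update per edge incident to the newly added vertex $u_i$; and (ii) two segment trees over $1,\dots,b$ with lazy range-addition, one storing $A(j):=\delta(v_j^\downarrow)-2\delta(u_i^\downarrow,v_j^\downarrow)$ and one storing $A(j)-b(v_j)$, each supporting range-minimum queries. When $u_i$ is added, every edge $(u_i,v_j)$ decreases $A(j')$ by twice its weight for all $j'\ge j$, i.e.\ it is a suffix-addition, which both segment trees support in $O(\log n)$. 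The total maintenance cost over the whole sweep is $O(m_{B,B'}\log n)$, and for any $i,j$ we can evaluate $\delta(u_i^\downarrow\cup v_j^\downarrow)$ in $O(\log n)$.

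Now the core step, enumerating, for each $u_i$, all $v_j$ with $(u_i,v_j)\in\mathcal F_{(B,B')}$. Because $b(v_1)\ge b(v_2)\ge\cdots$ by \Cref{lem:bottleneck-monotonic}, there is an index $j_0=\max\{j:b(v_j)\ge b(u_i)\}$, found by binary search, such that $\min\{b(u_i),b(v_j)\}=b(u_i)$ for $j\le j_0$ and $=b(v_j)$ for $j>j_0$. Thus $(u_i,v_j)\in\mathcal F_{(B,B')}$ if and only if either $j\le j_0$ and $A(j)<b(u_i)-\delta(u_i^\downarrow)$, or $j>j_0$ and $A(j)-b(v_j)<-\delta(u_i^\downarrow)$. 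Each of these is a single threshold condition on a quantity held in one of the two segment trees, over a contiguous range of positions; decomposing the range into $O(\log n)$ canonical nodes and then descending into a node only when its stored minimum is below the threshold reports all qualifying $j$ in time $O((k_i+\log n)\log n)$, where $k_i$ is the number reported. We output the pair $(u_i,v_j)$ for each such $j$, and this is exactly $\{v_j:(u_i,v_j)\in\mathcal F_{(B,B')}\}$.

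Finally I would put the accounting together. By \Cref{lem:F-laminar} the sets in $\mathcal F_{(B,B')}$ form a laminar chain, and every member has the form $u^\downarrow\cup v^\downarrow$ with strictly increasing cardinalities along the chain, so $|\mathcal F_{(B,B')}|\le|B|+|B'|$; since a set of this form determines $(u,v)$ uniquely, $\sum_i k_i=|\mathcal F_{(B,B')}|=O(|B|+|B'|)$. Hence the enumeration costs $O((|B|+|B'|)\log^2 n)$ across all $i$ (plus $O(a\log n)$ for the binary searches), which together with the $O(m_{B,B'}\log n)$ maintenance cost is $\tilde O(|B|+|B'|+m_{B,B'})$ for one pair, and $\tilde O(n+m)=\tilde O(m)$ over all matched pairs. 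The step I expect to require the most care is reporting, for each $u_i$, exactly the positions satisfying the conjunction $\delta(u_i^\downarrow\cup v_j^\downarrow)<b(u_i)$ and $\delta(u_i^\downarrow\cup v_j^\downarrow)<b(v_j)$ in output-sensitive time: a range-minimum structure cannot directly report positions meeting an AND of two unrelated threshold conditions without incurring ``dead-end'' traversals, and the resolution — splitting $B'$ at $j_0$ into a prefix where only the $b(u_i)$ bound is binding and a suffix where only the $b(v_j)$ bound is binding — is exactly what the monotonicity of $b$ along a branch buys us. (One must also note that along a fixed $u_i$ the valid positions need not form a contiguous segment of $B'$, since $j\mapsto\delta(u_i^\downarrow\cup v_j^\downarrow)$ is not monotone; the two single-threshold enumerations above handle this without needing contiguity.)
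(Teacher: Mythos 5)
Your proposal is correct and follows essentially the same approach as the paper: a leaf-to-root sweep over one branch of each matched pair while maintaining $\delta(v^\downarrow)-2\delta(u^\downarrow,v^\downarrow)$ under range updates, using the monotonicity of $b$ along a branch to reduce membership in $\mathcal F_{(B,B')}$ to single-threshold range conditions, reporting hits output-sensitively, and charging the number of reported pairs to laminarity of $\mathcal F_{(B,B')}$. The only differences are cosmetic: the paper handles the case $b(u)>b(v)$ by swapping the roles of $B$ and $B'$ and rerunning, rather than keeping a second structure storing $A(j)-b(v_j)$, and it reports below-threshold positions by repeated range-minimum queries (temporarily adding a large value $M$) instead of segment-tree descent.
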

\begin{proof}
For a fixed pair of branches $B,B'$, we describe an algorithm that finds all $u^\downarrow\cup v^\downarrow\in\mathcal F_{(B,B')}$ such that $b(u)\le b(v)$. The other case $b(u)>b(v)$ can be handled by swapping $B$ and $B'$ and running the same algorithm. Repeating the algorithm for all pairs of branches establishes the lemma.

Fix a pair of branches $B, B'$. We maintain a range minimum query data structure $\mathcal D$ on the vertices in branch $B'$. Initialize the data structure with value $\delta(v^\downarrow)$ for each vertex $v\in B'$. 
%Fix a pair of branches $B,B'$. We initialize two range minimum query data structures $\mathcal D_1,\mathcal D_2$ on the vertices in branch $B'$. The first is initialized with value $\delta(v^\downarrow)$ on all vertices, and the second is initialized with value $\delta(v^\downarrow)-b(v)$ for each vertex $v\in B'$. 

Now iterate through the vertex $u\in B$ from leaf to root.  Let the current iteration be at vertex $u\in B$. First, for each edge $(u,v)$ with $v\in B'$, subtract twice its weight from all vertices in $v^\uparrow$ in the data structures, which is an interval update. This ensures that each element $v$ has current value $\delta(v^\downarrow)-2\delta(u^\downarrow,v^\downarrow)$ in the data structure. Next, we seek all sets $u^\downarrow\cup v^\downarrow\in\mathcal F$ for the current $u$, assuming $b(u)\le b(v)$. By monotonicity of $b(v)$ (\Cref{lem:bottleneck-monotonic}), the vertices $v\in B'$ satisfying $b(u)\le b(v)$ form a consecutive interval $I$ in the branch which can be found by binary search. To find vertices $v\in I$ with $u^\downarrow\cup v^\downarrow\in\mathcal F$ and $b(u)\le b(v)$, we are looking for vertices $v\in I$ satisfying $\delta(u^\downarrow\cup v^\downarrow)<b(u)$. Note that $\delta(u^\downarrow\cup v^\downarrow)=\delta(u^\downarrow)+\delta(v^\downarrow)-2\delta(u^\downarrow,v^\downarrow)$, so this is equivalent to $\delta(v^\downarrow)-2\delta(u^\downarrow,v^\downarrow) < b(u)-\delta(u^\downarrow)$, so it suffices to find all vertices $v$ whose value in $\mathcal D$ is less than $b(u)-\delta(u^\downarrow)$, a value independent of $v$. This can be done by repeatedly querying for the vertex of minimum value inside interval $I$ in $\mathcal D$, and if the value is less than $b(u)-\delta(u^\downarrow)$, then add a large value $M$ to the value of $v$ and repeat, ensuring a different vertex has the minimum value this time; this recovers all such $(u,v)$, and we can subtract $M$ from these vertices $v$ once we are done. 

Altogether, for vertex $u\in B$, the total running time is proportional to (up to $\textup{polylog}(n)$ factors) the number of edges $(u,v)$ with $v\in B'$ plus the number of pairs $(u,v)$ found. The former totals at most the number of edges between branches $B$ and $B'$, and the latter totals $O(|B|+|B'|)$ since $\mathcal F$ is a laminar family by \Cref{lem:F-laminar}. Finally, over all pairs of branches $B,B'$, the number of edges between pairs of branches totals at most $m$, and the sum of $O(|B|+|B'|)$ totals $O(n)$. It follows that the entire algorithm takes $\tilde{O}(m)$ time.
\end{proof}

Next, note that the laminar families $\mathcal F_{(B,B')}$ are disjoint from each other since they are contained in their respective branches $B\cup B'$ which are pairwise disjoint. It follows that their union $\bigcup_{(B,B')}\mathcal F_{(B,B')}$ is also a laminar family. We have thus computed a laminar family containing all desired extreme sets in $\tilde{O}(m)$ time.

\subsection{Universe 4: Complement of Two Subtrees}
This section finds a laminar family containing all extreme sets in the form of complement of two subtrees.
The algorithm has the same spirit as in two subtrees case.

\subsubsection{Find Main Branch}
All sets of the form $\overline{u^\downarrow\cup v^\downarrow}$ contain the root, so a laminar sub-family must be a nested chain of sets. This means the cut edges on the tree must be contained in two branches. We first locate one of the two branches to be some $u_0^\uparrow$. Then the problem can be reduced to finding extreme sets in the form of $\overline{u^\downarrow\cup v^\downarrow}$ where $u\in u_0^\uparrow$.

\begin{lemma}
Let $S_1$ be the set with minimum cut value among four types of sets: a subtree, complement of a subtree, two subtrees, and complement of two subtrees.
Describe set $S_1$ by $u_1^\downarrow$, $\overline{u_1^\downarrow}$, $u_1^\downarrow\cup v_1^\downarrow$ or $\overline{u_1^\downarrow\cup v_1^\downarrow}$ respectively in the four cases.

Any extreme set of the form $\overline{u^\downarrow\cup v^\downarrow}$ has one endpoint in the branch of $u_1$ in the first two cases, and has one endpoint in the branch of $u_1$ or $v_1$ in the last two cases.
\end{lemma}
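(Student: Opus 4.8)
The plan is to mirror the argument for Universe 2 (\Cref{lem:1-respect-proof}) and its analogue for Universe 3, using posi-modularity to force any extreme set $\overline{u^\downarrow\cup v^\downarrow}$ to ``touch'' the minimizer $S_1$. Fix an extreme set $S=\overline{u^\downarrow\cup v^\downarrow}$ where $u,v$ lie on distinct branches (the $u=v$ or comparable degeneracy collapses to a single subtree and is handled by the earlier universes). Assume for contradiction that neither the branch of $u$ nor the branch of $v$ meets the branch(es) in which $S_1$'s crossed edges lie — i.e., in the first two cases $u_1\perp u$ and $u_1\perp v$, and in the last two cases $\{u_1,v_1\}$ is incomparable to both $u$ and $v$. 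I would then observe that under this assumption the ``complement-type'' structure of $S_1$ (the part $u_1^\downarrow$, resp.\ $u_1^\downarrow\cup v_1^\downarrow$) is disjoint from the ``complement-type'' structure of $S$ (the part $u^\downarrow\cup v^\downarrow$), so $S_1^\downarrow$-part $\subseteq S$ and the $u^\downarrow\cup v^\downarrow$-part $\subseteq \overline{S_1}$ (or the symmetric containments). This is exactly the incomparability bookkeeping on a spider: two non-root vertices are incomparable iff they lie on different branches, so non-intersecting branches give disjoint down-sets.

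Next I would split into the four cases for $S_1$ and apply the appropriate modular inequality in each, exactly as in the proofs of \Cref{lem:1-respect-proof} and \Cref{lem:subtree-partner-condition}. Case $S_1=u_1^\downarrow$ (a subtree): then $u_1^\downarrow\subsetneq S$, so extremeness of $S$ gives $\delta(u_1^\downarrow)>\delta(S)\ge \delta(S_1)$, contradicting minimality of $S_1$. Case $S_1=\overline{u_1^\downarrow}$: then $S$ and $S_1$ cross with $S\setminus S_1=u_1^\downarrow$ and $S_1\setminus S=u^\downarrow\cup v^\downarrow$, and posi-modularity gives
\[
\delta(S_1)+\delta(S)\ge \delta(S\setminus S_1)+\delta(S_1\setminus S)=\delta(u_1^\downarrow)+\delta(u^\downarrow\cup v^\downarrow)>\delta(S)+\delta(S_1),
\]
using that $S$ extreme forces $\delta(u^\downarrow\cup v^\downarrow)>\delta(S)$ (it is a proper nonempty subset of $S$, since removing $u^\downarrow\cup v^\downarrow$ from $S$ is nonempty because the root survives) and that $\delta(u_1^\downarrow)>\delta(S_1)$ by minimality — contradiction. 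The two remaining cases ($S_1=u_1^\downarrow\cup v_1^\downarrow$ and $S_1=\overline{u_1^\downarrow\cup v_1^\downarrow}$) are the same computation with $u_1^\downarrow$ replaced by $u_1^\downarrow\cup v_1^\downarrow$: in the ``two subtrees'' case $S_1\subsetneq S$ directly contradicts extremeness plus minimality, and in the ``complement of two subtrees'' case the crossing sets satisfy $S\setminus S_1 = u_1^\downarrow\cup v_1^\downarrow$, $S_1\setminus S = u^\downarrow\cup v^\downarrow$, and the same posi-modular chain yields a contradiction. In every case we conclude that the branch of $u$ or of $v$ must coincide with a branch carrying a crossed edge of $S_1$, which is precisely the claim.

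The one subtlety I would be careful about — the main obstacle — is making the disjointness/containment step airtight: I must check that when $u_1$ (or $v_1$) is incomparable to both $u$ and $v$, the relevant down-sets are genuinely disjoint and that the set differences computed above ($S\setminus S_1$, $S_1\setminus S$) are exactly as claimed, including verifying they are nonempty so that the extreme-set strict inequalities actually apply. On a spider this is a short case check using that incomparable non-root vertices have disjoint down-sets and that every ``complement'' set contains the root; but one has to handle the boundary possibilities (e.g., $u$ or $v$ equal to the root, $c\notin V(T)$, or $S_1$ itself being of ``subtree'' type while $S$ is of ``complement'' type) cleanly. Once the containments are established, each of the four contradictions is a one-line modular inequality, so the rest is routine.
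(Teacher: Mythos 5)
Your overall strategy is the paper's: assume both $u$ and $v$ are incomparable to $u_1$ (and $v_1$), observe the resulting containments, and derive a contradiction either directly from minimality (when $S_1$ is of "subtree" type) or via posi-modularity (when $S_1$ is of "complement" type). The subtree cases (your cases 1 and 3) are fine. But the posi-modular cases as you wrote them do not close, because you justify the wrong inequalities. You claim $\delta(u^\downarrow\cup v^\downarrow)>\delta(S)$ "since it is a proper nonempty subset of $S$" — it is not a subset of $S$ at all: $S=\overline{u^\downarrow\cup v^\downarrow}$, so $u^\downarrow\cup v^\downarrow$ is the complement of $S$ within $V(T)$ and is disjoint from $S$. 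Extreme sets are asymmetric, so extremeness of $S$ says nothing about the cut value of its complement; in fact when $T$ spans all of $V$ one has $\delta(u^\downarrow\cup v^\downarrow)=\delta(S)$, so your strict inequality is simply false. Your other strict claim, $\delta(u_1^\downarrow)>\delta(S_1)$ "by minimality," is also only a non-strict $\ge$ (here $S_1=\overline{u_1^\downarrow}\ne u_1^\downarrow$, and minimality gives no strictness). With both strict inequalities unsupported, the chain
\[
\delta(S_1)+\delta(S)\ \ge\ \delta(S\setminus S_1)+\delta(S_1\setminus S)\ =\ \delta(u_1^\downarrow)+\delta(u^\downarrow\cup v^\downarrow)\ >\ \delta(S)+\delta(S_1)
\]
does not follow as you argue it.

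The repair is exactly the paper's pairing of the two bounds, and it is the one genuinely available: under the incomparability assumption, the down-set part of $S_1$ (namely $X=u_1^\downarrow$, resp.\ $X=u_1^\downarrow\cup v_1^\downarrow$) is a nonempty \emph{proper} subset of $S$ (the root lies in $S\setminus X$), so extremeness of $S$ gives the strict inequality $\delta(X)>\delta(S)$; and $u^\downarrow\cup v^\downarrow$ is itself one of the four candidate types (two subtrees, or one subtree in a degenerate case), so minimality of $S_1$ gives $\delta(u^\downarrow\cup v^\downarrow)\ge\delta(S_1)$. Plugging these into the same posi-modular display yields the contradiction. Your set-difference computations ($S\setminus S_1=X$ and $S_1\setminus S=u^\downarrow\cup v^\downarrow$) are correct, so once the two inequalities are attributed to the right sets, your argument coincides with the paper's proof.
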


\begin{proof}
Let $S=\overline{u^\downarrow\cup v^\downarrow}$ be any such extreme set. Let $X=u_1^\downarrow$ in the first two cases, and $X=u_1^\downarrow\cup v_1^\downarrow$ in the last two cases, so that either $S_1=X$ or $S_1=\overline{X}$.

Assume for contradiction that neither $u$  or $v$ is comparable to $u_1$ in the first two cases, and to $u_1$ or $v_1$ in the last two cases, which means $X\subsetneq S$ and $u_1^\downarrow\cup v_1^\downarrow\subsetneq \overline{X}$. Since $S$ is an extreme set, this means that $\delta(X)>\delta(S)$. When $S_1=X$, we obtain $\delta(S_1)>\delta(S)$, which contradicts the minimality of $\delta(S_1)$. When $S_1=\overline{X}$, by posi-modularity \[\delta(S)+\delta(S_1)\ge \delta(S\setminus S_1)+\delta(S_1\setminus S)=\delta(u^\downarrow\cup v^\downarrow)+\delta(X)>\delta(S_1)+\delta(S),\] contradiction. Therefore $u$ or $v$ is comparable to $u_1$ in the first two cases, and to $u_1$ or $v_1$ in the last two cases.% Recall that either $S_1=X$ or $S_1=\overline{X}$, so $\delta(X)=\delta(S_1)$. We obtain $\delta(S_1)>\delta(S)$, which contradicts the minimality of $\delta(S_1)$.
\end{proof}

In the first two cases, we fix the main branch containing $u_1$, which is $u_0^\uparrow$ where $u_0$ is the leaf of that branch. In the last two cases, we try fixing main branches $u_1$ and $v_1$, compute the two laminar families, and merge them using \Cref{lem:merge}. From now on, assume that we have correctly identified the branch $u_0^\uparrow$.

If the tree $T$ spans all but one vertex $c$, then we attach $c$ below $u_0$ in the tree. This way, the new tree $T'$ now spans all vertices, and all extreme sets we wish to find (in particular, they do not include $c$) are still of the form $\overline{u^\downarrow\cup v^\downarrow}$.

\subsubsection{Partner Condition}
Like in two subtrees case, the idea is to restrict the potential partners onto a path, but with a different partner condition.
This time, we define the lowest partner
\BG
p(u)=\arg\min_{v\perp u} \delta(\overline{u^\downarrow\cup v^\downarrow}, v^\downarrow) \tag{P\textsc{\char13}}\label{eq:P'}
\EG
Since $\delta(\overline{u^\downarrow\cup v^\downarrow}, v^\downarrow)=\frac 12(\delta(u^\downarrow\cup v^\downarrow)+\delta(v^\downarrow)-\delta(u^\downarrow))$, lowest partners can be calculated in the same way as in \Cref{lem:compute-partner} from the two subtrees case.

\begin{lemma}
\label{lem:universe2-partner}
If $S=\overline{u^\downarrow\cup v^\downarrow}$ is an extreme set, then $v\in p(u)^\uparrow$.
\end{lemma}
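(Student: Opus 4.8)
The plan is to mirror the argument used for the two-subtrees case (the ``bad case'' in \Cref{lem:subtree-partner-condition} and \Cref{fact:extreme}), but adapted to the complement universe and the new partner definition~(\ref{eq:P'}). Suppose $S=\overline{u^\downarrow\cup v^\downarrow}$ is extreme but $v\notin p(u)^\uparrow$; since $p(u)$ is by definition incomparable to $u$, and $v$ is also incomparable to $u$ (as the cut edges lie on two distinct branches), both $v$ and $p(u)$ lie on branches distinct from $u$'s branch. The failure $v\notin p(u)^\uparrow$ means $p(u)$ is \emph{strictly below} $v$ on whichever branch $p(u)$ lies on, or $p(u)$ lies on a branch different from $v$'s entirely. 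In either case I want to produce a proper subset of $S$ with cut value $\le\delta(S)$, contradicting extremality.

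First I would set $w=p(u)$ and consider the candidate set obtained by ``peeling'' $w^\downarrow$ off of $S$ — but note $w^\downarrow$ need not be a subset of $S$, since $w$ could be a descendant of $v$. The clean way is to argue via posi-modularity on $S$ and a comparison set $S_w$ built from $w$, exactly as in \Cref{lem:subtree-partner-condition}. Concretely, take $S_w$ to be either $\overline{u^\downarrow\cup w^\downarrow}$ or a related set, and use that $w=p(u)$ minimizes $\delta(\overline{u^\downarrow\cup v^\downarrow},v^\downarrow)$ over all $v\perp u$; in particular $\delta(\overline{u^\downarrow\cup w^\downarrow},w^\downarrow)\le \delta(\overline{u^\downarrow\cup v^\downarrow},v^\downarrow)$. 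Translating through the identity $\delta(\overline{u^\downarrow\cup x^\downarrow},x^\downarrow)=\tfrac12(\delta(u^\downarrow\cup x^\downarrow)+\delta(x^\downarrow)-\delta(u^\downarrow))$ already noted in the text, this minimality becomes a statement comparing $\delta(u^\downarrow\cup w^\downarrow)+\delta(w^\downarrow)$ against $\delta(u^\downarrow\cup v^\downarrow)+\delta(v^\downarrow)$. Combining this with the extremality inequality $\delta(S)<\delta(S')$ for the appropriate proper subset $S'$ of $S$ (namely the one obtained by replacing the $v$-branch component by the $w$-branch component, which is a proper subset precisely when $p(u)$ is lower than $v$), and then applying submodularity/posi-modularity to the crossing pair, should yield a contradiction.

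The step I expect to be the main obstacle is the bookkeeping of \emph{which} of the two branches $p(u)$ lands on and the resulting case split: $p(u)$ could be on $v$'s branch (below $v$), or on the third branch, or — if $T$ was augmented by hanging $c$ below $u_0$ — there is the degenerate possibility involving $c$. In each sub-case I need to verify that the set $S'$ I compare against is genuinely a \emph{proper} subset of $S$ (so extremality applies with strict inequality) and that the two sets I feed to posi-modularity actually cross, so that $\delta(S\setminus S')+\delta(S'\setminus S)$ has the claimed form. Once the geometry is pinned down, the inequality chain is a routine addition of two inequalities contradicting posi-modularity, just as in \Cref{lem:subtree-partner-condition} and \Cref{lem:1-respect-proof}. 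I would also double-check the boundary case where $v$ itself equals $p(u)$'s ancestor at the branching point (the root), which is vacuous since then $v\in p(u)^\uparrow$ trivially.
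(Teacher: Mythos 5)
Your overall target is the right one, and it coincides with the paper's: assume $v\notin p(u)^\uparrow$ and derive a contradiction with the fact that $p(u)$ minimizes $\delta(\overline{u^\downarrow\cup x^\downarrow},x^\downarrow)$ over $x\perp u$, using extremality of $S$. But the case analysis you set up is backwards, and that is exactly the ``bookkeeping'' you defer. Since $p(u)^\uparrow$ is the root path of $p(u)$, the failure $v\notin p(u)^\uparrow$ splits into (i) $v\perp p(u)$, or (ii) $v\in p(u)^\downarrow-p(u)$, i.e.\ $p(u)$ is a \emph{strict ancestor} of $v$ on $v$'s branch. Your sub-case ``$p(u)$ strictly below $v$ on its branch'' is the situation where the conclusion already holds (then $v\in p(u)^\uparrow$), and the genuinely problematic same-branch case is missing from your list. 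Consequently your claim that the comparison set $\overline{u^\downarrow\cup p(u)^\downarrow}$ is a proper subset of $S$ ``precisely when $p(u)$ is lower than $v$'' is also reversed: it is a subset exactly when $p(u)^\downarrow\supseteq v^\downarrow$, i.e.\ when $p(u)$ is an ancestor of $v$. This matters structurally: in case (ii) the pair $\bigl(S,\overline{u^\downarrow\cup p(u)^\downarrow}\bigr)$ is \emph{nested}, so ``apply posi-modularity to the crossing pair'' has no crossing pair as written; you would have to uncross $S$ against $p(u)^\downarrow$ instead.

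For comparison, the paper's proof uses no posi-modularity here: in case (i) it applies extremality to $p(u)^\downarrow\subsetneq S$, in case (ii) to $\overline{u^\downarrow\cup p(u)^\downarrow}\subsetneq S$, and in both cases a short edge-counting step (adding a $\delta(\cdot,v^\downarrow)$ term to both sides) yields $\delta(\overline{u^\downarrow\cup p(u)^\downarrow},p(u)^\downarrow)>\delta(\overline{u^\downarrow\cup v^\downarrow},v^\downarrow)$, contradicting the minimality defining $p(u)$. A posi-modular variant in your spirit can be completed once the geometry is fixed: in case (i) uncross $S$ with $\overline{u^\downarrow\cup p(u)^\downarrow}$ (the differences are $p(u)^\downarrow$ and $v^\downarrow$) and combine with $\delta(p(u)^\downarrow)>\delta(S)$; in case (ii) uncross $S$ with $p(u)^\downarrow$ and combine with $\delta(\overline{u^\downarrow\cup p(u)^\downarrow})>\delta(S)$; in both cases the identity $\delta(\overline{u^\downarrow\cup x^\downarrow},x^\downarrow)=\frac12\bigl(\delta(u^\downarrow\cup x^\downarrow)+\delta(x^\downarrow)-\delta(u^\downarrow)\bigr)$ converts the result into the needed violation of minimality. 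None of these inequality chains appear in your proposal, and note the final contradiction is with the argmin definition of $p(u)$, not with posi-modularity itself, so the template of \Cref{lem:subtree-partner-condition} (where condition (P) supplies the second strict inequality) does not transfer verbatim. As written, the proposal has a genuine gap.
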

\begin{proof}
Assume for contradiction that $v\notin p(u)^\uparrow$, so that either $v\perp p(u)$ or $v\in p(u)^\downarrow-p(u)$. There are two cases:

\paragraph{\rm{Case 1: $v\perp p(u)$.}} $p(u)$ is also incomparable to $u$ by definition, so $p(u)\subsetneq S$. Because $S$ is extreme, $\delta(p(u)^\downarrow)>\delta(S)$, which implies
\[\delta(p(u)^\downarrow,    S-p(u)^\downarrow)> \delta(S-p(u)^\downarrow, u^\downarrow\cup v^\downarrow)\ge \delta(S-p(u)^\downarrow, v^\downarrow).\]
Adding $\delta(v^\downarrow, p(u)^\downarrow)$ to both sides  gives $\delta(\overline{u^\downarrow\cup p(u)^\downarrow}, p(u)^\downarrow) > \delta(S, v^\downarrow)=\delta(\overline{u^\downarrow\cup v^\downarrow},v^\downarrow)$, which contradicts minimality in (\ref{eq:P'}).

\paragraph{\rm{Case 2: $v\in p(u)^\downarrow-p(u)$.}} Let $X=\overline{u^\downarrow\cup p(u)^\downarrow}\subsetneq S$ and $Y=S\setminus X=p(u)^\downarrow\setminus v^\downarrow$. Because $S$ is extreme,
\[\delta(X)>\delta(S)\implies \delta(X,Y)>\delta(Y,u^\downarrow\cup v^\downarrow)\ge \delta(Y,v^\downarrow).\]
Adding $\delta(X,v^\downarrow)$ to both sides gives $\delta(X,p(u)^\downarrow)>\delta(S,v^\downarrow)$, or in other words, $\delta(\overline{u^\downarrow\cup p(u)^\downarrow}, p(u)^\downarrow)>\delta(\overline{u^\downarrow\cup v^\downarrow},v^\downarrow)$, which contradicts minimality in (\ref{eq:P'}).
\end{proof}

\subsubsection{Find the Second Branch}
We would now like to identify a second branch to locate all extreme sets. Our key observation is that if $S=\overline{u^\downarrow\cup v^\downarrow}$ is extreme, then for any $w$ that is incomparable to both $u$ and $v$, $\delta(w^\downarrow)>\delta(S)$ because $w^\downarrow\subsetneq S$. We call this the \emph{subtree cut condition}:
\BG
\forall w,\, w\bot u,\, w\bot v: \delta(w^\downarrow)>\delta(S) \tag{S}\label{eq:S}
\EG
Therefore, $\delta(S)$ is less than the minimum subtree cut in all branches other than $u$'s and $v$'s (or equivalently, $p(u)$'s by \Cref{lem:universe2-partner}).

Next, define the \emph{optimal partner} $opt(u)=\arg\min_{v\in p(u)^\uparrow}\delta(u^\downarrow\cup v^\downarrow)$. We only calculate the optimal partners for $u\in u_0^\uparrow$, which can be done in $\tilde{O}(m)$ time.
For each branch, calculate the minimum cut value among all subtrees on the branch. List the values as a sequence to perform range minimum queries.

\begin{lemma}
\label{lem:second-branch}
Let $u_1$ be the highest node on main branch $u_0^\uparrow$ such that $\overline{u_1^\downarrow\cup opt(u_1)^\downarrow}$ satisfies subtree cut condition (\ref{eq:S}). Then, any extreme set $S=\overline{u^\downarrow\cup v^\downarrow}$ with $u\in u_0^\uparrow$ has $v\parallel opt(u_1)$.
\end{lemma}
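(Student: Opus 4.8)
The plan is to argue by contradiction: suppose $S=\overline{u^\downarrow\cup v^\downarrow}$ is extreme with $u\in u_0^\uparrow$ but $v$ lies on a branch different from the branch of $opt(u_1)$. We know from \Cref{lem:universe2-partner} that $v\in p(u)^\uparrow$, so $v$ actually lies on $p(u)$'s branch, call it $B_v$; similarly $opt(u_1)$ lies on $p(u_1)$'s branch, call it $B_1$, and we are assuming $B_v\neq B_1$. The first step is to compare $u$ and $u_1$ on the main branch $u_0^\uparrow$. By definition, $u_1$ is the \emph{highest} node on $u_0^\uparrow$ for which $\overline{u_1^\downarrow\cup opt(u_1)^\downarrow}$ satisfies the subtree cut condition (\ref{eq:S}). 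I will consider the two cases $u\in u_1^\downarrow$ (i.e. $u$ is at least as deep as $u_1$) and $u_1\in u^\downarrow - u$ (i.e. $u$ is strictly higher), and aim to derive a contradiction in each.

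The core inequality to exploit is (\ref{eq:S}) applied to $S$: any $w$ incomparable to both $u$ and $v$ has $\delta(w^\downarrow)>\delta(S)$. In particular, since $B_v\neq B_1$, the vertex $opt(u_1)$ sits on a branch disjoint from both $u_0^\uparrow$ (which contains $u$) and $B_v$ (which contains $v$), so $opt(u_1)$ is incomparable to both $u$ and $v$, giving $\delta(opt(u_1)^\downarrow)>\delta(S)$. Likewise, because we recorded the minimum subtree cut on each branch via range-minimum queries, the minimum subtree cut on $B_1$ strictly exceeds $\delta(S)$ as well. Now in the case $u_1\in u^\downarrow-u$: the set $S_1=\overline{u_1^\downarrow\cup opt(u_1)^\downarrow}$ satisfies (\ref{eq:S}), and I want to show that $S_1$ would actually have to be the chosen set or that $S$ crosses $S_1$ favorably. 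The key is that $opt$ minimizes $\delta(u^\downarrow\cup v^\downarrow)$ over $v\in p(u)^\uparrow$; combined with the posi-modularity manipulations used throughout \Cref{sec:two-subtrees} (in the style of the proofs of \Cref{lem:subtree-partner-condition} and \Cref{lem:universe2-partner}), the crossing of $S$ with $S_1=\overline{u_1^\downarrow\cup opt(u_1)^\downarrow}$ yields
\[
\delta(S)+\delta(S_1)\ge \delta(S\setminus S_1)+\delta(S_1\setminus S),
\]
where one of the two sets on the right is a subtree difference bounded below by an even larger subtree cut on branch $B_1$, contradicting $\delta(S)<$ (min subtree cut on $B_1$) together with extremeness of $S$. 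In the other case $u\in u_1^\downarrow$, I would instead use the \emph{maximality} defining $u_1$: since $u$ is deeper, $\overline{u^\downarrow\cup opt(u)^\downarrow}$ is a candidate higher-or-equal in the relevant sense, and the failure of $v$ to lie on $B_1$ forces $opt(u)$ to also lie off $B_1$; but then one checks that $\overline{u^\downarrow\cup opt(u)^\downarrow}$ would itself satisfy (\ref{eq:S}) with its crossed edges on $u_0^\uparrow$ and $B_v$, and since $u$ is no higher than $u_1$, this does not immediately contradict maximality — so here I instead push the argument to compare $\delta(S)$ against $\delta(\overline{u^\downarrow\cup opt(u)^\downarrow})\le \delta(S)$ and use that $opt(u_1)$'s branch $B_1$ contains a subtree cut exceeding $\delta(S)$, again via a posi-modularity crossing between $S$ and $\overline{u_1^\downarrow\cup opt(u_1)^\downarrow}$.

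The main obstacle I anticipate is bookkeeping the case analysis cleanly: unlike the two-subtrees universe where both crossed edges behave symmetrically, here one crossed edge is pinned to the main branch $u_0^\uparrow$ while the other floats, and the definition of $u_1$ mixes a maximality condition (highest node on the main branch) with the optimal-partner choice. I expect the delicate point is showing that when $v$ is \emph{not} on $opt(u_1)$'s branch, the set $\overline{u_1^\downarrow\cup opt(u_1)^\downarrow}$ (or $\overline{u^\downarrow\cup opt(u)^\downarrow}$) genuinely crosses $S$ rather than nesting — this is what licenses the posi-modularity inequality — and this should follow from incomparability of the two ``second'' branches $B_v$ and $B_1$ plus the fact that both sets contain the root. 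Once the crossing is established, each case reduces to adding two inequalities (one from extremeness of $S$, one from (\ref{eq:S}) or the optimality of $opt$) and invoking posi-modularity, exactly as in the earlier lemmas of this section.
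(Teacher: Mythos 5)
There is a genuine gap in the case where $u$ is a strict ancestor of $u_1$ on the main branch (your case ``$u_1\in u^\downarrow-u$''). There you propose crossing $S$ with $S_1=\overline{u_1^\downarrow\cup opt(u_1)^\downarrow}$ and applying posi-modularity, claiming one of the two difference sets is bounded below by a subtree cut on $B_1$. But compute the differences: since $u_1^\downarrow\subseteq u^\downarrow$, you get $S\setminus S_1=opt(u_1)^\downarrow$ and $S_1\setminus S=(u^\downarrow\setminus u_1^\downarrow)\cup v^\downarrow$. Extremeness of $S$ gives $\delta(opt(u_1)^\downarrow)>\delta(S)$, so posi-modularity only yields $\delta\bigl((u^\downarrow\setminus u_1^\downarrow)\cup v^\downarrow\bigr)<\delta(S_1)$ --- and this set is not a subtree (it spans the main branch and $B_v$, not $B_1$), so it contradicts neither (\ref{eq:S}) for $S_1$ nor anything else available; there is also no a priori relation between $\delta(S_1)$ and $\delta(S)$ to exploit. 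The tool that actually closes this case is the one you set aside: the \emph{maximality} of $u_1$. Since $u$ is strictly higher than $u_1$, the set $\overline{u^\downarrow\cup opt(u)^\downarrow}$ must violate (\ref{eq:S}), so some $w$ incomparable to $u$ and $opt(u)$ has $\delta(w^\downarrow)\le\delta(u^\downarrow\cup opt(u)^\downarrow)$; by \Cref{lem:universe2-partner} $v\in p(u)^\uparrow$, so optimality of $opt(u)$ gives $\delta(u^\downarrow\cup opt(u)^\downarrow)\le\delta(u^\downarrow\cup v^\downarrow)=\delta(S)$; and since $w\perp opt(u)$ forces $w$ off the branch containing $v$, we get $w^\downarrow\subsetneq S$, so extremeness of $S$ gives $\delta(w^\downarrow)>\delta(S)$, a contradiction (note this case needs no assumption on $v$ versus $opt(u_1)$ at all). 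You invoked maximality only in the case $u\in u_1^\downarrow$, where, as you correctly observed, it says nothing --- the two tools are swapped between your cases.

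Your other case ($u\in u_1^\downarrow$, including $u=u_1$) is essentially right and matches the paper's argument: $S$ and $S_1$ genuinely cross when $v\perp opt(u_1)$, here $S_1\setminus S=v^\downarrow$ and $S\setminus S_1=(u_1^\downarrow\setminus u^\downarrow)\cup opt(u_1)^\downarrow\subsetneq S$, so extremeness of $S$ plus posi-modularity give $\delta(v^\downarrow)<\delta(S_1)$, contradicting condition (\ref{eq:S}) for $S_1$ since $v$ is incomparable to both $u_1$ and $opt(u_1)$. Be aware that your stated punchline for this case (``a subtree cut on $B_1$ exceeding $\delta(S)$'') is not the operative contradiction --- the contradiction lives on $v$'s branch against $\delta(S_1)$, not on $B_1$ against $\delta(S)$ --- and your preliminary observations ($\delta(opt(u_1)^\downarrow)>\delta(S)$, minimum subtree cut on $B_1$ exceeding $\delta(S)$) are true but are not what drives either case.
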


\begin{figure}[t]
\centering
\includegraphics[width=.6\textwidth]{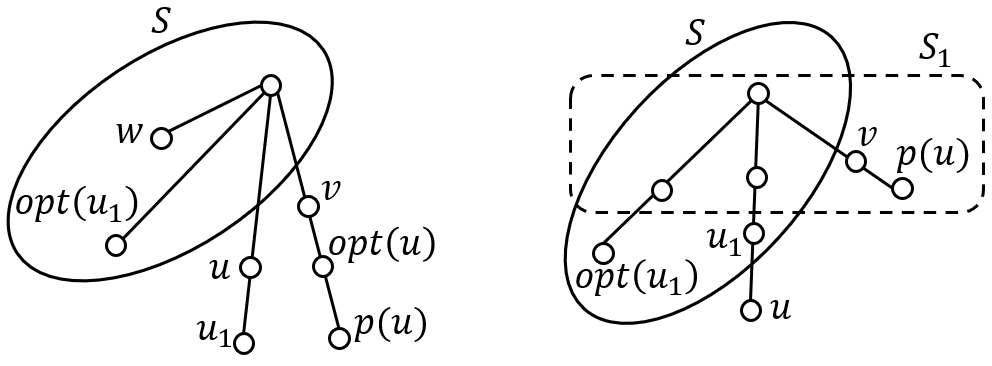}
%\includesvg[width=.58\textwidth]{second-branch}
\caption{Proof of \Cref{lem:second-branch}. Left: the second case where $u$ is above $u_1$. Right: the third case where $u$ is below $u_1$.}
\end{figure}
\begin{proof}
For any extreme set $S=\overline{u^\downarrow\cup v^\downarrow}$ with $u\in u_0^\uparrow$, we define $u_1$ as in the lemma. Assume for contradiction that $v\perp opt(u_1)$. We case on the location of $u$: either $u=u_1$, or $u\in u_1^\uparrow-u_1$, or $u\in u_1^\downarrow-u_1$. Note that $u\parallel u_1$ because both are in $u_0^\uparrow$.

First, suppose that $u=u_1$. Then $opt(u_1)=opt(u)$, which is in $p(u)^\uparrow$ by definition of $opt(u)$. Vertex $v$, as a partner of $u$, is also in $p(u)^\uparrow$ by \Cref{lem:universe2-partner}. This contradicts $v\perp opt(u_1)$.

Second, suppose that $u\in u_1^\uparrow-u_1$. By definition of $u_1$, we have that $\overline{u^\downarrow\cup opt(u)^\downarrow}$ does not satisfy subtree cut condition (\ref{eq:S}), since otherwise $u$ would be a better choice than $u_1$. Since $\overline{u^\downarrow\cup opt(u)^\downarrow}$ does not satisfy (\ref{eq:S}), there exists some $w$ incomparable to both $u$ and $p(u)$ such that $\delta(w^\downarrow) \le \delta(u^\downarrow\cup opt(u)^\downarrow)$. By definition of $opt(u)$, we have $ \delta(u^\downarrow\cup opt(u)^\downarrow)\le\delta(S)$. These two inequalities implies $\delta(w^\downarrow)\le \delta(S)$. However, since $v\parallel p(u)$ by \Cref{lem:universe2-partner} and $w$ is incomparable to both $u$ and $p(u)$, we have $w^\downarrow\subsetneq S$, and since $S$ is extreme, this implies that $\delta(w^\downarrow)>\delta(S)$, a contradiction.

The final case is $u\in u_1^\downarrow-u_1$. Let $S_1=\overline{u_1^\downarrow\cup opt(u_1)^\downarrow}$. The sets $S_1$ and $S$ cross because $u\parallel u_1$ and $v\perp opt(u_1)$. Since $S$ is extreme, we have $\delta(S\setminus S_1)>\delta(S)$, so by posi-modularity, %\footnote{$\delta(S)+\delta(S_1)\ge\delta(S\setminus S_1)+\delta(S_1\setminus S)$}
we have $\delta(S_1\setminus S)<\delta(S_1)$. Notice that $S_1\setminus S=v^\downarrow$. It follows that $\delta(v^\uparrow)<\delta(S_1)$, which contradicts $S_1$'s subtree cut condition (\ref{eq:S}).
\end{proof}

%The nodes whose lowest partner is not on the other branch can also be contracted.

\subsubsection{Reducing to the Two Subtrees Case}

Let $v_0$ be the leaf of the second branch guaranteed by \Cref{lem:second-branch}. To find all extreme sets in the form of complement of two subtrees, $\overline{u^\downarrow\cup v^\downarrow}$, we only need to find the extreme sets with two endpoints $u$ and $v$ on the branches of $u_0$ and $v_0$.
Now we can contract edges except those on $u_0^\uparrow$ and $v_0^\uparrow$, so that the tree only consists of two branches. Split the two branches by deleting the tree edge incident to the root on the branch of $v_0$. Then, contract $u_0$ and $v_0$ into a single vertex, and declare it as the new root; see \Cref{fig:2-branches}. It is easy to see that any extreme set that was previously of the form $\overline{u^\uparrow\cup v^\uparrow}$ for $u\in u_0^\uparrow$ and $v\in v_0^\uparrow$ is now a union of two subtrees, or just one subtree if $v$ is a child of root. Therefore, we have reduced to the two subtrees case, as desired.

%\textcolor{orange}{Section ends here, this is from the old writeup. Why is the dummy node needed? Am I missing something? Do we need to do anything else in this section, or is this now a complete reduction and we can be done here?} \alert{Ruoxu: We need a dummy node if we only reduce to two subtrees case because there is a boarder case that u or v is a child of root. We don't need the dummy node if we reduce to one subtree or two subtrees.} \textcolor{orange}{Let's reduce to one subtree or two subtrees, then. The one subtree case is simple enough.} Add a dummy node on the branch without root. Then we reverse the branches, and merge $u_0$ and $v_0$ to be the new root. Then a set $\overline{u^\downarrow\cup v^\downarrow}$ in the original tree becomes two subtrees rooted at parents of $u$ and $v$ in the original tree. The problem is reduced to finding extreme sets in the form of two subtrees, which has been solved in \Cref{sec:two-subtrees}.

\begin{figure}[t]
\centering
\includegraphics[width=.45\textwidth]{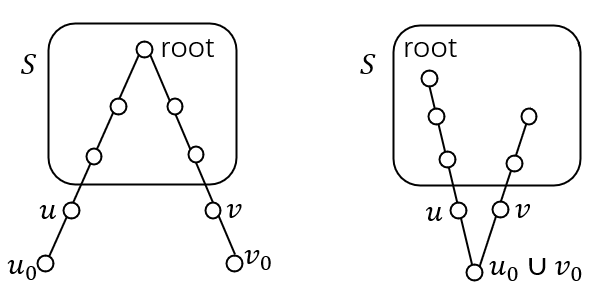}
%\includesvg[width=.45\textwidth]{2-branches}
\caption{Reduction from complement case to subtrees case after fixing 2 branches}
\label{fig:2-branches}
\end{figure}

\eat{
Define bottleneck $b(u)=\min_{v\in u^\uparrow, w\in v^\uparrow-v-r}\delta(w^\downarrow\setminus v^\downarrow)$. Compared to $U_1$, we just change the direction to switch inside and outside. Since the tree is a spider and the calculation is on a branch, bottleneck can be calculated in the same way as in $U_1$.

\begin{fact}
If $\overline{u^\downarrow\cup v^\downarrow}\in U_2$ is extreme, then $\delta(u^\downarrow\cup v^\downarrow)<\min\{b(u), b(v)\}$.
\end{fact}

Define a strict version of optimal partner $opt(u)=\arg\min_{v\in p(u)^\uparrow, b(v)\ge b(u)}\delta(u^\downarrow\cup v^\downarrow)$. In case of equality take the lowest $v$. Construct a representative family $\mathcal{F}=\{S=\overline{u^\downarrow\cup opt(u)^\downarrow}:\delta(S)<b(u)\}$.

\begin{lemma}
\label{lem:universe2-bottleneck-uncross}
If two sets $S_1=u_1^\downarrow\cup v_1^\downarrow, S_2=u_2^\downarrow\cup v_2^\downarrow \in U_2$ both satisfy bottleneck condition and have $u_1\parallel u_2, v_1\parallel v_2$, then either $S_1\subseteq S_2$ or $S_2\subseteq S_1$.
\end{lemma}
\begin{proof}
$S_1$ and $S_2$ cannot be disjoint. Assume for contradiction that they cross. Wlog assume $u_1\in u_2^\uparrow-u_2, v_1\in v_2^\downarrow-v_2$. Then $S_1\setminus S_2 = u_1^\downarrow\setminus u_2^\downarrow, S_2\setminus S_1 = v_2^\downarrow\setminus v_1^\downarrow$. $\delta(S_1)\le b(v_1)\le \delta(v_2^\downarrow\setminus v_1^\downarrow), \delta(S_2)\le b(u_2)\le \delta(u_2^\downarrow\setminus u_1^\downarrow)$. Adding these two inequalities contradicts posimodularity $\delta(S_1)+\delta(S_2)\ge \delta(S_1\setminus S_2)+\delta(S_2\setminus S_1)$.
\end{proof}

\begin{corollary}
$\mathcal{F}$ is laminar.
\end{corollary}

\begin{lemma}
If $S=\overline{u^\downarrow\cup v^\downarrow}\in U_2\setminus \mathcal{F}$ is extreme, $b(u)\le b(v)$, and one of $u$ and $v$ is in $u_0^\uparrow$ (hence the other in $v_0^\uparrow$), then
\begin{enumerate}
 \item $\overline{u^\downarrow\cup opt(u)^\downarrow}\in \mathcal{F}$.
 \item Let $v'$ be the highest descendent of $v$ that forms one subtree in $\mathcal{F}$, and $u'$ be the highest among the partners of $v'$ in $\mathcal{F}$. Then $u=u'$.
\end{enumerate}
\end{lemma}
\begin{proof}
Statement 1: $v\in p(u)^\uparrow$ by \Cref{lem:universe2-partner}. $\delta(u^\downarrow\cup opt(u)^\downarrow)\le \delta(u^\downarrow\cup v^\downarrow)<b(u)$.

Statement 2: First we have $opt(u)\in v^\downarrow$, otherwise $\overline{u^\downarrow\cup opt(u)^\downarrow}$ is a proper subset of $S$ with smaller or equal cut value. Also $opt(u)\ne v$ because $S\notin \mathcal{F}$. Then $\overline{u^\downarrow\cup opt(u)^\downarrow}$ is a candidate of the set we are finding in $\mathcal{F}$, so $v'\in opt(u)^\uparrow$. Because $\mathcal{F}$ is laminar, $u'\in u^\uparrow$. Assume for contradiction that $u'\ne u$, then $S'=\overline{u'^\downarrow\cup v'^\downarrow}$ crosses $S$. This contradicts \Cref{lem:universe2-bottleneck-uncross} because both $S$ and $S'$ satisfy bottleneck condition.
\end{proof}

We conclude a complete algorithm to construct a rough extreme family.
\begin{enumerate}
 \item Enumerate 3 main branches $u_0^\uparrow$.
 \item Find the second branch $v_0^\uparrow$.
 \item Contract other branches. Contract the nodes whose partner is not on the two branches.
 \item Calculate optimal partner and family $\mathcal{F}$.
 \item For every $v$ on the two branches, find the minimal set in $\mathcal{F}$ containing $v$ and get a potential $u(v)$.
 \item Among the all $O(n)$ pairs, the sets satisfying bottleneck condition forms a rough extreme family by \Cref{lem:universe2-bottleneck-uncross}.
\end{enumerate}
}

\section{Merging Two Laminar Trees}
\label{sec:merge}

%Our algorithm produces $\tO(1)$ laminar families, and we need to merge them into one laminar family without losing any extreme set. This section proves \Cref{thm:merge} which serves this need.

In this section, we prove the lemma that merges two laminar families and preserves all extreme sets in both families.

\begin{lemma}
\label{lem:merge}
Given two laminar families $\mathcal X$ and $\mathcal Y$ on the vertex sets, \Cref{alg:verify} constructs a merged laminar family $R$ containing all extreme sets in $\mathcal X\cup \mathcal Y$ (and possibly other sets in $\mathcal X\cup \mathcal Y$) in $\tilde{O}(m)$ time.
\end{lemma}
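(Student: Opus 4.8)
Since $\mathcal X$ and $\mathcal Y$ are each laminar, the only obstruction to laminarity of $\mathcal X\cup\mathcal Y$ is a crossing pair consisting of one $X\in\mathcal X$ and one $Y\in\mathcal Y$, i.e.\ $X\setminus Y$, $Y\setminus X$ and $X\cap Y$ are all nonempty. The plan is to repeatedly delete, from each crossing pair, an endpoint that is provably not extreme, until no crossing pair survives. The enabling observation is posi-modularity: for a crossing pair, $\delta(X\setminus Y)+\delta(Y\setminus X)\le\delta(X)+\delta(Y)$, so at least one of $\delta(X\setminus Y)\le\delta(X)$ or $\delta(Y\setminus X)\le\delta(Y)$ holds. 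In the first case $X\setminus Y$ is a nonempty proper subset of $X$ of no larger cut value, certifying that $X$ is not extreme; symmetrically for $Y$ in the second case. Thus every crossing pair has a \emph{certifiable} non-extreme endpoint, and (by the proof of \Cref{lem:laminar}) deleting certified non-extreme sets never removes an extreme set. Note that the obvious shortcuts---always delete the endpoint of larger cut value, or always the smaller---are both incorrect: an extreme set can cross a non-extreme set of either larger or smaller cut value, so the posi-modularity test is genuinely needed.

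\textbf{The algorithm and its correctness.} Sort all sets of $\mathcal X\cup\mathcal Y$ in nondecreasing cut value and process them in that order, maintaining a laminar family $R$, initially empty. To insert $Z$: while $Z$ crosses some $W\in R$ (necessarily $\delta(W)\le\delta(Z)$, as $W$ was inserted earlier), evaluate $\delta(Z\setminus W)$; if $\delta(Z\setminus W)\le\delta(Z)$ then $Z$ is certified non-extreme, so discard $Z$ and stop; otherwise $\delta(W\setminus Z)<\delta(W)$ certifies $W$ as non-extreme, so remove $W$ from $R$ (re-attaching its children in the laminar forest) and continue the loop. If the loop ends without discarding $Z$, add $Z$ to $R$. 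Removing a member of a laminar family keeps it laminar, and a set is added to $R$ only once it crosses nothing currently present, so the final $R$ is laminar and $R\subseteq\mathcal X\cup\mathcal Y$. For correctness, only certified non-extreme sets are ever deleted, so every extreme set of $\mathcal X\cup\mathcal Y$ survives; moreover every crossing pair is examined, because when the later of its two members is processed, the earlier one---if extreme, hence never deleted---is still in $R$ (and if the earlier one is extreme then, since extreme sets are laminar, the later one is not extreme, and the test $\delta(Z\setminus W)\le\delta(Z)$ correctly fails only when $Z$ is forced non-extreme).

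\textbf{Efficiency.} There are $|\mathcal X|+|\mathcal Y|=O(n)$ sets, each inserted once and deleted at most once, so the loop performs only $O(n)$ crossing resolutions overall. Precomputing $\delta(Z)$ for all $Z$ in each of the two laminar families takes $\tilde O(m)$ time by the offline-LCA edge-charging technique already used in \Cref{sec:iterative} (each edge $(u,v)$ contributes to exactly the sets on the root-to-$l(u)$ and root-to-$l(v)$ paths above their LCA in the laminar tree). It remains to implement, within the budget: (a) upon inserting $Z$, detecting whether it crosses any current member of $R$ and returning such a $W$; and (b) evaluating $\delta(Z\setminus W)$ for the $O(n)$ resolutions. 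For (a), maintain $R$ as a dynamic laminar forest on $V$ with an Euler-tour indexing; $Z$ is compatible with $R$ iff, letting $W^\ast$ be the minimal member of $R$ containing $Z$, every child of $W^\ast$ is either contained in $Z$ or disjoint from $Z$, which can be checked with segment-tree style bookkeeping keyed to the $O(1)$ per component boundaries of $Z$ (and $Z$ itself comes from a laminar family, so its boundary structure is simple). For (b), express $\delta(Z\setminus W)=\delta(Z)+\delta(W)-2\,\delta(Z\cap W,\,W\setminus Z)-\delta(Z\cup W)+\delta(W\setminus Z)$-type identities in terms of the total edge weights across the four quadrants $Z\cap W$, $Z\setminus W$, $W\setminus Z$, $V\setminus(Z\cup W)$, and accumulate these quadrant sums by charging each edge to the $O(1)$ quadrant pair it spans and amortizing over the nesting of $R$. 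I expect step (b)---keeping the cross-quadrant edge sums available cheaply enough as the cascade of deletions in one insertion reshapes $R$---to be the main technical obstacle, and the running-time proof will be largely devoted to it; the correctness argument above is essentially complete.
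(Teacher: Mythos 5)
Your correctness argument is sound, and its core is the same certificate the paper uses: by posi-modularity, any crossing pair has an endpoint $U$ with a nonempty proper subset $U\setminus W$ satisfying $\delta(U\setminus W)\le\delta(U)$, so that endpoint is provably non-extreme and can be discarded without losing any extreme set. However, your algorithm is organized quite differently from \Cref{alg:verify} (greedy insertion in order of cut value with cascading deletions inside a single mixed family $R$, rather than two symmetric filtering passes that test every $U\in\mathcal X$ against all of $\mathcal Y$ and vice versa), and this is where the proposal has a genuine gap: the $\tilde{O}(m)$ running time, which is the entire technical content of the lemma, is not established. You acknowledge this yourself for step (b), and step (a) is equally unresolved. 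Concretely: (a) detecting whether the new set $Z$ crosses some member of the current family $R$, and finding such a member, must be done in polylogarithmic time per query; your sketch via the children of the minimal $W^\ast\supseteq Z$ is correct as a characterization, but the ``segment-tree bookkeeping'' does not follow, because $R$ mixes sets from both families and there is no single fixed vertex ordering in which all current members of $R$ (or their children lists) are intervals --- the Euler tour of $R$ changes as sets are inserted and deleted, and $Z$ is an interval only in the ordering of its own tree. Any implementation that spends time proportional to $|Z|$ per set is already too slow, since the total size of a laminar family can be $\Theta(n^2)$ (a nested chain). (b) Evaluating $\delta(Z\setminus W)$ for each of the $O(n)$ resolutions needs cross-quadrant edge sums for arbitrary crossing pairs drawn from two different laminar trees; this is not a routine consequence of the $\tilde{O}(m)$ preprocessing you cite (the offline-LCA trick only gives $\delta(\cdot)$ of sets within one family), and you give no data structure achieving it.

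For comparison, the paper's proof avoids pairwise crossing resolution altogether. \Cref{alg:verify} decomposes the tree of $\mathcal X$ into boughs, sweeps each bough's nested chain $U_1\subseteq\cdots\subseteq U_k$ incrementally (so work is charged to newly added vertices and their incident edges), and maintains, in a dynamic tree over the \emph{fixed} laminar tree of $\mathcal Y$, the quantity $\delta(U_i\setminus W)-\delta(U_i)$ for \emph{all} $W\in\mathcal Y$ simultaneously, together with flags marking which $W$ intersect $U_i$ (\Cref{lem:verify}); after a bough is processed its vertices are contracted, so the cost is amortized against edges of $G[U_k]$ that disappear, giving $\tilde{O}(m)$ overall. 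Your ordering by cut value destroys exactly the nesting structure that makes this incremental maintenance and amortization possible, so to complete your route you would need a genuinely new data-structural idea (e.g.\ multi-dimensional range structures for the quadrant sums plus a dynamic crossing-detection scheme), none of which is worked out in the proposal. As written, the proof of the time bound --- the hard part of \Cref{lem:merge} --- is missing.
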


We represent each laminar family by a tree to ensure its representation size is linear and not quadratic. In the tree representation, each node corresponds to a set in the family, except the root which represents all vertices $V$. Each node $x$ has a (possibly empty) set of vertices in $V$ associated with it, and the corresponding set in the laminar family is all vertices in $V$ associated with any node in the subtree rooted at $x$. Each vertex in $V$ is associated with exactly one node. Note that we do not require that only leaves have a nonempty set of associated vertices. This is because even if we start with a tree with only nonempty sets at leaves, the algorithm's operations on the tree may produce internal vertices with nonempty sets. % In our data structure, a tree node only stores vertices not in its children sets, so a laminar family of vertices sets occupies $O(n)$ space.

Our algorithm requires the definition of a \emph{bough} of a tree, as follows.

\begin{defn}
A bough is a tree path that starts at a leaf, extends toward the root and stops before reaching the first node with more than one children.
\end{defn}

Our algorithm decomposes the laminar trees into disjoint boughs. Initially all vertices are in the leaves. But as we proceed, the boughs will be contracted to their parents, so there may be vertices in internal nodes.

\subsection{Removing Inconsistent Sets}

We start by analyzing \Cref{alg:verify}.
\begin{defn}
We define a set $U\subseteq V$ to be \emph{consistent} with $W\subseteq V$ if $\delta(U\setminus W)>\delta(U)$ or $W$ is disjoint from $U$. We define $U$ to be \emph{consistent} with a laminar family $\mathcal Y$ if $U$ is consistent with all $W\in \mathcal Y$.
\end{defn}
\begin{fact}
\label{fact:consist}
An extreme set is consistent with any vertex set.
\end{fact}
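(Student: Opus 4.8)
The plan is to derive the fact directly from the definition of an extreme set, with essentially no additional machinery. I would fix an extreme set $U\subseteq V$ and an arbitrary vertex set $W\subseteq V$, and show that one of the two clauses in ``$U$ is consistent with $W$'' holds: either $W\cap U=\emptyset$, or $\delta(U\setminus W)>\delta(U)$. If $W$ is disjoint from $U$, the first clause holds and there is nothing to do, so the only case requiring argument is $W\cap U\neq\emptyset$.

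In that case $U\setminus W$ is a \emph{proper} subset of $U$, since every vertex of the non-empty set $W\cap U$ has been removed. In the configuration the definition is designed for --- namely when $U$ and $W$ cross --- the set $U\setminus W$ is moreover non-empty, so it is a non-empty proper subset of $U$. The defining property of an extreme set then applies verbatim to this subset and yields $\delta(U\setminus W)>\delta(U)$, which is exactly the second clause. Hence $U$ is consistent with $W$; and since consistency with a laminar family $\mathcal{Y}$ is by definition consistency with each member of $\mathcal{Y}$, an extreme set is consistent with any laminar family as well.

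I do not expect any real obstacle here: the whole content of the fact is the observation that $U\setminus W$ is a bona fide non-empty proper subset of $U$ whenever $U$ and $W$ overlap without nesting, after which the extreme-set inequality is immediate. The only point that needs a word is the degenerate case $U\subseteq W$ (where $U\setminus W=\emptyset$ and neither clause is literally available); this never arises when $U$ and $W$ cross, which is the only way a member of $\mathcal{Y}$ can threaten laminarity against $U$, so it is harmless for the way the fact is used. The real payoff is the contrapositive: if a candidate set fails the consistency test against some set of the other family, it cannot be extreme and may be deleted without losing any extreme set --- which is precisely the invariant that will make the bough-by-bough procedure of \Cref{alg:verify} correct.
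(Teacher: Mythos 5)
Your argument is precisely the justification the paper leaves implicit (the fact is stated without proof): whenever $U\cap W\ne\emptyset$ and $U\setminus W\ne\emptyset$, the set $U\setminus W$ is a nonempty proper subset of the extreme set $U$, so $\delta(U\setminus W)>\delta(U)$ follows verbatim from the definition, and the disjoint case is trivial. Your side remark about the degenerate case $U\subseteq W$ is a fair catch rather than a gap: with the convention $\delta(\emptyset)=0$ the fact and the removal rule of \Cref{alg:verify} must indeed be read as applying only when $U\setminus W\ne\emptyset$, which is harmless exactly as you say, since \Cref{cor:merge-laminar} only needs crossing pairs to be eliminated and otherwise \Cref{lem:merge} could discard an extreme set of $\mathcal X$ nested inside a set of $\mathcal Y$.
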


\begin{algorithm}
\caption{Verify($\mathcal X,\mathcal Y$)}
\label{alg:verify}
\SetKwInOut{Input}{Input}
\SetKwInOut{Output}{Output}
\Input{Laminar trees $\mathcal X$ and $\mathcal Y$ on vertex set $V$.}
\Output{Laminar tree of $\mathcal X^*=\mathcal X\setminus\{U\in \mathcal X:\exists W\in \mathcal Y,\, \delta(U\setminus W)\le\delta(U)\text{ and } U\cap W\ne \emptyset\}$.}
Let $\mathcal X^*=\mathcal X$.\\
\While{$\mathcal X$ is nonempty}{
 \ForEach{bough $\mathcal B$ of $\mathcal X$}{
  Using \Cref{lem:verify}, find all sets $U\in \mathcal B$ such that $\exists W\in \mathcal Y,\, \delta(U\setminus W)\le\delta(U)$ and $U\cap W\ne \emptyset$. Add all sets in $\mathcal B$ to $\mathcal X^*$ except for the ones we found.
  
  Remove the bough from $\mathcal X$, and contract the vertices in the bough to the bough's parent node.
 }
}
\Return{$\mathcal X^*$}.
\end{algorithm}

\begin{lemma}\label{lem:verify}
Consider a bough of $\mathcal X$ consisting of nested sets $U_1\subseteq U_2\subseteq\cdots\subseteq U_k$. There is an algorithm that outputs all sets $U_i$ for which there exists $W\in\mathcal Y$ with $\delta(U\setminus W)\le\delta(U)$ and $U\cap W\ne\emptyset$. The algorithm takes $\tilde{O}(m)$ preprocessing time and then handles each bough in time proportional to (up to polylogarithmic factor) the size of the induced subgraph $G[U_k]$.
\end{lemma}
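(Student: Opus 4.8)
Here is how I would approach \Cref{lem:verify}.

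\emph{Reformulation.} The plan is to turn the removal condition into a quantity that is \emph{additive over the edges incident to $U_k$} and to evaluate it on all $U_1,\dots,U_k$ at once. Fix $U:=U_i$ and $W\in\mathcal Y$, and set $A:=U\cap W$. If $A=\emptyset$ then $W$ is irrelevant, and if $A=U$ (that is, $U\subseteq W$) then $\delta(U\setminus W)=\delta(\emptyset)=+\infty>\delta(U)$ under the convention $\delta(\emptyset)=+\infty$ (the same convention that makes \Cref{fact:consist} hold), so such $W$ never flag $U$; hence assume $\emptyset\ne A\subsetneq U$. From $\delta(U)=\delta(A)+\delta(U\setminus A)-2\delta(A,U\setminus A)$ one gets
\[ \delta(U\setminus W)=\delta(U)-\delta(A,V\setminus U)+\delta(A,U\setminus A), \]
so $U$ is flagged by $W$ iff $g_U(W):=\delta(A,V\setminus U)-\delta(A,U\setminus A)\ge 0$. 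A short case check on an edge $e=(x,y)$ of weight $w$ shows that $e$ contributes to $g_U(W)$ exactly $+w$ when one endpoint lies in $U\cap W$ and the other in $V\setminus U$, exactly $-w$ when $x,y\in U$ and $W$ separates $x$ and $y$ (i.e. $|\{x,y\}\cap W|=1$), and $0$ otherwise; in particular only edges incident to $U$ matter.

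\emph{Mapping to the tree of $\mathcal Y$.} I would identify each $W\in\mathcal Y$ with its node in the laminar tree of $\mathcal Y$, and for $v\in V$ let $y(v)$ be the lowest node whose set contains $v$. Then $x\in W$ iff $W$ is an ancestor (inclusive) of $y(x)$, and $W$ separates $x,y$ iff $W$ lies on one of the two sub-paths from $y(x)$, resp. $y(y)$, up to but excluding $\mathrm{lca}(y(x),y(y))$. Hence, viewed as a labeling of the nodes of the $\mathcal Y$-tree, $g_{U_i}(\cdot)$ is produced by $O(1)$ root-path additions per edge incident to $U_i$. Moreover $\{W:U_i\cap W\ne\emptyset\}$ is exactly the up-closure of $\{y(v):v\in U_i\}$, and $\{W:U_i\subseteq W\}$ is the set of ancestors of $\mathrm{lca}\{y(v):v\in U_i\}$; so $U_i$ is flagged iff $g_{U_i}(W)\ge 0$ for some node $W$ that is active (in the up-closure) and is a proper descendant of that $\mathrm{lca}$.

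\emph{Algorithm and running time.} Preprocess once: build the $\mathcal Y$-tree with an $\mathrm{lca}$ structure and a heavy-path-decomposition--based data structure on it supporting, in $\tilde{O}(1)$ amortized time each, (i) adding a value along a root-to-node path, (ii) activating a root-to-node path, (iii) marking/unmarking a path as ``eligible'', and (iv) reporting the maximum label over nodes that are active and eligible; initialize all labels to $0$ and all nodes inactive. To handle a bough $U_1\subset\cdots\subset U_k$, iterate $i=1,\dots,k$: when the vertices of the $i$-th node are inserted, for each new vertex $v$ and each incident edge $(v,u)$ perform the $O(1)$ path-updates dictated above -- if $u$ is already inside, undo the edge's earlier boundary contribution and install its internal one -- then activate the root-path of $y(v)$, update the maintained $\mathrm{lca}$ of $\{y(w):w\in U_i\}$ and adjust eligibility along the (short) path by which that $\mathrm{lca}$ moves toward the root, and finally run query (iv), outputting $U_i$ iff the answer is $\ge 0$. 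After the bough, undo every operation to restore the structure. Each incident edge changes its contribution only $O(1)$ times (when its second endpoint enters $U_i$), so a bough costs $O\!\big(\sum_{v\in U_k}\deg(v)\big)=O(|G[U_k]|)$ path-updates plus $k\le|U_k|$ queries, i.e. $\tilde{O}(|G[U_k]|)$ time; the one-time preprocessing is $\tilde{O}(n)=\tilde{O}(m)$.

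\emph{Main obstacle.} The delicate point is the exclusion of the $W$ with $U_i\subseteq W$: for those $g_{U_i}(W)=\delta(U_i)>0$, so omitting them would spuriously flag \emph{every} $U_i$; correctly excluding exactly the active nodes on the top chain above the current $\mathrm{lca}$, while that $\mathrm{lca}$ drifts toward the root as $i$ grows, is what forces the separate ``eligible'' marking. Everything else is routine: verifying the edge-by-edge decomposition of $g_{U_i}(\cdot)$, confirming the $O(1)$-changes-per-edge bound, and observing that the required tree data structure is a standard heavy-path/segment-tree construction whose interface I would cite rather than reprove.
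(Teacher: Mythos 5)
Your proposal is, at its core, the same algorithm as the paper's: your label $g_{U}(W)=\delta(U\cap W,V\setminus U)-\delta(U\cap W,U\setminus W)$ is exactly the negative of the quantity $\delta(U\setminus W)-\delta(U)$ that the paper maintains on the nodes of the $\mathcal Y$-tree, your per-edge bookkeeping (boundary edge: add along the root path of the inside endpoint; internal edge: add along the two paths below the exclusive LCA, undoing the earlier boundary contribution when an edge becomes internal) regroups into the paper's two update rules for an inserted vertex $u_j$, your ``activation'' of root paths is the paper's Boolean flag, your max-over-active query is the paper's min-over-flagged query, and the charging of each bough to edges incident to $U_k$ with an undo afterwards is the same (the paper states the bound as $|G[U_k]|$, but like you it must scan all edges incident to each inserted vertex to compute $\delta(u_j,U_{i-1,j-1})$, and the weaker degree-sum bound is what the $\tilde O(m)$ total actually needs). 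The one substantive divergence is your treatment of $W\in\mathcal Y$ with $U_i\subseteq W$: you adopt the convention $\delta(\emptyset)=+\infty$ and exclude these nodes with the extra ``eligible'' marking keyed to the LCA of $\{y(v):v\in U_i\}$, whereas the paper's updates give such a node the value $-\delta(U_i)\le 0$ with its flag set, i.e.\ the paper effectively computes with $\delta(\emptyset)=0$ and its query is triggered by any member of $\mathcal Y$ containing $U_i$ (and by the artificial root, unless that is silently excluded). Your reading is the one the downstream argument requires — \Cref{fact:consist} and the guarantee that extreme sets of $\mathcal X$ survive the merge only hold if containment never causes removal — so your eligibility mechanism addresses a corner case the paper's write-up glosses over, at the cost of a heavier (but standard) HLD/segment-tree structure supporting range activation and eligibility toggles alongside path additions and a max over active-and-eligible nodes; be sure, when writing it up, to actually verify that this combined lazy structure supports all four operations in polylogarithmic time, since that is the only part you currently defer to citation.
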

\begin{proof}
Initialize a dynamic tree on the tree $T$ representing laminar family $\mathcal Y$ with initial value $0$ on each node, along with a Boolean flag that is initially false. Our goal is to maintain, for each $W\in\mathcal Y$, the value $\delta(U\setminus W)-\delta(U)$. We are interested in whether this value is at most $0$ for all $W$ with $U\cap W\ne\emptyset$. Throughout, we abuse notation by referring to each node and its set $W$ interchangeably.

Iterate through $U_1,U_2,\ldots,U_k$ in that order. For each $U_i$, we loop through the vertices $u\in U_i\setminus U_{i-1}$ one by one in arbitrary order $u_1,u_2,\ldots,u_\ell$. For convenience, define $U_{i-1,j}=U_{i-1}\cup\{u_1,\ldots,u_j\}$. For each vertex $u_j$, do the following.
 \begin{enumerate}
 \item For each incident edge $(u_j,v)$ where $v\in U_{i-1,j-1}$, add twice the weight to each node on the path from $v$ to the lowest common ancestor of $u_j$ and $v$ (excluding the LCA).%the lowest common ancestor of $u_j$ and $v$ to the root.
 \item Add $2\delta(u_j, U_{i-1,j-1})-\delta(u_j)$ to each node on the path from $u_j$ to root. Each such node has its Boolean flag set to true.
 %\item Subtract $\delta(u_j,V\setminus U_{i-1,j-1})$ from each node on the path from $u_j$ to the root in $T$. Each such node has its boolean flag set to true.
 \end{enumerate}
After these operations, the algorithm queries the minimum value over all nodes whose flag is set to true. If this minimum value is at most $0$, then we add $U_i$ to the output set.

For convenience, define $U_0=\emptyset$. We prove by induction on $i\ge 0$ that after processing $U_i$, each node in the dynamic tree carries value $\delta(U_i\setminus W)-\delta(U_i)$. This is vacuously true for $i=0$ since $U_0=\emptyset$ and each node carries value $0$. To prove each inductive step, we perform a separate induction on vertices $u_1,\ldots,u_{j-1}\in U_i\setminus U_{i-1}$.
We claim that after inserting $u_j$, each node in the dynamic tree with corresponding set $W\subseteq V$ carries value $\delta(U_{i-1,j}\setminus W)-\delta(U_{i-1,j})$ where $U_{i-1,j}=U_{i-1}\cup\{u_1,\ldots,u_j\}=U_{i-1,j-1}\cup\{u_j\}$. This is true for $j=0$ by induction on $i-1$.
For each set $W$, consider the change of its value after adding $u_j$ into $U_{i-1,j-1}$, that is \[\Delta_j=(\delta(U_{i-1,j}\setminus W)-\delta(U_{i-1,j}))-(\delta(U_{i-1,j-1}\setminus W)-\delta(U_{i-1,j-1})).\]
%Let $\Delta_1=\delta(U_{i-1,j}\cap W,U_{i-1,j}\setminus W) - \delta(U_{i-1,j-1}\cap W,U_{i-1,j-1}\setminus W)$, $\Delta_2=\delta(U_{i-1, j}\cap W)-\delta(U_{i-1,j-1}\cap W)$ be the change of two terms, and the total change is $2\Delta_1-\Delta_2$. There are two cases.
There are two cases. If $u_j\notin W$,
\[\Delta_j=2\delta(u_j, U_{i-1,j-1}\cap W).\]
If $u_j\in W$,
\[\Delta_j =2\delta(u_j, U_{i-1,j-1})-\delta(u_j).\]
%we observe that $\delta(U_{i-1,j}\cap W,U_{i-1,j}\setminus W)$ and $\delta(U_{i-1,j}\cap W,V\setminus U_{i-1,j})$ do not change if $u_j\notin W$, and if $u_j\in ,W$, then

%\[ \delta(U_{i-1,j}\cap W,U_{i-1,j}\setminus W) = \delta(U_{i-1,j-1}\cap W,U_{i-1,j-1}\setminus W) - \delta(u_j,U_{i-1,j-1}\setminus W) \]
%and
%\begin{equation*}
%\begin{gathered}
%\delta(U_{i-1,j}\cap W,V\setminus U_{i-1,j}) = \delta(U_{i-1,j-1}\cap W,V\setminus U_{i-1,j-1})\\
%- \delta(u_j,U_{i-1,j-1}\cap W) + \delta(u_j,V\setminus U_{i-1,j}).
%\end{gathered}
%\end{equation*}
%Therefore, the change on the node corresponding to $W$ should be 
%\begin{equation}\label{eq:net-change}
%\begin{gathered}
% - \delta(u_j,U_{i-1,j-1}\setminus W) + \delta(u_j,U_{i-1,j-1}\cap W) - \delta(u_j,V\setminus U_{i-1,j})\\ = -\delta(u_j) + 2 \delta(u_j,U_{i-1,j-1}\cap W).
%\end{gathered}
%\end{equation}
We show that this change is correctly accounted for in the dynamic tree updates. For each set $W$ not containing $u_j$, each edge $(u_j, v)$ with $v\in U_{i-1,j-1}\cap W$ has its weight added twice to the value of $W$, since $W$ as an ancestor of $v$ but not an ancestor of $u_j$ on $T$. Therefore the value of $W$ is increased by $2\delta(u_j, U_{i-1,j-1}\cap W)$, as expected. Note that in step (1), for each edge $(u_j, v)$, we only add its weight to sets not containing $u_j$.
For each set $W$ containing $u_j$, it lies on the path from $u_j$ to the root, and its value is increased by $2\delta(u_j, U_{i-1,j-1})-\delta(u_j)$ in step (2), as expected.
These changes match the required net change $\Delta_j$.

% First, since we only update nodes on the path from $u_j$ to the root (which is precisely all sets containing $u_j$), any $W$ not containing $u_j$ has its value unchanged, as expected. For a set $W$ containing $u_j$, observe that
% \begin{enumerate}
% \item Each edge $(u_j,v)$ with $v\in U_{i-1,j-1}\cap W$ has its weight added twice to the value at $W$, since $W$ is on the path from the lowest common ancestor of $u_j$ and $v$ to the root.
% \item $\delta(u_j,V\setminus U_{i-1,j-1})$ is subtracted from the value at $W$, since $W$ is on the path from $u$ to the root.
% \end{enumerate}
%These changes match the required net change~(\ref{eq:net-change}), as promised.

It remains to show that a set $U_i$ should be output if and only if there is a node in the tree with value at most $0$ and Boolean flag set to true. We have already shown that any node $W$ of value at most $0$ satisfies $\delta(U\setminus W)-\delta(U)\le 0$, so it remains to show that a node $W$ is flagged true if and only if $U\cap W\ne\emptyset$. Observe that for each vertex $u_j$ processed, we flag the nodes from $u_j$ to the root as true; their sets are precisely those that contain $u_j$. Since the sets $U_i$ are nested, once we finished processing $U_i$, the nodes $u_j$ we have processed on iterations up to $i$ are precisely $U_i$. In other words, a set $W$ is flagged true if and only if $U\cap W\ne\emptyset$, as desired.

Finally, we discuss running time. All dynamic tree operations take $O(\log n)$ time. The total number of edges $(u_j,v)$ for $v\in U_{i-1,j-1}$, summed over all $i$ and $j$, is at most the number of edges in the induced subgraph $G[U_k]$.
\end{proof}
\begin{lemma}
\Cref{alg:verify} takes $\tilde O(m)$ time.
\end{lemma}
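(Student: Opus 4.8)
The plan is to account separately for the outer loop structure (how many times each bough is touched) and the per-bough cost guaranteed by \Cref{lem:verify}. The key observation is that the while-loop is a standard bough-peeling process on the laminar tree $\mathcal X$: in each iteration of the while-loop, every current leaf-to-branching bough is removed and contracted into its parent. A classical fact about this process is that after $O(\log n)$ rounds the tree is exhausted, because each round at least halves the number of leaves — a node that was a branching node (with $\ge 2$ children) either becomes a leaf or stays a branching node, and the number of leaves after a round is at most the number of branching nodes before the round, which is at most half the number of leaves. So the while-loop runs $O(\log n)$ times.

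Next I would bound the work in a single while-iteration. The boughs processed in one iteration are vertex-disjoint as tree paths, but more importantly I need to bound $\sum_{\mathcal B} (\text{size of } G[U_k^{\mathcal B}])$, where $U_k^{\mathcal B}$ is the top set of bough $\mathcal B$. Here is the subtle point: the sets $U_k^{\mathcal B}$ for distinct boughs $\mathcal B$ in the \emph{same} iteration need not be disjoint as subsets of $V$ — one bough's top set can be nested inside another's, since a bough stops \emph{before} the first branching node, so an ancestor bough still ``contains'' all vertices of its descendant boughs. Thus a naive sum could overcount. The fix is to charge via the laminar-tree depth: each set in $\mathcal X$ lies on $O(\log n)$ boughs over the whole execution is \emph{not} what we want; rather, within one iteration, the boughs currently present are an antichain-like decomposition, and I would instead bound the total cost of one iteration by observing that each edge $(x,y)$ of $G$ contributes to the cost of processing bough $\mathcal B$ only when both $x,y \in U_k^{\mathcal B}$, and among the boughs of a single iteration that contain both endpoints, we can charge the edge to the \emph{lowest} such bough — equivalently, to the deepest node of $\mathcal X$ (in the current contracted tree) that contains both $x$ and $y$ — giving $O(m)$ per iteration. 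Actually the cleanest accounting: a single edge $(x,y)$ over the \emph{entire} run of \Cref{alg:verify} is charged to boughs $\mathcal B$ with $x,y\in U_k^{\mathcal B}$; as the tree is peeled, the set of such boughs forms a root-to-leaf chain in $\mathcal X$ (once contracted, $x$ and $y$ merge into one node, after which no bough contains both), so each edge is charged on at most $O(\log n)$ boughs total across all iterations — matching the $O(\log n)$ iteration count. Either way we get a total of $\tilde O(m)$ charged edge-work.

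Finally, I would add the $\tilde O(m)$ preprocessing cost from \Cref{lem:verify} (the dynamic tree on the tree representing $\mathcal Y$), which is paid once, plus the bookkeeping for the contraction and bough identification, which is $O(n)$ per iteration and hence $\tilde O(n)$ overall. Combining: $O(\log n)$ iterations, each doing $\tilde O(m)$ work by \Cref{lem:verify} and the charging argument above, plus $\tilde O(m)$ preprocessing, yields total time $\tilde O(m)$. The main obstacle I anticipate is making the per-iteration edge-charging rigorous given that bough top-sets overlap within an iteration; the resolution is the observation that for any fixed edge, the collection of boughs (over the whole execution) whose top set contains both endpoints is totally ordered by the ancestor relation in $\mathcal X$ and has length $O(\log n)$, because the peeling depth of any laminar tree is $O(\log n)$ and contracting a bough that contains both endpoints strictly decreases their tree-distance toward the eventual merge.
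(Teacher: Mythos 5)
Your proof is correct in substance but takes a genuinely different, and slightly looser, route than the paper. The paper's argument is a one-shot amortization that needs no bound on the number of while-iterations: by \Cref{lem:verify}, processing a bough with top set $U_k$ costs $\tilde O(\cdot)$ times the number of edges of the \emph{current} (contracted) graph inside $U_k$, and the subsequent contraction of $U_k$ into a single vertex destroys exactly those edges, so every edge pays at most once and the total is $\tilde O(m)$, plus the one-time preprocessing. Your version instead bounds the number of rounds by $O(\log n)$ via leaf-halving and charges each original edge once per round; this also yields $\tilde O(m)$, at the price of an extra logarithmic factor in the accounting and of needing the round bound at all.

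Two inaccuracies in your write-up are worth repairing, though neither is fatal. First, the ``subtle point'' that top sets of distinct boughs in the same iteration can be nested is unfounded: since a bough stops before the first node with more than one child, every non-leaf node on a bough has exactly one child, which also lies on that bough, so the subtree of any bough node is a subpath of the same bough; hence no node of one bough is an ancestor of a node of another, and the top sets within a round are pairwise disjoint. Consequently at most one bough per round is charged for any edge, and in fact, because that bough's top set is contracted immediately afterwards (merging the edge's endpoints, so the edge no longer appears inside any later top set of the contracted graph), each edge is charged at most once over the whole execution---which is exactly the paper's accounting. Second, your justification of the halving, ``the number of branching nodes is at most half the number of leaves,'' is false in general (a binary tree with $\ell$ leaves has $\ell-1$ branching nodes); the correct argument is that every new leaf was a node all of whose children headed boughs, hence had at least two such children, each bough containing a distinct old leaf, which gives the halving directly. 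With these repairs your argument goes through, and the remaining bookkeeping ($\tilde O(m)$ preprocessing once, $O(n)$ per round for bough identification and contraction) is as you state.
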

\begin{proof}
For each bough with root $U_k$, we spend time proportional to the number of edges in induced graph $G[U_k]$, and then we contract all vertices in $U_k$ into a single vertex. The contraction removes all edges in the induced graph $G[U_k]$, so the decrease in number of edges pays for the processing time of the bough. Since there are $m$ initial edges, the total running time becomes $\tilde O(m)$.
\end{proof}

\begin{corollary}
\label{cor:merge-laminar}
Given two laminar families $\mathcal X$ and $\mathcal Y$, let $\mathcal X^*=\text{Verify}(\mathcal X,\mathcal Y)$ and $\mathcal Y^*=\text{Verify}(\mathcal Y,\mathcal X)$. Then, $\mathcal X^*\cup \mathcal Y^*$ is laminar.
\end{corollary}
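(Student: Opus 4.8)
The plan is to reduce the statement to a single application of posi-modularity, in exactly the same spirit as the proof of \Cref{lem:laminar}. First I would record the key characterization of the output of \Cref{alg:verify}: by its output specification, $\mathcal X^*$ is precisely the subfamily of $\mathcal X$ consisting of sets that are \emph{consistent} with $\mathcal Y$ (in the sense of the definition preceding \Cref{fact:consist}), and symmetrically $\mathcal Y^*$ is the subfamily of $\mathcal Y$ consisting of sets consistent with $\mathcal X$. In particular $\mathcal X^*\subseteq\mathcal X$ and $\mathcal Y^*\subseteq\mathcal Y$, so each of $\mathcal X^*$ and $\mathcal Y^*$ is already laminar on its own.

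Consequently it suffices to show that no set of $\mathcal X^*$ crosses any set of $\mathcal Y^*$; combined with the laminarity of $\mathcal X^*$ and of $\mathcal Y^*$ individually, this yields that $\mathcal X^*\cup\mathcal Y^*$ is laminar. So suppose for contradiction that $U\in\mathcal X^*$ and $W\in\mathcal Y^*$ cross, i.e., $U\setminus W$, $W\setminus U$, and $U\cap W$ are all nonempty. Since $U\in\mathcal X^*$, the set $U$ is consistent with every set of $\mathcal Y$, and in particular with $W\in\mathcal Y^*\subseteq\mathcal Y$; because $U\cap W\ne\emptyset$, the ``disjoint'' branch of consistency is ruled out, so consistency forces $\delta(U\setminus W)>\delta(U)$. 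Symmetrically, $W\in\mathcal Y^*$ is consistent with $U\in\mathcal X^*\subseteq\mathcal X$, and $U\cap W\ne\emptyset$, hence $\delta(W\setminus U)>\delta(W)$.

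Adding the two inequalities gives $\delta(U\setminus W)+\delta(W\setminus U)>\delta(U)+\delta(W)$, which contradicts posi-modularity of the cut function. Therefore no crossing pair exists and $\mathcal X^*\cup\mathcal Y^*$ is laminar. I do not expect any genuine obstacle here; the only points requiring care are (i) invoking the output specification of \Cref{alg:verify} correctly, so that ``being consistent with $\mathcal Y$'' is exactly the condition characterizing membership in $\mathcal X^*$ (and likewise for $\mathcal Y^*$), and (ii) observing that ``crossing'' supplies $U\cap W\ne\emptyset$, which is precisely the hypothesis needed to discard the disjointness case in the consistency definition and extract the two strict cut inequalities. (Note that this corollary does not even need \Cref{fact:consist}; that fact is what makes the merge preserve genuine extreme sets, which is handled in \Cref{lem:merge} rather than here.)
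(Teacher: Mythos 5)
Your proposal is correct and follows essentially the same argument as the paper: both take a crossing pair $U\in\mathcal X^*$, $W\in\mathcal Y^*$, use the output specification of \Cref{alg:verify} (consistency) together with $U\cap W\ne\emptyset$, and reach a contradiction with posi-modularity, merely phrased in the opposite logical direction (you add the two strict inequalities, the paper deduces from posi-modularity that one weak inequality must hold and hence one set would have been removed). No gap.
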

\begin{proof}
Assume for contradiction that some $U\in \mathcal X^*\subseteq \mathcal X$ crosses some $W\in \mathcal Y^*\subseteq \mathcal Y$. Then $U\cap W\ne \emptyset$, and by posi-modularity, either $\delta(U\setminus W)\le \delta(U)$ or $\delta(W\setminus U)\le \delta(W)$. By \Cref{lem:verify}, either $U$ or $W$ will be removed in \Cref{alg:verify}, which contradicts the definitions of $\mathcal X^*$ and $\mathcal Y^*$.
\end{proof}

Given laminar families $\mathcal X,\mathcal Y$ and their tree structures, we can therefore run \Cref{alg:verify} to obtain $\mathcal X^*,\mathcal Y^*$ such that $\mathcal X^*\cup\mathcal Y^*$ is a laminar family containing all extreme sets in $\mathcal X\cup\mathcal Y$. We can easily recover the tree structures of $\mathcal X^*$ and $\mathcal Y^*$ as well. It remains to recover the tree structure of $\mathcal X^*\cup\mathcal Y^*$.
\begin{lemma}
Assume that $\mathcal X^*$, $\mathcal Y^*$, and $\mathcal X^*\cup\mathcal Y^*$ are all laminar families. There is an $O(n \log n)$ algorithm that computes the tree structure of $\mathcal X^*\cup\mathcal Y^*$.
\end{lemma}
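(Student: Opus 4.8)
We are given the tree representations $T_X$ and $T_Y$ of $\mathcal X^*$ and $\mathcal Y^*$; recall that in each tree every vertex of $V$ is associated with exactly one node, and the set represented by a node is the union of the vertices associated with nodes in its subtree. Write $\ell_X(v)$ (resp.\ $\ell_Y(v)$) for the node of $T_X$ (resp.\ $T_Y$) that $v$ is associated with. In an $O(n)$ post-order pass over each tree I would compute, for every node $A$, its size $|A|$ (the number of vertices in its subtree) and a fixed \emph{representative} $r(A)\in A$ (take any associated vertex of $A$ if it has one, otherwise inherit $r$ from a child); since every set in these laminar families is nonempty, $r(A)$ is well defined. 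I will also preprocess so that, for each vertex $v$, the list of ``queries at $v$'' (defined below) is bucketed by $v$.

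The whole algorithm rests on a size-based containment test that is valid \emph{precisely because $\mathcal X^*\cup\mathcal Y^*$ is laminar}: for any $A,B\in\mathcal X^*\cup\mathcal Y^*$, $A\subseteq B$ iff $r(A)\in B$ and $|A|\le |B|$. Indeed $A\subseteq B$ clearly implies both conditions; conversely if $r(A)\in A\cap B$ then $A$ and $B$ intersect, so by laminarity one contains the other, and $|A|\le |B|$ forces $A\subseteq B$ (along any root-to-node path of a laminar tree the sizes are strictly decreasing, so $|A|=|B|$ would already give $A=B$). The key subroutine is then: given a node $A$ of $T_X$ or $T_Y$, compute $\mathrm{up}_Y(A)$, the smallest set of $\mathcal Y^*$ containing $A$ (it exists since $V\in\mathcal Y^*$). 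Every such set contains $v:=r(A)$ and hence is an ancestor (or equal) of $\ell_Y(v)$ in $T_Y$; conversely by the containment test every node $Y$ on the root-to-$\ell_Y(v)$ path with $|Y|\ge |A|$ satisfies $Y\supseteq A$. As sizes are monotone along that path, $\mathrm{up}_Y(A)$ is the deepest such node, recoverable by binary search. To make all these queries cost $O(n\log n)$ total without any level-ancestor machinery, I would bucket each query by $r(A)$, run a DFS of $T_Y$ maintaining the current root-to-node path as an array (a stack pushed on entry and popped on exit, whose entries have strictly decreasing size), and answer all queries in bucket $v$ by binary search on this array the moment the DFS enters $\ell_Y(v)$. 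Symmetrically compute $\mathrm{up}_X(A)$ for every node $A$ of $T_Y$.

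Finally I would assemble the merged tree $T$. First identify duplicates: $A\in\mathcal X^*$ also lies in $\mathcal Y^*$ iff $|\mathrm{up}_Y(A)|=|A|$, in which case $\mathrm{up}_Y(A)=A$, and these pairs are glued into a single node of $T$; this identification is consistent since then the corresponding $\mathrm{up}_X$ query returns $A$ as well. For each distinct set $A$, its parent in $T$ is the smallest set of $\mathcal X^*\cup\mathcal Y^*$ strictly containing $A$; the smallest such set inside $\mathcal X^*$ is the $T_X$-parent of $A$ if $A\in\mathcal X^*$ and is $\mathrm{up}_X(A)$ otherwise, and similarly for $\mathcal Y^*$, so $\mathrm{parent}_T(A)$ is the smallest-size one among these (one or two) candidates (candidates of equal size coincide as sets by the containment test, so there is no ambiguity). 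Associating each vertex $v$ with the smaller (by size) of $\ell_X(v),\ell_Y(v)$ completes the representation. Every step is $O(n)$ except the $O(n)$ binary searches, each $O(\log n)$, for a total of $O(n\log n)$.

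The step I expect to need the most care is the bookkeeping for sets that are common to both families — making sure they appear exactly once in $T$ and inherit the correct parent — together with verifying that comparing sizes genuinely decides set containment here; the latter is exactly the point where the hypothesis that $\mathcal X^*\cup\mathcal Y^*$ (and not merely each of $\mathcal X^*,\mathcal Y^*$) is laminar is essential. Everything else is routine tree manipulation.
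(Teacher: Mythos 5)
Your proposal is correct, and its core coincides with the paper's: both determine the parent of each set $Z$ in the merged family by picking a representative vertex of $Z$ and binary-searching, by size, along the root-to-that-vertex paths in the two input trees, using laminarity of $\mathcal X^*\cup\mathcal Y^*$ to turn the size comparison into a containment test; your offline bucketing-by-representative DFS with a stack is essentially an explicit implementation of the paper's ``binary search on the path from $z$ to the root,'' and your detection of duplicates via $|\mathrm{up}_Y(A)|=|A|$ matches the paper's tie-handling. Where you genuinely diverge is the second half, recovering the vertex-to-node association of the merged tree: the paper marks, for each node $Z$, the children of $Z$ in the merged tree and uses descendant queries to decide which vertices of $Z\setminus\bigcup_i Z_i$ fall into a marked child, whereas you simply associate each vertex $v$ with the smaller of the two nodes $\ell_X(v)$ and $\ell_Y(v)$ it already carries. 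This is valid: both corresponding sets contain $v$, hence are nested by laminarity of the union, and the smaller one is the minimal set of $\mathcal X^*\cup\mathcal Y^*$ containing $v$, which is exactly the node a vertex must be attached to in the tree representation. Your treatment of equal-size parent candidates (they share the representative vertex, so by laminarity they coincide as sets) closes the only ambiguity in the parent rule. Overall your route is a little simpler than the paper's in the association step, avoids the marking/descendant-query machinery, and meets the same $O(n\log n)$ bound.
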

\begin{proof}
Let $T_{\mathcal X^*}$ and $T_{\mathcal Y^*}$ be the tree structures for $\mathcal X^*$ and $\mathcal Y^*$, respectively.
We first find, for each set $Z\in\mathcal X^*\cup\mathcal Y^*$, (a pointer to) the parent node of $Z$ in the tree structure $T$ of $\mathcal X^*\cup\mathcal Y^*$. Pick an arbitrary vertex $z\in Z$. Since $\mathcal X^*\cup\mathcal Y^*$ is laminar, the parent of $Z$ is exactly the set $Z'\in\mathcal X^*\cup\mathcal Y^*$ satisfying $z\in Z'$ and $|Z'|>|Z|$ and $|Z'|$ is as small as possible given these two constraints. The set $Z'$ can be found by computing a binary search on the path from $z$ to the root on the tree structures for $\mathcal X^*$ and $\mathcal Y^*$ and taking the best $Z'$ found. If there is a tie, as in both $\mathcal X^*$ and $\mathcal Y^*$ include the parent $Z'$, then we take the pointer to the one in $\mathcal X^*$, and we can ignore the duplicate one in $\mathcal Y^*$ in the next step of the algorithm.

By computing all the parents (and ignoring the duplicate nodes), we can build the tree $T$ for $\mathcal X^*\cup\mathcal Y^*$ where each node corresponds to the same set as its pointer in the tree structure of $\mathcal X^*$ or $\mathcal Y^*$. It remains to compute the set of vertices associated with each node. For each node $Z\in\mathcal X^*$ or $Z\in\mathcal Y^*$ with children $Z_1,\ldots,Z_k$ in $T_{\mathcal X^*}$ or $T_{\mathcal Y^*}$ respectively, we check whether each vertex in $Z\setminus\bigcup_iZ_i$ is in any child of $Z$ in $T$. This can be done by first marking the pointer of each child of $Z$ in $T$ (which is a node in $T_{\mathcal X^*}$ or $T_{\mathcal Y^*}$), and then testing, for each vertex $v\in Z\setminus\bigcup_iZ_i$ and for both $T_{\mathcal X^*}$ and $T_{\mathcal Y^*}$, whether the node associated with $v$ in either $T_{\mathcal X^*}$ or $T_{\mathcal Y^*}$ is a descendant of a marked node. This can be done in $O(\log n)$ time per vertex $v$ using tree data structures. Any vertex $v\in Z$ that is not a descendant of any marked node is associated with $Z$ in the new tree $T$. As for running time, marking the pointer of each child of $Z$ in $T$ takes $O(\log n)$ times the number of children, which is $O(n \log n)$ time summed over all $Z$. Also, we can iterate through $v\in Z\setminus\bigcup_iZ_i$ since these are precisely the nodes associated with $Z$ in either $T_{\mathcal X^*}$ or $T_{\mathcal Y^*}$, and the descendant queries take $O(|Z\setminus\bigcup_iZ_i|\log n)$ time overall, which again sums to $O(n \log n)$ over all $Z$. 
\end{proof}

\clearpage

\bibliographystyle{alpha}
\bibliography{ref}

\end{document}